\newcolumntype{H}{>{\setbox0=\hbox\bgroup}c<{\egroup}@{}}
\definecolor{Gray}{gray}{0.85}
\definecolor{Gray_0}{gray}{0.95}
\definecolor{Gray_1}{gray}{0.60}
\definecolor{Cyan}{rgb}{0,1,1}
\definecolor{Blue}{rgb}{0,0.1,1}
\definecolor{Red}{rgb}{1,0.7,0.7}
\newcommand{\rz}{{\mathbb{R}}}
\algnewcommand\algorithmicreturn{\textbf{return}}
\algnewcommand\RETURN{\State \algorithmicreturn}%
\newcommand{\maxim}{\mathop{\mathrm{maximize}}}
\newcommand{\NE}{NE}
\newcommand{\CE}{CE}
\newcommand{\IPG}{IPG}
\newcommand{\RIPG}{RIPG}
\newcommand{\PNS}{PNS}
\newcommand{\SGM}{SGM}
\newcommand{\modSGM}{m-SGM}
\newcommand{\xqed}{~$\Box$}
\newcommand{\ie}{\emph{i.e.}}
\newcommand{\etal}{\emph{et~al.}}
\newcommand{\eg}{\emph{e.g.}}
\newcommand{\supp}{\textrm{supp}}
\newcommand{\conv}{\textrm{conv}}
\newtheorem{axiom}{Axiom}
\newtheorem{corollary}[axiom]{Corollary}
\newtheorem{definition}[axiom]{Definition}
\newtheorem{example}{Example}
\newtheorem{lemma}[axiom]{Lemma}
\newtheorem{theorem}[axiom]{Theorem}
\newcommand{\pulley}{{\sc Subset-Sum-Interval}}
\newcommand{\boxxx}[1]
 {\begin{center}\fbox{\begin{minipage}{12.00cm}#1\smallskip\end{minipage}}\end{center}}
\begin{document}

\title{Computing Nash equilibria for\\ integer programming games\thanks{M. Carvalho  is thankful for the support of the Institut de valorisation des donn\'ees (IVADO) and Fonds de recherche du Qu\'ebec (FRQ) through the FRQ- IVADO	Research Chair, NSERC grant 2019-04557,  and the Portuguese Foundation for Science and Technology (FCT) through a PhD grant number SFRH/BD/79201/2011.}
}

\author{
 Margarida Carvalho
  \\
  CIRRELT and D\'epartement d'Informatique et de Recherche Op\'erationnelle\\
   Universit\'e de Montr\'eal\\
  \texttt{carvalho@iro.umontreal.ca} \\
  \And
Andrea Lodi \\
 Canada Excellence Research Chair in Data Science for Real-Time Decision-Making\\
 Polytechnique de Montr\'eal\\
 \texttt{andrea.lodi@polymtl.ca} \\
 \And
Jo\~ao Pedro Pedroso\\
INESC TEC and Departamento de Ci\^encia de Computadores \\
Faculdade de  Ci{\^e}ncias Universidade do Porto\\
 \texttt{jpp@fc.up.pt} \\
}

\maketitle

\begin{abstract}
The recently defined class of integer programming games ({\IPG}) models situations where multiple self-interested decision makers interact, with their strategy sets represented by a finite set of linear constraints together with integer requirements. Many real-world problems can suitably be fit in this class, and hence anticipating {\IPG} outcomes  is of crucial value for policy makers and regulators. Nash equilibria have been widely accepted as the solution concept of a game. Consequently, their computation provides a reasonable prediction of games outcome.

In this paper, we start by showing the computational complexity of deciding the existence of a Nash equilibrium for an {\IPG}. Then, using sufficient conditions for their existence, we develop two general algorithmic approaches that are guaranteed to approximate an equilibrium under mild conditions. We also showcase how our methodology can be changed to determine other equilibria definitions. The performance of our methods is analysed through computational experiments 
in a knapsack game, a competitive lot-sizing game and a kidney exchange game. To the best of our knowledge, this is the first time that equilibria computation methods for general integer programming games have been designed and computationally tested. 
\keywords{Nash equilibria \and Correlated equilibria \and Mixed integer programming \and Algorithmic game theory \and Integer programming games}
\end{abstract}

\section{Problem statement and background}\label{sec:problemStat}

 We start by defining some general notation. If $C^i$ is a set for each $i \in M$, then we denote their Cartesian product as $C= \prod_{i \in M} C^i$. The operator $(\cdot)^{-i}$ is used to denote $(\cdot)$ for all $j\in M \setminus \lbrace i \rbrace$; \eg, if $M=\{1,2,3\}$, for a vector $x=(x^1,x^2,x^3)$, we have $x^{-1}=(x^2,x^3)$ and, for a Cartesian product of sets $C= C^1 \times C^2 \times C^3$, we have $C^{-2}=C^1\times C^3$. If $C$ is a set, we use the notation $\Delta(C)$ to represent the space of Borel probability measures over $C$. If  $C= \prod_{i \in M} C^i$, the associated set of independent probability distributions is denoted by $\Delta =\prod_{i \in M} \Delta(C^i) $. For the latter notations note that $\Delta\subseteq \Delta(C)  $.

\paragraph{Integer programming games ({\IPG}s). } Following the seminal work in \citep{Koppe:2011}, \cite{carvalhoPhD, Carvalho_APDIO} defined \emph{integer programming games}. An {\IPG}  is   a game with a finite set of \emph{players} $M=\lbrace 1,2,\ldots,m \rbrace$ such that for  each player $p \in M$, the  \emph{set of strategies} is given by $$ X^p=  \lbrace x^p: A^p x^p \leq b^p, \ \ x^p_i \in \mathbb{N} \textrm{ for } i=1, \ldots, B_p \rbrace,$$
where $A^p$ is an $r_p \times n_p$ rational matrix (where $n_p \geq B_p$) and $b^p$ is a rational column vector of dimension $r_p$. An $x^p \in X^p$ is called a (\emph{pure}) strategy of player $p$.  Each player $p$ has a continuous payoff function $\Pi^p: X \rightarrow \rz$ that can be evaluated in polynomial time. Note that accordingly with our notation, $X$ is the set $\prod_{i \in M} X^p$ which corresponds to all possible game outcomes, \ie, all possible combinations of players' strategies. An $x \in X$ is called a \emph{profile of strategies}.

An {\IPG} is a \textit{non-cooperative complete information game}, \ie, players are self-interested and have full information of each other's
payoffs and strategies. We restrict our focus to the \textit{simultaneous} case, \ie, players select their strategies simultaneously.

\paragraph{Extensions for mixed strategies. }
Under the simultaneous play assumption, as motivated later, players may consider to randomize among their pure strategies. Hence, for a player $p \in M$, it is of interest to consider her set of \emph{mixed strategies} $\Delta(X^p)$. For a player $p$'s mixed strategy $\sigma^p \in \Delta(X^p)$, its \emph{support} is defined   as $\supp(\sigma^p) = \lbrace x^p \in X^p: \quad \sigma^p(x^p)>0 \rbrace$,  \ie, the set of player $p$'s strategies played with strictly positive probability.  A  $\sigma \in \Delta = \prod_{i \in M} \Delta(X^p)$ is called a \emph{mixed profile of strategies}, and  if $\vert \supp(\sigma^p) \vert=1$ holds for all $p \in M$, $\sigma$ is called a \textit{pure profile of strategies}. For the sake of simplicity, whenever the context makes it clear, we use the term  (strategy) profile to refer to a pure profile.  A player $p$'s expected payoff for a profile of strategies $\sigma \in \Delta$ is
\begin{equation}
	\Pi^p(\sigma) = \int_{X} \Pi^p(x^p,x^{-p}) d \sigma.
	\label{expected_payoff}
\end{equation}
The same definition of expected payoff covers joint probability distributions $\tau \in \Delta(X) \supseteq \Delta$ with $\tau(x)$ representing the probability assigned to profile $x$. Similarly,  the support of $\tau \in \Delta(X) $ is defined as $\supp(\tau)=\{ x \in X: \tau(x)>0\}$.

\paragraph{Solution concepts. } Ideally, each player would like to ensure that her \emph{best response}, also designated by best reaction, is selected given the opponents' strategy $\sigma^{-p} \in \Delta^{-p}$. In other words, each player $p$ solves
\begin{equation}
	\maxim_{x^p \in X^p}  \quad  \Pi^p(x^p,\sigma^{-p}),
	\label{GeneralProblem}%
\end{equation}
where for sake of simplicity, $(x^p,\sigma^{-p})$ denotes the profile of strategies in which the pure strategy $x^p$ is played with probability 1 by player $p$ and the remaining players behave accordingly with $\sigma^{-p}$. Note that a mixed strategy for a player $p$ is simply a convex combination of her pure strategies. Thus, when computing best responses it is sufficient to restrict ourselves to pure strategies as done in Problem~\eqref{GeneralProblem}. An {\IPG} is completely defined given  Problem~\eqref{GeneralProblem} for all players. Therefore, for now on, we will use them to represent an {\IPG}. 

We now introduce the most broadly accepted concept of solution for a game.

\begin{definition}
A \emph{Nash equilibrium} (\NE) \citep{Nash1950} is a profile of strategies $\sigma  \in \Delta$ such that
\begin{equation}
	\Pi^p(\sigma)  \geq \Pi^p(x^p, \sigma^{-p}),  \qquad \forall p \in M \qquad \forall x^p \in X^p.
	\label{NE_definition}
\end{equation}
The profile  $\sigma$ is called a mixed Nash equilibrium, and if it is a pure profile of strategies, it is also called a pure Nash equilibrium.
\end{definition}
It is now easy to verify if a given $\sigma  \in \Delta$ is a {\NE} by computing each player $p \in M$ best response to $\sigma^{-p}$ (\ie, by solving Problem~\eqref{GeneralProblem}) and confirming that she cannot increase her payoff more than $\Pi^p(\sigma)$, \ie, inequalities~\eqref{NE_definition} are not violated. In other words, in an {\NE}, no player has incentive to unilaterally deviate from it.

The following two definitions are relaxations of the concept of Nash equilibrium which are of interest to this work.

\begin{definition}
An $\varepsilon$-equilibrium
($\varepsilon \geq 0$) is a profile of strategies $\sigma  \in \Delta$ such that
\begin{equation}
	\Pi^p(\sigma) +\varepsilon  \geq \Pi^p(x^p, \sigma^{-p}),  \qquad \forall p \in M \qquad \forall x^p \in X^p .
	\label{NE_definition_epsilon}
\end{equation}
\end{definition}

\begin{definition}
 A joint probability distribution   $\tau \in \Delta(X)$ is a correlated equilibrium (\CE)~\citep{Aumann1974,Aumann1987} if
\begin{equation}
\int_{X^{-p} \cup \{\bar{x}^p\}} \Pi^p(\bar{x}^p,x^{-p}) \ \ d \tau \geq  \int_{X^{-p} \cup \{\bar{x}^p\}} \Pi^p(\hat{x}^p,x^{-p}) \ \ d \tau  \ \ \forall p \in M, \forall \bar{x}^p,\hat{x}^p \in  X^p.  
	\label{Correlated_equilibria}%
\end{equation}
\end{definition}

In an  $\varepsilon$-equilibrium, no player can unilaterally deviate from it and increase her payoff by more than $\epsilon$. In a correlated equilibrium  a joint probability distribution is considered instead of an independent one for each player.   Correlated equilibria can be interpreted as a third party signaling the players on how they play such that deviating from that recommendation does not increase their payoffs (Inequalities~\eqref{Correlated_equilibria}).  We remark that the set of correlated equilibria contains the set of Nash equilibria.

\paragraph{Preliminary results.} The goal of this work is to compute equilibria for {\IPG}s. However, the fact that players can have continuous variables means that their strategy sets can be uncountable. Thus, the support of an equilibrium $\sigma$ can also be uncountable. Next, we state a set of sufficient conditions that enable us to  restrict to equilibria with finite support.


\begin{definition}
	Player $p$'s payoff function is called \emph{separable} if
\begin{equation}
	\displaystyle \Pi^p(x)=\sum_{j_1=1}^{k_1} \ldots \sum_{j_m=1}^{k_m}  a^p_{j_1 \ldots j_m} f^1_{j_1}(x^1) \ldots f^m_{j_m}(x^m), 
	\label{ObjectivePlayer_separable}
\end{equation}
where  $a^p_{j_1 \ldots j_m} \in \mathbb{R}$ and the $f^p_j$ are real-valued continuous functions.
\end{definition}
An {\IPG} where all players' payoff functions are separable (\ie, take the form~\eqref{ObjectivePlayer_separable}) and strategy sets are nonempty and bounded is called \emph{separable}.

\begin{example}
Consider a 2-player game, $M=\{1,2\}$, with payoff functions 
\begin{align*}
\Pi^1(x) &= x^1_1 \cdot x_2^1 +  x^1_1 \cdot x_3^1 \cdot x^2_1\\
\Pi^2(x)&=x^1_2 \cdot x^1_3 \cdot x_1^2.
\end{align*}
Both players' payoffs are separable as they take the form~\eqref{ObjectivePlayer_separable}: $k_1=3$, $k_2=2$, $f^1_1= x^1_1 \cdot x_2^1$, $f^1_2= x^1_1 \cdot x_3^1$, $f^1_3=x^1_2 \cdot x^1_3$, $f^2_1=1$, $f^2_2=x^2_1$, $a^1_{11}=a^1_{22}=a^2_{32}=1$ and the remaining $a$ coefficients are zero.
	
\end{example}

In \cite{Carvalho_APDIO} the following useful results based on~\cite{Stein2008} were proven:
\begin{theorem}[\cite{Carvalho_APDIO}]
	Every {\IPG} such that $X^p$ is nonempty and bounded for all $p \in M$  has a Nash equilibrium.
	\label{them_existence}
\end{theorem}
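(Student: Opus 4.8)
The plan is to realize the theorem as an instance of Glicksberg's generalization of Nash's existence theorem to games with infinite strategy sets: if each player's pure-strategy set is a nonempty compact metric space and every payoff function is continuous, then a mixed Nash equilibrium (in the sense of Borel probability measures) exists. The whole argument therefore reduces to checking two hypotheses---compactness of each $X^p$ and continuity of each $\Pi^p$---and then invoking the fixed-point machinery. The latter is free: $\Pi^p$ is assumed continuous on $X$ in the definition of an {\IPG}. So the real work is to establish compactness of the $X^p$, which is not obvious because $X^p$ is in general neither finite (continuous variables are present) nor convex (integrality constraints are present).

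First I would show that $X^p \subseteq \rz^{n_p}$ is compact. It is bounded by assumption, so by the Heine--Borel theorem it suffices to show it is closed. The polyhedron $\lbrace x^p : A^p x^p \le b^p \rbrace$ is closed, and for each $i \le B_p$ the integrality requirement $x^p_i \in \nz$ defines a closed set, since $\nz$ (indeed $\zz$) is closed in $\rz$ and the coordinate projection is continuous; $X^p$ is the intersection of these finitely many closed sets and hence closed. Equivalently, boundedness forces the integer-constrained coordinates to take only finitely many values, so $X^p$ is a finite union of bounded polytopes, each compact, and a finite union of compact sets is compact. In either case $X^p$ is a nonempty compact subset of Euclidean space, hence a nonempty compact metric space, and so is the product $X = \prod_{p \in M} X^p$.

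With compactness in hand I would invoke Glicksberg's theorem directly. Each mixed-strategy set $\Delta(X^p)$ is convex and, by compactness of $X^p$, weak*-compact; the expected payoff $\Pi^p(\sigma)$ of~\eqref{expected_payoff} is affine in player $p$'s own measure $\sigma^p$ and, because $\Pi^p$ is continuous on the compact set $X$, jointly continuous in $\sigma$ for the product weak* topology. These are exactly the ingredients needed for the best-response correspondence, which maps $\sigma^{-p}$ to the set of player $p$'s mixed strategies maximizing $\Pi^p(\cdot,\sigma^{-p})$, to be nonempty- and convex-valued and to have a closed graph; the Kakutani--Fan--Glicksberg fixed-point theorem then yields a $\sigma \in \Delta$ satisfying~\eqref{NE_definition}, i.e.\ a (possibly mixed) Nash equilibrium.

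I expect the main obstacle to be the joint weak*-continuity of $\Pi^p(\sigma)$ in the full mixed profile $\sigma$, which is the single analytic fact that powers the upper hemicontinuity of the best-response correspondence. This is where compactness of $X$ and continuity of $\Pi^p$ are genuinely combined---via the facts that products of weak*-convergent measures converge weak* and that integrals of a fixed continuous function against weak*-convergent measures converge---whereas the rest is the standard packaging of Glicksberg's theorem. All remaining checks (closedness of $X^p$, convexity and weak*-compactness of $\Delta(X^p)$) are routine once this continuity is secured.
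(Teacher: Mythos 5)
Your proposal is correct. The paper itself gives no proof of Theorem~\ref{them_existence} --- it is imported from \cite{Carvalho_APDIO}, whose argument (based on \cite{Stein2008}) is essentially the one you describe: each $X^p$ is compact because it is bounded and is the intersection of a closed polyhedron with the closed sets cut out by the integrality requirements, the payoffs are continuous by the definition of an {\IPG}, and Glicksberg's extension of Nash's theorem to continuous games on compact metric strategy spaces then yields a mixed equilibrium.
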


\begin{theorem}[\cite{Carvalho_APDIO}]
	For any Nash equilibrium $\sigma$ of a separable {\IPG}, there is a Nash equilibrium $\bar{\sigma}$ such that each player $p$ mixes among at most $k_p+1$ pure strategies and $\Pi^p(\sigma) = \Pi^p(\bar{\sigma})$. 
	\label{lemma_finitelysupported}
\end{theorem}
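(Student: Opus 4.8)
The plan is to exploit separability to collapse each player's (possibly continuous) mixed strategy into a finite-dimensional \emph{moment vector}, and then thin out the support with Carath\'eodory's theorem while holding those moments fixed. Fix a Nash equilibrium $\sigma=(\sigma^1,\dots,\sigma^m)\in\Delta$. For each player $p$ introduce the moment map
\begin{equation*}
\mu^p(\sigma^p)=\Bigl(\int_{X^p} f^p_1(x^p)\,d\sigma^p,\ \dots,\ \int_{X^p} f^p_{k_p}(x^p)\,d\sigma^p\Bigr)\in\rz^{k_p}.
\end{equation*}
Because the components of $\sigma$ are independent, the expected payoff~\eqref{expected_payoff} of the separable form~\eqref{ObjectivePlayer_separable} factorizes as
\begin{equation*}
\Pi^p(\sigma)=\sum_{j_1=1}^{k_1}\cdots\sum_{j_m=1}^{k_m} a^p_{j_1\ldots j_m}\,\prod_{q\in M}\mu^q_{j_q}(\sigma^q),
\end{equation*}
and, replacing the factor for player $p$ by a pure deviation $x^p$,
\begin{equation*}
\Pi^p(x^p,\sigma^{-p})=\sum_{j_1=1}^{k_1}\cdots\sum_{j_m=1}^{k_m} a^p_{j_1\ldots j_m}\,f^p_{j_p}(x^p)\,\prod_{q\neq p}\mu^q_{j_q}(\sigma^q).
\end{equation*}
The crucial observation is that every quantity appearing in the equilibrium inequalities~\eqref{NE_definition} depends on $\sigma$ only through the vectors $\mu^1,\dots,\mu^m$; hence any profile reproducing all the $\mu^p$ has identical payoffs and remains a Nash equilibrium.

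Next I would build such a profile with small support, one player at a time. Each $X^p$ is compact, being closed (it is cut out by the inequalities $A^p x^p\le b^p$ together with the integrality requirements) and bounded by the separability hypothesis. Consequently its image $F^p=\{(f^p_1(x^p),\dots,f^p_{k_p}(x^p)):x^p\in X^p\}\subseteq\rz^{k_p}$ is compact, and so is $\conv(F^p)$. The moment vector $\mu^p(\sigma^p)$ is precisely the barycentre of the pushforward of $\sigma^p$ under $x^p\mapsto(f^p_j(x^p))_j$, and the barycentre of a probability measure on the compact set $F^p$ lies in $\conv(F^p)$. By Carath\'eodory's theorem in $\rz^{k_p}$ this barycentre can be written as a convex combination of at most $k_p+1$ points of $F^p$, with weights $\lambda_\ell\ge0$ attained at points $x^p_\ell\in X^p$. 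Defining $\bar{\sigma}^p$ to be the discrete measure assigning probability $\lambda_\ell$ to $x^p_\ell$ then yields $\vert\supp(\bar{\sigma}^p)\vert\le k_p+1$ and $\mu^p(\bar{\sigma}^p)=\mu^p(\sigma^p)$.

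Finally I would assemble $\bar{\sigma}=(\bar{\sigma}^1,\dots,\bar{\sigma}^m)$ and verify the statement. Since $\bar{\sigma}$ matches $\sigma$ on every moment vector, the two factorized formulas give $\Pi^p(\bar{\sigma})=\Pi^p(\sigma)$ and $\Pi^p(x^p,\bar{\sigma}^{-p})=\Pi^p(x^p,\sigma^{-p})$ for every $p\in M$ and every $x^p\in X^p$; combining these with the equilibrium property of $\sigma$ gives $\Pi^p(\bar{\sigma})=\Pi^p(\sigma)\ge\Pi^p(x^p,\sigma^{-p})=\Pi^p(x^p,\bar{\sigma}^{-p})$, so $\bar{\sigma}$ is a Nash equilibrium with the claimed support bound and unchanged payoffs. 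The only delicate point, I expect, is the measure-theoretic step: that the infinite-dimensional object $\sigma^p$ influences all payoffs solely through its finite moment vector, and that this vector genuinely lies in $\conv(F^p)$ (the barycentre-in-convex-hull fact, which needs the compactness of $F^p$). Once that is pinned down, the Carath\'eodory reduction and the transfer of the equilibrium inequalities are routine.
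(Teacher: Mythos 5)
Your proof is correct: the factorization of the expected payoff through the per-player moment vectors $\mu^p$, the barycentre-in-$\conv(F^p)$ observation, and the Carath\'eodory reduction to $k_p+1$ support points are all sound, and the transfer of the equilibrium inequalities is immediate once the moments are preserved. The paper itself does not reprove this theorem but imports it from \cite{Carvalho_APDIO}, whose argument (following \cite{Stein2008}) is essentially the same moment/Carath\'eodory construction you give, so your proposal matches the intended proof.
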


Theorem~\ref{them_existence} ensures that under a mild condition on the players' sets of strategies, an {\IPG} has an \NE. Furthermore, if an {\IPG} is separable, any {\NE} can be converted in a payoff-equivalent {\NE}  with a finite support. Since a {\NE} is a {\CE}, any separable {\IPG} has a {\CE} with finite support. In this work, we will thus focus on equilibria with finite support. Consequently, for finitely-supported $\sigma \in \Delta$, player $p$'s expected payoff is
\begin{equation}
	\Pi^p(\sigma) = \sum_{x \in \supp(\sigma)} \Pi^p(x) \prod_{i \in M} \sigma^i(x^i),
	\label{expected_payoff_sigma_discrete}
\end{equation}
and for $\tau \in \Delta(X)$, it is
\begin{equation}
	\Pi^p(\tau) = \sum_{x \in \supp(\tau)} \Pi^p(x) \tau(x).
	\label{expected_payoff_tau_discrete}
\end{equation}

To end this section, we define \emph{potential games} for which the existence of pure {\NE} can be guaranteed. 

\begin{definition}
A game is \emph{potential}~\cite{Monderer1996124} if there is a real-valued function $\Phi: X \longrightarrow \mathbb{R}$ such that its value strictly increases  when a player switches to a strategy that strictly increases her payoff.
\end{definition}

\begin{lemma}[\cite{Monderer1996124}]
The maximum of a potential function for a game is a pure Nash equilibrium.
\label{lem:Monderer_Shapley}
\end{lemma}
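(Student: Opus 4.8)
The plan is to prove this by contradiction, exploiting only the single one-directional implication built into the definition of a potential game: \emph{any} profitable unilateral deviation must strictly increase $\Phi$. First I would fix a profile $x^\star \in X$ that attains the maximum of $\Phi$ over $X$ (the statement presupposes this maximum is attained; in our setting this is guaranteed, e.g., when each $X^p$ is finite, or more generally when $\Phi$ is continuous on the compact set $X$, which holds since each $\Pi^p$ is continuous and the strategy sets are nonempty and bounded). I would also recall the remark following Problem~\eqref{GeneralProblem}: to certify that a pure profile is a {\NE} it suffices to check inequalities~\eqref{NE_definition} against pure deviations $x^p \in X^p$, since the maximum of a payoff over mixed strategies is attained at a pure one.

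The core argument is then short. Suppose, for contradiction, that $x^\star$ is \emph{not} a pure {\NE}. By the pure-strategy form of~\eqref{NE_definition}, there exist a player $p \in M$ and a strategy $y^p \in X^p$ such that
\begin{equation}
\Pi^p(y^p, x^{\star -p}) > \Pi^p(x^{\star p}, x^{\star -p}).
\label{eq:profit_dev}
\end{equation}
That is, player $p$ can switch from $x^{\star p}$ to $y^p$, holding the opponents' strategies $x^{\star -p}$ fixed, and strictly increase her payoff. By the defining property of a potential function $\Phi$, such a payoff-increasing switch must strictly increase $\Phi$, so $\Phi(y^p, x^{\star -p}) > \Phi(x^{\star p}, x^{\star -p}) = \Phi(x^\star)$. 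This contradicts the choice of $x^\star$ as a maximizer of $\Phi$ over $X$. Hence no player has a profitable unilateral deviation at $x^\star$, i.e.\ $\Pi^p(x^\star) \geq \Pi^p(x^p, x^{\star -p})$ for all $p \in M$ and all $x^p \in X^p$, so $x^\star$ is a pure {\NE}.

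This lemma is elementary, so there is no genuinely hard step; the only points requiring care are conceptual rather than technical. The first is making sure the maximum is well defined, which is why I would state the attainment hypothesis explicitly. The second, and the most likely place to slip, is the direction of the implication in the definition: the definition only asserts ``payoff increases $\Rightarrow$ $\Phi$ increases,'' and this is precisely the direction needed to derive the contradiction from a profitable deviation; the converse implication is never used, so the result holds even for the weak (ordinal) notion of potential given in the preceding definition. Finally, I would note the reduction to pure deviations so that the contradiction obtained via~\eqref{eq:profit_dev} indeed suffices to verify the full equilibrium condition~\eqref{NE_definition}.
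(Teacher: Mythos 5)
Your argument is correct and is exactly the canonical Monderer--Shapley argument: a maximizer of $\Phi$ admits no profitable unilateral (pure) deviation, since any such deviation would strictly increase $\Phi$ by the defining one-directional implication, contradicting maximality. The paper itself gives no proof (it cites \cite{Monderer1996124}), so there is nothing to contrast; your added care about attainment of the maximum and the sufficiency of pure deviations is appropriate and consistent with the paper's remark following Problem~\eqref{GeneralProblem}.
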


\section{Contributions }

In~\cite{Carvalho_APDIO}, the authors discuss the existence of Nash equilibria for integer programming games. It is proven that deciding the existence of pure Nash equilibria for {\IPG}s is  $\Sigma^p_2$-complete and that even the existence of Nash equilibria is  $\Sigma^p_2$-complete. However, the latter proof seems incomplete in the ``proof of only if''. Thus, our first contribution is the presentation of a completely new and correct proof (reduction).

Our second and main contribution is in the development of a flexible framework to compute an {\NE} for {\IPG}s. Based on the theorems of the previous section, we are able to show that   our framework  \emph{(i)} is guaranteed to compute an {\NE} for {\IPG}s in which all the players' sets of strategies are nonempty and bounded lattice points, and \emph{(ii)} it is guaranteed to compute an $\varepsilon$-equilibrium for {\IPG} under some mild conditions that are expected to be satisfied in real-world games. Nevertheless, our framework is capable of processing any {\IPG}, although, it might fail to stop, \eg, if the input game has no equilibria.

Our framework requires game theory and mathematical optimization algorithms. In fact, it is an iterative approach integrating different components from both fields. Those components can be divided in optimization algorithms, search of {\NE}, and heuristics. Each of them offers the user the flexibility of selecting the algorithm/solver most appropriated for the {\IPG} at hand. However, for the search of {\NE} solver, we strongly advise the use of Porter-Nudelman-Shoham method due to its practical efficiency, simple implementation  and easy integration of heuristics.  We also show how to adapt our method to determine correlated equilibria. 

To conclude the paper, we evaluate our methodology's performance through computational experiments in three different integer programming games: the knapsack game, the lot sizing game and the kidney exchange game. Given that this is the first general-purpose algorithm for {\IPG}s, there is no other  method in the literature to which our experiments can be compared.

Our paper is structured as follows. Section~\ref{sec:related_literature} reviews the literature in algorithmic game theory for the computation of Nash equilibria. In Section~\ref{sec:complexity}, we classify the computational complexity of deciding the existence of {\NE} for {\IPG}s.  Section~\ref{sec:algorithm}  formalizes our framework, develops two methods to
compute $\varepsilon$-equilibria for {\IPG}s (\textit{approximated} \NE), providing specialized functions to speed up
the methods, and extensions to {\CE}. In Section~\ref{sec:Computational_investigation}, we introduce three relevant {\IPG}s, and validate our methods through computational experiments on these games.  Finally, we conclude and discuss further research directions in Section~\ref{sec:conclusion}.

\section{Related literature}\label{sec:related_literature}

There are important real-world games  (\eg, in electricity markets~\citep{Fampa2005}, production planning~\citep{Li2011535}, health-care~\citep{Carvalho2016}, where each player's payoff maximization subject to her set of feasible strategies is described  by  a mixed integer programming formulation as required in the definition of {\IPG}s. This motivates the importance of understanding the equilibria of {\IPG}s, as they indicate their likely outcome and thus, its impact to the participants (players) and to the society. Concretely, in the game examples mentioned, the players are companies and countries that provide services to the population. Hopefully, this competition will be sufficiently profitable to the players so that they can create jobs, invest in technological innovation, while providing high quality service to the population. Thus, the computation of equilibria can allow us to anticipate these games outcomes and serve policy makers in their task of guaranteeing social welfare.

Moreover, {\IPG}s contain the well-known class of \emph{finite} games~\citep{Carvalho_APDIO}, \ie, games with a finite number of strategies and players, and \emph{quasi-concave} games, \ie, game with convex strategies sets and quasi-concave payoffs. The existing tools and standard approaches for finite games and quasi-concave games are not directly applicable to general {\IPG}s. Additionally,  the previous literature on {IPG}s focuses in the particular structure of specific games. 

\paragraph{Pure Nash equilibria.} 
\cite{Kostreva199327} describes the first theoretical approach to compute pure {\NE} to {\IPG}s, where  integer variables are required to be binary. The binary requirement in a binary variable $x$ is relaxed by adding in the payoff a penalty $Px(1-x)$ where $P$ is a very large number. Then,  the Karush-Kuhn-Tucker (KKT)~\citep{Karush39,kuhn1951} conditions are applied to each player optimization problem and merged into a system of equations for which the set of solutions contains the set of pure equilibria. To find the solutions for that system of equations, the author recommends the use of a homotopy path following~\citep{zangwill1981pathways} or Gr\"obner basis~\citep{Cox:2007:IVA:1204670}. Additionally, it must be verified which of the system's solutions are equilibria\footnote{The KKT conditions applied to non-concave maximization problems are only necessary.}, which results in long computational times. \cite{GabrielSiddiqui2013} proposed an optimization model for which the optimal solution is a pure Nash equilibrium of a game that approximates an {\IPG} with concave payoffs. In that paper, integer requirements are relaxed, the players' concave optimization problems are transformed in constrained problems through the KKT conditions; then, the complementary conditions are also relaxed  but their deviation from zero is minimized. On the few experimental results presented, this approach leads to a pure Nash equilibrium for the original game. However, there is neither a theoretical nor computational evidence showing the applicability of these ideas to the general case.  \cite{Hemmecke2009} considered {\IPG}s with an additional feature: a player $p$'s set of feasible strategies depends on the opponents' strategies. The authors study (generalized) pure equilibria assuming that the player's decision variables are all  integer and   payoffs are monotonously decreasing in each variable.   \cite{Koppe:2011} were the pioneers to investigate the computation of all pure {\NE} to {\IPG}s where all the players' decision variables are integer and their payoffs are differences of piecewise-linear concave functions. In order to compute {\NE}, the authors use generating functions of integer points inside of polytopes. The application of K\"oppe \etal's results relies on computational implementations that are still in the preliminary stage, although theoretically the approach can be proven to run in polynomial time under restrictive conditions, such as a fixed number of players and a fixed  number of players' decision variables, to name a few.  More recently, \cite{Pia2017} concentrated on the computation of pure {\NE} for {\IPG}s where the strategy sets are given by totally unimodular constraint matrices. They identify the cases where such games are potential and pure equilibria can be computed in polynomial time, and showed some cases where computing pure equilibria is PLS-complete (Polynomial Local Search).

\paragraph{Mixed Nash equilibria.} \cite{Kwang2003} studied the computation of mixed equilibria for an {\IPG} in the context of the electric power market. There, the players' set of strategies is approximated through its discretization, resulting in a finite game to which there are general algorithms to compute {\NE}. Nevertheless, there is a trade-off between having a good discretized approximation and an efficient computation of {\NE}: the more strategies are contained in the discretization, the longer the time to compute a {\NE} will be. \cite{Stein2008,Steinthesis} restricted their attention to separable games, meaning that all their results hold for separable {\IPG}s. The authors were able to provide bounds on the cardinality of the {\NE} support and present a polynomial-time algorithm for computing $\varepsilon$-equilibria of two-player separable games with fixed strategy spaces and payoff functions satisfying the H\"older condition.

None of the approaches above tackles general {\IPG}s, failing to either consider mixed {\NE} or continuous and integer decision variables for the players.

\section{Computational complexity}\label{sec:complexity}

In what follows, we show that even in the simplest case, linear integer programming games with two players, the existence of Nash equilibria is a  $\Sigma^p_2$-complete problem.

\begin{theorem}
The problem of deciding if an {\IPG} has a Nash equilibrium is $\Sigma^p_2$-complete problem.
\end{theorem}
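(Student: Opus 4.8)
Since the statement asserts $\Sigma^p_2$-completeness, the plan is to prove membership in $\Sigma^p_2$ and $\Sigma^p_2$-hardness separately. For membership I would exhibit a polynomial-size certificate checkable with a coNP oracle. A linear \IPG{} with two players has (bi)linear payoffs, a special case of the separable form~\eqref{ObjectivePlayer_separable}; hence, whenever a \NE{} exists, a finitely-supported one does, and its support cardinality and bit-length can be bounded polynomially. For the support size I would invoke the bound of Theorem~\ref{lemma_finitelysupported}, after restricting if necessary to the bounded region spanned by best responses, since any strategy in the support of a \NE{} is an optimal solution of a best-response integer program and therefore admits a representation of polynomial size by standard integer-programming bounds. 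Once the two supports are fixed, the mixing probabilities are constrained by the indifference and no-deviation conditions of~\eqref{NE_definition}; because player $p$'s conditions are linear in the opponent's distribution, a rational solution of polynomial bit-length exists if any solution does. A nondeterministic machine thus guesses the supports and probabilities (the $\exists$ phase) and then verifies, for each player $p$, that $\Pi^p(x^p,\sigma^{-p}) \le \Pi^p(\sigma)$ holds for all $x^p \in X^p$; since $\Pi^p$ is polynomial-time evaluable and $X^p$ is the set of integer points of a polytope, this is a universal statement decidable in coNP. Existence of a \NE{} is therefore of the form $\exists\,\sigma\;\forall\,x^p$, placing the problem in $\Sigma^p_2$.

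\paragraph{Hardness.} For $\Sigma^p_2$-hardness I would reduce from the $\Sigma^p_2$-complete problem \pulley, whose structure is ``$\exists$ a target in a prescribed interval such that $\forall$ subsets the subset-sum misses the target.'' I would construct a two-player linear \IPG{} in which the first player encodes the existential choice of the target (written in binary through integer variables plus the interval offset) and the second player plays the universal adversary, selecting a subset and attempting to make its sum equal to the first player's target. Crucially, the strategy sets must be \emph{unbounded}: by Theorem~\ref{them_existence} a \NE{} always exists when the sets are bounded, so the non-existence of equilibria must be driven by an unbounded improving direction. I would therefore attach a gadget whose payoffs reward the adversary for matching the target and, when a match occurs, open an unbounded (or cyclic, matching-pennies-type) best-response structure that destroys every equilibrium, whereas a target the adversary cannot match yields a stable profile. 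The payoff magnitudes would be calibrated so that the matching incentive dominates the gadget.

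\paragraph{Correspondence and main obstacle.} It remains to prove the equivalence between the two problems. The ``if'' direction (a valid target of a yes-instance yields a \NE{}) is the easier one: I would write the candidate profile explicitly and check~\eqref{NE_definition}. The delicate direction---and precisely the one left incomplete in the earlier argument---is the ``only if'': showing that a no-instance produces a game with \emph{no} \NE{} at all, not merely no pure one. Here I would characterize the support of any hypothetical mixed \NE{}, using the integrality of the strategy sets and the engineered payoffs to argue that the first player must place positive weight on some target, that the adversary's optimal reaction to any distribution over targets is governed by whether a matching subset exists, and that when every target is matchable (the no-instance) the resulting best-response dynamics admit no fixed point. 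Ruling out \emph{mixed} equilibria is the main technical obstacle, because the adversary's instability must survive against an arbitrary mixture of targets rather than a single pure target; guaranteeing that the unbounded or cyclic deviation incentive persists under mixing---so that the reduction is complete in both directions---is exactly the step that must be argued with care.
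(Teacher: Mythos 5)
Your high-level plan matches the paper's: membership via a guess-and-check argument in $\Sigma^p_2$ (the paper defers this part to earlier work), and hardness via a reduction from {\pulley} in which the non-existence of equilibria is driven by an unbounded variable, which is indeed forced by Theorem~\ref{them_existence}. However, there is a genuine gap, and it sits exactly where you locate it. You split the quantifiers across the two players: player 1 existentially picks the target, player 2 adversarially picks the subset and is rewarded for matching. Under that split, in a NO-instance a \emph{mixed} strategy of player 1 over several targets cannot in general be matched by any single subset (nor by any mixture of subsets with probability one), so the ``matching incentive'' that is supposed to trigger the instability need not fire, and your argument that no mixed equilibrium survives is not carried out --- you explicitly flag it as ``the step that must be argued with care'' without supplying it. As written, the reduction's correctness in the only-if direction is therefore unproven, and it is not clear the construction can be completed in this form.

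The paper avoids this obstacle by a different division of labor: in~\eqref{PlayerZ} the \emph{same} player $Z$ controls both the target $z$ and the subset variables $z_1,\dots,z_k$, with the constraint $\tfrac12 z_0+\sum_i q_i z_i\le z$ and a dominant quadratic coupling $Qz(2w-z)$ that pins $z$ to the opponent's $w$. The universal quantifier over subsets is thereby absorbed into player $Z$'s own best-response optimization (whether $\max\sum_i q_i z_i$ subject to $\le z$ attains $z$ or falls short by at least $1$, which decides whether $z_0$ is set to $0$ or $1$), so no cross-player coordination between target and subset is ever required, even under mixing: for \emph{every} pure strategy in the support of a best response of $Z$, the subset is chosen after $z$ inside the same program. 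The other player's payoff $(1-z_0)w_0$ with $w_0\in\mathbb{R}$ unbounded then destroys all equilibria, pure or mixed, whenever $z_0=0$ is played with positive probability. This ``both quantifiers inside one player's integer program'' device is the key idea missing from your proposal; without it (or a worked-out substitute handling mixtures of targets), the hardness direction does not go through.
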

\begin{proof}
The proof that this decision problem belongs to  $\Sigma^p_2$ can be found in~\cite{Carvalho_APDIO}. It remains to show that it is  $\Sigma^p_2$-hard. We will reduce the following  $\Sigma^p_2$-complete probleme~\citep{EgWo2012} to it:

\boxxx{
\vspace{0.2cm}
\textbf{Problem: {\pulley}}

\vspace{0.2cm}
\textbf{INSTANCE } A sequence $q_1,q_2,\ldots,q_k$ of positive integers; two positive integers $R$ and $r$ with $r\leq k$.

\vspace{0.2cm}
\textbf{QUESTION }Does there exist an integer $S$ with $R\leq S<R+2^r$ such that none of the
subsets $I\subseteq\{1,\ldots,k\}$ satisfies $\sum_{i\in I}q_i=S$?
}

Our reduction starts from an instance of {\pulley}. We construct the following instance of {\IPG}:
\begin{itemize}
\item The game has two players, $M=\lbrace Z, W \rbrace$, with player $Z$ ($W$) controlling the decision vector $z$ ($w$).
\item Player $Z$ solves
\begin{subequations}
\begin{alignat}{4}
  \max_{z} &  \quad   \frac{1}{2}z_0+\sum_{i=1}^k q_i z_i + Qz(2w-z)\\
  s.t.  &\quad  \frac{1}{2}z_0+\sum_{i=1}^k q_i z_i \leq z\\
  & \quad z_0, z_1, \ldots, z_k \in \lbrace 0,1 \rbrace\\
  & \quad R \leq z \leq R+2^r-1, z \in \mathbb{N}.
\end{alignat}
\label{PlayerZ}%
\end{subequations}
where $Q=\sum_{i=1}^k q_i$. We add binary variables $y \in \lbrace 0,1\rbrace^r$ and we make $z =R + \sum_{i=0}^{r-1} 2^i y_i$. Note that $z^2= Rz+\sum_{i=0}^{r-1} 2^i y_iz$. Thus, we can replace $y_iz$ by a new variable $h_i$ and add the respective McCormick constraints~\cite{McCormick76}. In this way, we can equivalently linearize the previous problem:
\begin{subequations}
\begin{alignat}{4}
  \max_{z,y,h} &  \quad   \frac{1}{2}z_0+\sum_{i=1}^k q_i z_i +2 Qzw-QRz-\sum_{i=0}^{r-1} 2^i  h_i\\
  s.t.  &\quad  \frac{1}{2}z_0+\sum_{i=1}^k q_i z_i \leq z\\
  & \quad z_0, z_1, \ldots, z_k \in \lbrace 0,1 \rbrace\\
  & \quad R \leq z \leq R+2^r-1, z \in \mathbb{N}\\
  & \quad z = R+ \sum_{i=0}^{r-1} 2^i y_i\\
  & \quad y_0, y_1, \ldots, y_{r-1} \in \lbrace 0,1 \rbrace\\
  & \quad h_i \geq 0 &i=0,\ldots,r-1\\
  & \quad h_i \geq z+(R+2^r-1)(y_i -1) & i=0,\ldots,r-1 \\
  & \quad h_i \leq z+R(y_i-1) & i=0,\ldots,r-1\\
  & \quad h_i \leq (R+2^r-1)y_i & i=0,\ldots,r-1.
\end{alignat}
\label{PlayerZlin}%
\end{subequations}
For sake of simplicity of our reasoning, we consider the quadratic formulation (\ref{PlayerZ}). The linearization above serves the purpose of showing that the proof is valid even under linear payoff functions for the players. 
\item Player W solves 
\begin{subequations}
\begin{alignat}{4}
  \max_{w} &  \quad   (1-z_0)w_0\\
  s.t.  & \quad R \leq w \leq R+2^r-1 
  \\
  & w_0 \in \mathbb{R}.
\end{alignat}
\label{PlayerW}%
\end{subequations}
\end{itemize}

(Proof of if).  Assume that the {\pulley} instance has answer YES. Then, there is an $S$ such that $R\leq S<R+2^r$  and for all subsets $I\subseteq\{1,\ldots,k\}$,  $\sum_{i\in I}q_i\neq S$. Let player $W$ strategy be $w^*=S$ and $w_0^*=0$. Note that the term $Qz(2w-z)$ in player $Z$'s payoff is dominant and attains a maximum when $z$ is equal to $w$. Thus, we make $z^*=w^*=S$ and since $\sum_{i=1}^k q_i z_i$ is at most $S-1$, we also make $z_0^*=1$. Next, we choose $z_i^*$ such that the remaining payoff of player $Z$ is maximized. By construction, player $Z$ is selecting her best response to $(w^*,w_0^*)$. Since $z_0^*=1$, then player $W$ is also selecting an optimal strategy. Therefore, we can conclude that this is an equilibrium.

(Proof of only if). Assume that the {\pulley} instance has answer NO. Then, for all $S$ such that $R\leq S<R+2^r$, there is a subset $I\subseteq\{1,\ldots,k\}$ with $\sum_{i\in I}q_i=S$. In this case, player $Z$ will always make $z_0=0$ which gives incentive for player $W$ to choose $w_0$ as  large as possible. Since $w_0$ has no upper bound, there is no equilibrium for the game.
\end{proof}

\section{Algorithmic approach}\label{sec:algorithm}

As shown in the previous section, the problem of deciding the existence of {\NE} for {\IPG}s is complete for the second level of the polynomial hierarchy\footnote{ The second level of the polynomial hierarchy is  $\Sigma_2^p$.}, which is a class of problems believed to be hard to solve. In fact, even when an {\IPG} is guaranteed to have an {\NE}, it is unlikely that it can be determined in polynomial time. To provide evidence in this direction, the following definition is required.

\begin{definition}
A \emph{normal-form} game, also called strategic-form game, is a finite game whose description is given by a multidimensional payoff matrix for all possible pure strategy profiles.
\end{definition}

Any normal-form game can be equivalently reformulated as an {\IPG} in polynomial time~\citep{Carvalho_APDIO}: essentially, for each player, one just needs to associate a binary variable for each of her pure strategies and a constraint enforcing that only one variable takes value 1, \ie, only one pure strategy is played. \cite{XiChen2006} proved that computing an {\NE} for a normal-form game, even with only two players, is PPAD-complete\footnote{PPAD stands for Polynomial Parity Arguments on Directed graphs.}. In simple words, for a PPAD-complete problem it is known that a solution exists; however the proof of solution existence is non-constructive and it is believed to be ``hard" to compute it. The result in~\cite{XiChen2006} together with Theorem~\ref{them_existence} and the fact that finite games (and thus, normal-form games) are separable (see  \cite{Carvalho_APDIO}) leads to:
\begin{lemma}
	The problem of computing an {\NE} to an {\IPG} with non-empty bounded strategy sets is PPAD-hard, even for  separable {\IPG} with only binary variables.
\end{lemma}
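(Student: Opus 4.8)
The plan is to give a polynomial-time, solution-preserving reduction from the problem of computing a mixed {\NE} of a two-player normal-form game, which is PPAD-complete by~\cite{XiChen2006}, to the problem of computing an {\NE} of a separable {\IPG} with only binary variables. Since a reduction that maps instances in polynomial time and recovers a solution of the source from any solution of the target witnesses PPAD-hardness, establishing these two properties (poly-time construction and exact solution recovery) suffices.

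First I would make explicit the reformulation of a normal-form game as an {\IPG} (the one sketched in the paragraph preceding the statement and attributed to~\cite{Carvalho_APDIO}). Given a normal-form game with players $M=\{1,\ldots,m\}$ and pure strategy sets of sizes $n_1,\ldots,n_m$, associate to player $p$ the binary variables $x^p_1,\ldots,x^p_{n_p}$ and set
\begin{equation*}
X^p=\Big\{\, x^p\in\{0,1\}^{n_p} : \textstyle\sum_{j=1}^{n_p} x^p_j=1 \,\Big\}.
\end{equation*}
Each $X^p$ is nonempty (it contains every unit vector), bounded, and described by a single linear equality together with integrality, so it fits the {\IPG} definition with only binary variables. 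The payoff of player $p$ is the multilinear extension of her payoff matrix $A^p$, namely $\Pi^p(x)=\sum_{j_1=1}^{n_1}\cdots\sum_{j_m=1}^{n_m} A^p_{j_1\ldots j_m}\, x^1_{j_1}\cdots x^m_{j_m}$, which is exactly the separable form~\eqref{ObjectivePlayer_separable} with $f^p_j(x^p)=x^p_j$. Hence the resulting {\IPG} is separable, and it can plainly be written down in time polynomial in the size of the normal-form game.

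Next I would verify that the reduction preserves equilibria exactly. The pure strategies of player $p$ in the {\IPG} are precisely the unit vectors $e_j\in X^p$, which I identify with the pure strategies $j$ of player $p$ in the normal-form game; a mixed strategy $\sigma^p\in\Delta(X^p)$ then corresponds bijectively to a probability vector over $\{1,\ldots,n_p\}$. Because $\Pi^p$ is multilinear in the blocks $x^i$ and the unit vectors are the only pure strategies, the expected payoff~\eqref{expected_payoff_sigma_discrete} evaluated at $\sigma$ coincides with the expected payoff of the corresponding mixed profile in the normal-form game. Consequently the best-response inequalities~\eqref{NE_definition} hold for $\sigma$ in the {\IPG} if and only if they hold for the corresponding profile in the normal-form game, so the two games share the same set of (mixed) equilibria under this identification, and an {\NE} of the constructed {\IPG} yields, in polynomial time, an {\NE} of the original game.

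I expect the only genuinely delicate point to be this payoff-equivalence step, but it is immediate from multilinearity together with the fact that the support of every $\sigma^p$ consists of unit vectors; no rounding or discretization is involved, so the correspondence is exact rather than merely $\varepsilon$-approximate. Combining the three ingredients — polynomial-time construction, separability with binary variables and nonempty bounded strategy sets, and exact equilibrium preservation — with the PPAD-completeness of two-player normal-form games~\cite{XiChen2006} yields PPAD-hardness of computing an {\NE} for separable binary {\IPG}s, as claimed. I would close by noting that Theorem~\ref{them_existence} guarantees that such an equilibrium always exists, so the target is a genuine \emph{total} search problem, consistent with membership-flavoured reasoning about the class PPAD.
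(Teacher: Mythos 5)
Your proposal is correct and follows essentially the same route as the paper, which obtains the lemma by combining the polynomial-time reformulation of a normal-form game as a binary, separable {\IPG} (one binary variable per pure strategy plus a ``choose exactly one'' constraint) with the PPAD-completeness of two-player normal-form {\NE} from \cite{XiChen2006}. You merely spell out in more detail the payoff-equivalence and equilibrium-preservation steps that the paper leaves implicit.
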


Despite of this theoretical intractability evidence, in what follows, we leverage on the power of mixed integer programming solvers and practical {\NE} search approaches to build an efficient framework for the computation of equilibria to {\IPG}s in practice. In the remainder of the paper, we focus on separable {\IPG}s since their set of {\NE} can be characterized by finitely-supported equilibria (Theorem~\ref{lemma_finitelysupported}). 

In Section~\ref{subsec:relaxed_game}, we will analyze the standard idea in mathematical programming of looking at the game obtained by relaxing the integrality requirements and we will argue that this seems not to provide useful information about the original set of {\NE} for the associated {\IPG}. Hence, another perspective it taken to tackle the problem. In Section~\ref{subsec:algorithm_formalized}, we design our algorithmic scheme for computing equilibria. It iteratively tightens an inner approximation to the original {\IPG}. This framework incorporates two crucial components: an algorithm for searching an {\NE} for normal-form games and a mathematical programming solver for computing best responses. While they can be left as a choice for the user who may have specific implementations exploring problem structure, in Section~\ref{subsec:pns}, we review the Porter-Nudelman-Shoham method ({\PNS})~\citep{Porter2008642} for searching the {\NE} of normal-form games, given its practical effectiveness and flexibility to take advantage of the overall iterative methodology. The basic algorithm obtained from our framework is modified in Section~\ref{sec:modifiedSGM}, in an attempt to improve its performance. Finally, in Section~\ref{sec:extensions}, we describe the extension of our methodology to correlated equilibria. 

Before proceeding, it is worthy to clarify  that in all our experiments, we consider players payoffs of the form
\begin{equation}
    \Pi^p(x^p,x^{-p}) = (c^p)^T x^p - \frac{1}{2} (x^p)^T Q_p^p x^p + \sum_{k \in M: i\neq p} (x^k)^T Q^p_k x^p,
    \label{payoff:quadratic}
\end{equation}
\ie, separable (recall Definition~\ref{ObjectivePlayer_separable}) quadratic payoff functions with bilateral (pairwise) interactions. The correctness of our methodology follows for more general payoff function forms. Thus, the value of this remark comes instead from concrete choices of our methodology components, as we will remark along the text.


\subsection{Game relaxations}\label{subsec:relaxed_game}

A typical procedure to solve optimization problems consists in relaxing constraints that are hard to handle and to use the information associated with the relaxed problem  to guide the search for the optimum. Thus, in this context, such ideas seem a natural direction to investigate. Call \emph{relaxed integer programming game} ({\RIPG}) the game resulting from an {\IPG} when the integrality constraints are removed. In the following examples, we  compare the  {\NE} of  an {\IPG} with the ones of the associated  {\RIPG}.

\begin{example}[{\RIPG} has more equilibria than {\IPG}] Consider an instance with two players, in which player $A$ solves $$\max_{x^A} 5 x^A_1x^B_1+23x^A_2x^B_2 \textrm{ subject to } 1\leq x^A_1+3x^A_2 \leq 2 \textrm{ and }  x^A\in \lbrace 0,1 \rbrace^2$$ and player $B$ solves $$\max_{x^B} 5 x^A_1x^B_1+23x^A_2x^B_2 \textrm{ subject to } 1 \leq x^B_1+3x^B_2 \leq 2 \textrm{ and }  x^B\in \lbrace 0,1 \rbrace^2.$$ 
	
	There is only one feasible strategy for each player in the {\IPG}. Thus, it is easy to see that it has a unique equilibrium: $(x^A,x^B)=((1,0),(1,0))$. This equilibrium also holds for  {\RIPG}. However,  {\RIPG} possesses at least one more equilibrium: $(x^A,x^B)=((0,\frac{2}{3}),(0,\frac{2}{3}))$.
	\label{ExampleMoreEquilibria}
\end{example}

\begin{example}[{\RIPG} has less equilibria than {\IPG}]
	Consider the game where player A solves
	$$  \max_{x^A} 12x_1^A x_1^B+5x_2^Ax_2^B \textrm{ subject to }  2x_1^A+2x_2^A \leq 3  \textrm{ and } x^A\in \lbrace 0,1 \rbrace^2,$$
	and player B solves 
	$$  \max_{x^B}  12x_1^Ax_1^B+5x_2^Ax_2^B+100x_1^B \textrm{ subject to }2x_1^B+x_2^B \leq 1  \textrm{ and } x^B \in \lbrace 0,1 \rbrace^2.$$
	There are at least 2 equilibria: $(x^A,x^B)=((0,0),(0,0))$ and $(x^A,x^B)=((0,1),(0,1))$. However, none is an equilibrium of the associated {\RIPG}. In fact, in the {\RIPG}, it is always a dominant strategy for player $B$ to select $x^B =(\frac{1}{2},0)$, and the unique equilibrium is $(x^A,x^B)=( (1,0), (\frac{1}{2},0))$. In conclusion, the game has at least 2 equilibria while the associated relaxation has 1.
	\label{ExampleLessEquilibria}
\end{example}

These examples show that no bounds on the number of {\NE} and, thus, on the players' payoffs in an {\NE} can be extracted from the relaxation of an {\IPG}.

Moreover, there are no general methods to compute mixed equilibria of {\RIPG}s, implying that we would be restricted to pure equilibria (which may fail to exist). The exception is the case where payoffs are separable with linear functions, \ie, of the form~\eqref{ObjectivePlayer_separable} with $f^p_j$ linear. Under this setting, any mixed strategy profile of {\RIPG} can be re-written as a pure strategy profile without changing players' payoffs (see the proof of Theorem 7 in \cite{Carvalho2019} where this is shown). In other words, such {\RIPG}s are guaranteed to have pure equilibria.

\subsection{Algorithm formalization}\label{subsec:algorithm_formalized}

Our goal is to determine an {\NE}. Thus,  from the Nash equilibrium definition, we aim to find $\sigma=(\sigma^1, \ldots, \sigma^m)$ such that
\begin{subequations}
	\begin{alignat}{4}
		& \sigma^p \in \Delta(X^p)  \qquad &  \forall p \in M \\
		& \Pi^p(  \sigma^p , \sigma^{-p}) \geq  \Pi^p( x^p ,\sigma^{-p}) \qquad & \forall p \in M,  &\qquad \forall x^p \in X^p, \label{NE_inequality}
	\end{alignat}
	\label{Constraint_Programing_NE}%
\end{subequations}
that is, we aim to determine a mixed profile of strategies such that no player has incentive to unilaterally deviate from it. The number of pure strategies in each $X^p$ is likely to be infinite  or, in  case all variables are  integer and bounded, to be exponential. Moreover, even with only two players, the left-hand-side of Inequalities~\eqref{NE_inequality} is non-linear; recall the expected payoff~\eqref{expected_payoff_sigma_discrete}. Thus, in general, tackling Problem~\eqref{Constraint_Programing_NE} directly will not be possible in practice.

We then follow the motivating idea of column generation~\citep{Gomory1961} and cutting plane~\citep{gomory1958} approaches: many pure strategy profiles will be played with zero probability (Theorem~\ref{lemma_finitelysupported}) and only a subset of the Constraints~\eqref{NE_inequality} will be binding  under an equilibrium. Indeed, we will decompose an {\IPG} through its \emph{sampled games}, \ie, the {\IPG} restricted to finite subsets of $X$.

Algorithm~\ref{Alg:SGM} presents our general methodology. In \ref{initial_step_unique_equilibrium_SGM}, we obtain our first sampled game represented by the subset of pure strategy profiles $\mathbb{S}$; computationally, we use its polymatrix normal-form representation, \ie, since players' interactions are bilateral~\eqref{payoff:quadratic}, we just need to save the payoffs for all pairs of pure strategies. Then, in \ref{step:search_eq}, we search for an {\NE} $\sigma_k$ of the obtained sampled game; note that any algorithmic approach for normal-form games can be applied in this step. In \ref{verify_NE_step_SGM}, we verify if there is a player with incentive to deviate. Here, with exception to the last iteration of the algorithm, we can both determine a player best reaction~\eqref{GeneralProblem}, or use some heuristic that finds a pure strategy that does not decrease by more than $\varepsilon$ the player payoff in comparison with the sampled game equilibrium. If no player has incentive to deviate, the algorithm returns an $\varepsilon$-equilibrium. Otherwise, in \ref{verify_NE_new_sampled}, the sampled game approximation is refined by adding  the new pure strategy found in \ref{verify_NE_step_SGM}. We note that when $\varepsilon=0$, the algorithm outputs a {\NE}.  We call Algorithm~\ref{Alg:SGM} \emph{sampled generation method} ({\SGM}).

\begin{algorithm}[htbp!] \footnotesize
	\textbf{Input: }An {\IPG} instance and $\varepsilon \geq 0$. \\
	\textbf{Output: } $\varepsilon$-equilibrium, last sampled game and number of iterations.\\
	\DontPrintSemicolon
	
	\nlset{Step 1}\textbf{Initialization:}\; 
	$\mathbb{S}=\prod_{p=1}^m \mathbb{S}^p \leftarrow$ {\color{red}$Initialization(IPG)$} \hspace{0.1cm}\textcolor{gray}{// Generation of sampled game. Details in Table~\ref{Table:specialized_algorithms}}\;  \label{initial_step_unique_equilibrium_SGM}
	$k \leftarrow 0$ \;
	set $\mathbb{S}_{dev_k}$ to be $\prod_{p=1}^m \emptyset$ \hspace{0.1cm}\textcolor{gray}{// Record players deviation sets.} \;
	
	\nlset{Step 2}\textbf{Solve sampled game $k$:} \;
	$\sigma_k \leftarrow$  {\color{red}SearchNE $(\mathbb{S})$} \hspace{1.9cm}\textcolor{gray}{// Computation of {\NE}. Details in Section~\ref{subsec:pns}} \;
	$list \leftarrow$   {\color{red} $PlayerOrder(\mathbb{S}_{dev_0}, \ldots,\mathbb{S}_{dev_k})$} \hspace{0.3cm}\textcolor{gray}{// A list ordering players. Details in Table~\ref{Table:specialized_algorithms}}  \; 
	\label{step:search_eq}

	\nlset{Step 3}\textbf{Termination:}\;
	\For{$p \in list$}{
		$x(k+1) \leftarrow {\color{red} DeviationReaction(p,\sigma_k^{-p},\Pi^p (\sigma_k),\varepsilon,IPG)}$ \hspace{0.3cm}\textcolor{gray}{// Incentive to deviate. Details in Table~\ref{Table:specialized_algorithms}}\;
		\If{ $\Pi^p(\sigma_k) +\varepsilon< \Pi^p(x(k+1), \sigma^{-p}_k)$}{
			go to Step 4}
	}
	\Return $\sigma_k$, $\mathbb{S}$, $k$\; \label{verify_NE_step_SGM}
	
	\nlset{Step 4}\textbf{Generation of next sampled game:} \;
	$k \leftarrow k+1$ \;
	$\mathbb{S}_{dev_{k}}^p \leftarrow  \lbrace x(k) \rbrace$   \hspace{0.1cm}\textcolor{gray}{// Record the deviation.} \;
	$\mathbb{S}^p \leftarrow \mathbb{S}^p \cup \lbrace x(k) \rbrace$ \;
	go to Step 2\;
	\label{verify_NE_new_sampled}
	
	\caption{Sampled generation method ({\SGM}).} \label{Alg:SGM}
\end{algorithm}

Figure~\ref{Fig:ISM_iterations} illustrates in normal-form (bimatrix-form) the sampled games progressively obtained through {\SGM}. Intuitively, we expect {\SGM} to enumerate the most ``relevant'' strategies and/or ``saturate'' the space $X$ after a sufficient number of iterations and thus, approximate an equilibrium of the original {\IPG}.  Hopefully, we will  not need to enumerate all feasible strategies in order to compute an equilibrium. 
\begin{figure}
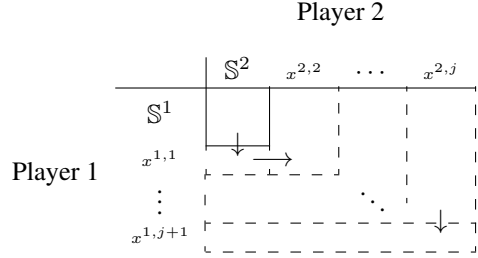
\center 
	\begin{tabular}{cccccccc}
		& & \multicolumn{5}{c}{Player 2}    \\
		&  & \multicolumn{1}{|c}{}& & &\\
		&                             & \multicolumn{2}{|c|}{$\mathbb{S}^2$} & \multicolumn{1}{c:}{\tiny $x^{2,2}$}  & $\cdots$ & \multicolumn{1}{:c:}{\tiny $x^{2,j}$}  \\ \cline{2-7}
		\multirow{6}{*}{Player 1}  & \multirow{2}{*}{$\mathbb{S}^1$}   & \multicolumn{2}{|c|}{}  &  \multicolumn{1}{c:}{}   &   &\multicolumn{1}{:c:}{}  \\ 
		& &   \multicolumn{2}{:c:}{\multirow{2}{*}{$\downarrow$}} &  \multicolumn{1}{c:}{}    &  & \multicolumn{1}{:c:}{} & \\  \cline{3-4}
		& \multicolumn{1}{c:}{\tiny $x^{1,1}$}& \multicolumn{3}{c}{$\longrightarrow$}         & &\multicolumn{1}{:c:}{}  \\ \cdashline{3-5} 
		&\multicolumn{1}{c:}{$\vdots$} &  &  &  &  $\ddots$& \multicolumn{1}{c:}{\multirow{2}{*}{$\downarrow$}}  \\  \cdashline{3-7}
		& \multicolumn{1}{c}{\tiny $x^{1,j+1}$} &  &  &  & \\ \cdashline{3-7}
		&  & &   &  &
	\end{tabular} 
	\caption{{\SGM} for $m=2$. The notation $x^{p,k}$ represents the player $p$'s strategy added at iteration $k$. A vertical (horizontal) arrow represents player 1 (player 2) incentive to unilaterally deviate from the sampled game computed equilibrium to a new strategy.}
	\label{Fig:ISM_iterations}
\end{figure}


Before providing the {\SGM}'s proof of correctness, in an attempt to clarify the method and highlight its particularities when continuous variables exist, we present the following example.

\begin{example}[Computing an equilibrium with {\SGM}]
	Consider an {\IPG} with two players. Player $i$ wishes to maximize the payoff function $\max_{x^i\geq 0} -(x^i)^2+x^ix^{-i}$. The best reaction is given by $x^i(x^{-i}) = \frac{1}{2} x^{-i}$, for $i=1,2$. The only equilibrium is $(x^1,x^2)=(0,0)$. Let us  initialize {\SGM} with the sampled game $\mathbb{S}^i = \lbrace 10 \rbrace$ for $i=1,2$, and always start by checking first if player 1 has incentive to deviate. Then, in each iteration $k$, the sampled game has the pure equilibrium $(x^{1,k},x^{2,k-1})=(\frac{5}{2^{k-1}},\frac{10}{2^{k-1}})$ if $k$ is odd and $(x^{1,k},x^{2,k-1})=(\frac{10}{2^{k-1}},\frac{5}{2^{k-1}})$ if $k$ is even. See Table~\ref{ex:smg_app} and Figure~\ref{best_reactions_example1} for an illustration of {\SGM} iterative process evolution.
	
	\begin{table}
	    \centering
	\begin{tabular}{cc|rr}
	     && \multicolumn{2}{c}{Player 2} \\ 
	     &&  10      & $x^{2,2}=\frac{5}{2}$  \\\hline
\multirow{3}{*}{Player 1}&	  10 & (0,0)   &  (-75.0, 18.75) \\
	  & $x^{1,1}=$5 & (25,-50) & (-12.5,6.25) \\
	  & $x^{1,3}=\frac{5}{4}$ & (10.9375,-87.5) & (1.5625,-3.125) \\ \hline
	\end{tabular}
	\caption{Sampled game after 3 iterations of {\SGM}.}
		\label{ex:smg_app}
		\end{table}
	
	Thus, {\SGM} converges to the equilibrium $(0,0)$. If in the input of {\SGM}, $\varepsilon= 10^{-6}$ then, after 14 iterations, {\SGM} would return an $\varepsilon$-equilibrium of the game. Remark that in this case $\varepsilon$ cannot be zero.
	\begin{figure}[th]
		\centering
		\includegraphics[scale=0.4]{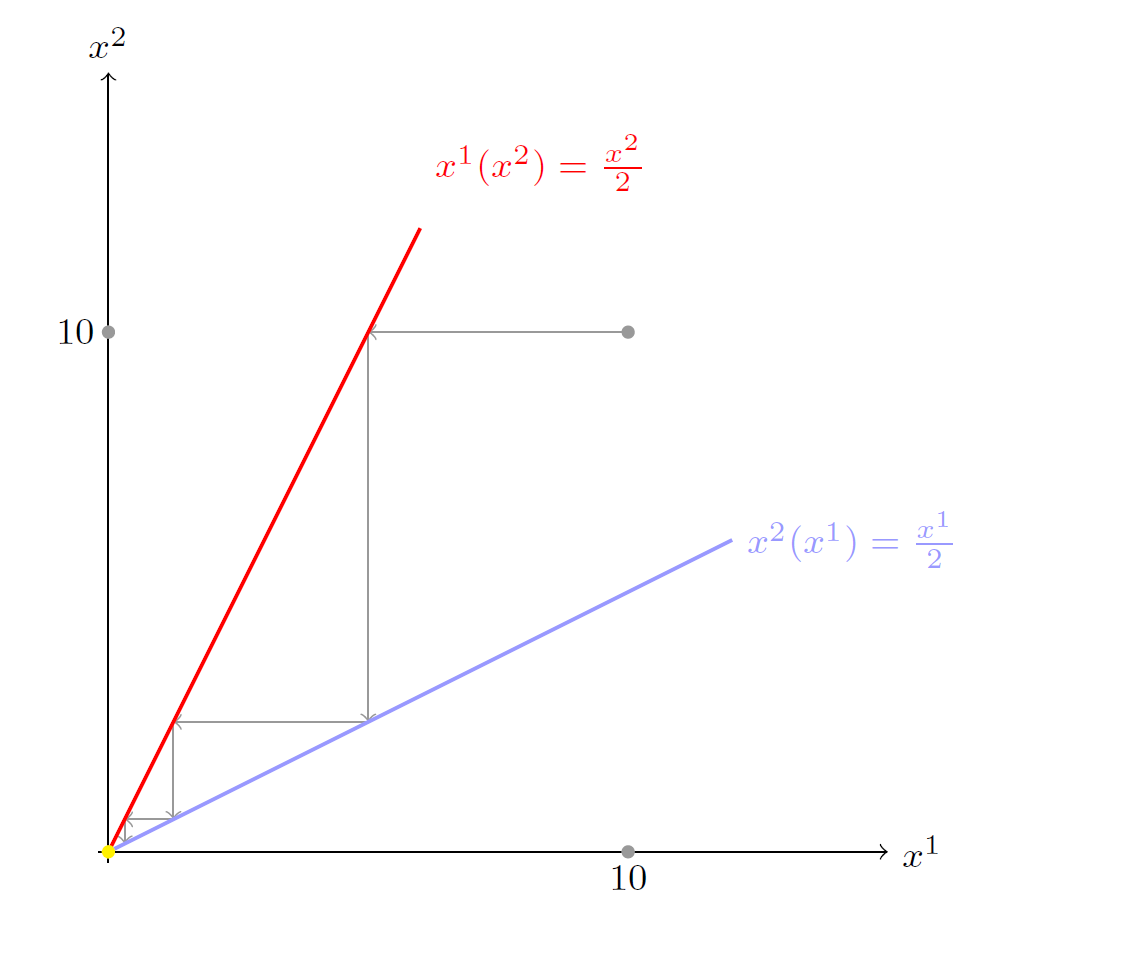}
		\caption{Players' best reaction functions. The gray line represents the players deviations in each iteration of {\SGM}.}
		\label{best_reactions_example1}
	\end{figure}
	\label{examp:algorithm_diverge_unbounded}
\end{example}

Our goal is to guarantee that {\SGM} computes an $\varepsilon$-equilibrium in finite time. To this end, some conditions on the {\IPG}s are necessary.
If a player $p$'s set of feasible strategies is unbounded, the game may fail to have equilibria, and even if it has equilibria, {\SGM} may not converge.  Furthermore, as pointed out by \cite{Stein2008} for a specific separable game, it seems that there must be some bound on the speed variation (how fast it can change) of the payoffs in order to guarantee that an algorithm computes an equilibrium in finite time.  The Lipschitz condition ensures this bound. 
\begin{theorem}
	If $X$ is nonempty and bounded, then in a finite number of steps, {\SGM} computes
	\begin{enumerate}
		\item an {\NE}, if all players' decision variables are integer;
		\item an $\varepsilon$-equilibrium with $\varepsilon>0$, if each player $p$'s payoff function is Lipschitz continuous in $X^p$.
	\end{enumerate}  
	\label{theorem_converge_equilibrium_sgm}
\end{theorem}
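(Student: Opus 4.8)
The plan is to show that \SGM\ cannot run forever and that, on termination, the returned $\sigma_k$ is a genuine (\,$\varepsilon$-)equilibrium. Two observations underlie both parts. First, every sampled game is a \emph{finite} normal-form game, so it admits a Nash equilibrium, which Step~2 returns; hence the method is well defined and never stalls. Second, the only way \SGM\ proceeds past Step~3 is that some player $p$ has a pure deviation $x(k+1)\in X^p$ with $\Pi^p(x(k+1),\sigma_k^{-p})>\Pi^p(\sigma_k)+\varepsilon$, and this deviation is necessarily \emph{new}: if $x(k+1)$ already lay in player $p$'s sampled strategy set $\mathbb{S}^p$, then, $\sigma_k$ being a sampled equilibrium, we would have $\Pi^p(\sigma_k)\geq\Pi^p(x(k+1),\sigma_k^{-p})$, contradicting the deviation inequality already for $\varepsilon=0$. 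Thus each non-terminating iteration strictly enlarges the sampled strategy set of some player.

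For Part~1, I would note that when all decision variables are integer and $X$ is bounded, $X$ is a \emph{finite} set. Since each non-terminating iteration adds a previously unseen strategy to some $\mathbb{S}^p\subseteq X^p$, and $\sum_{p\in M}|X^p|$ is finite, \SGM\ performs at most $\sum_{p\in M}|X^p|$ iterations. On the terminating iteration, Step~3 evaluates for every player an exact best reaction over all of $X^p$ (Problem~\eqref{GeneralProblem}) and finds no improving deviation; with $\varepsilon=0$ this is precisely inequality~\eqref{NE_definition}, so $\sigma_k$ is a Nash equilibrium of the original \IPG.

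For Part~2, I would first observe that each $X^p$ is a finite union of bounded polyhedra, hence compact, and that Lipschitz continuity of $\Pi^p(\cdot,x^{-p})$ with a constant $L$ uniform in $x^{-p}\in X^{-p}$ carries over to the expected payoff $\Pi^p(\cdot,\sigma^{-p})$ with the \emph{same} $L$, since the latter is a finite convex combination (over the opponents' finitely supported profile) of the former. The core is a contradiction argument. Suppose \SGM\ never stops; it then produces infinitely many deviations, and as $M$ is finite some player $p$ receives infinitely many, say $s_1,s_2,\ldots\in X^p$, added at iterations $t_1<t_2<\cdots$ with $\Pi^p(s_j,\sigma_{t_j}^{-p})>\Pi^p(\sigma_{t_j})+\varepsilon$. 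For any $j<j'$, the strategy $s_j$ already belongs to the sampled set at iteration $t_{j'}$, so the sampled equilibrium gives $\Pi^p(\sigma_{t_{j'}})\geq\Pi^p(s_j,\sigma_{t_{j'}}^{-p})$; chaining this with the deviation inequality at $t_{j'}$ yields
\begin{equation*}
\Pi^p(s_{j'},\sigma_{t_{j'}}^{-p})-\Pi^p(s_j,\sigma_{t_{j'}}^{-p})>\varepsilon .
\end{equation*}
By the uniform Lipschitz bound this forces $\Vert s_{j'}-s_j\Vert>\varepsilon/L$ for \emph{all} $j<j'$. But a bounded subset of $\mathbb{R}^{n_p}$ contains only finitely many points pairwise separated by the fixed positive distance $\varepsilon/L$, a contradiction. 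Hence \SGM\ terminates, and exactly as in Part~1 the terminating Step~3 certifies that no player can gain more than $\varepsilon$ by a best reaction, i.e.\ inequalities~\eqref{NE_definition_epsilon} hold and $\sigma_k$ is an $\varepsilon$-equilibrium.

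The step I expect to be the crux is the separation/packing argument of Part~2 together with securing a Lipschitz constant that is uniform in the opponents' strategies: the telescoping inequality above is what converts ``the method keeps finding improving deviations'' into ``the improving deviations stay a fixed distance apart,'' and it is precisely the uniform constant that lets the single threshold $\varepsilon/L$ control every pair $(s_j,s_{j'})$ regardless of the mixed profile $\sigma_{t_{j'}}^{-p}$ faced at that iteration. Part~1 is then the degenerate case in which the separation is automatic (distinct integer points are at distance at least $1$) and $\varepsilon$ may be taken to be $0$.
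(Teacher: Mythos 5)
Your proposal is correct and follows essentially the same route as the paper's proof: finiteness of $X$ handles the integer case, and for the Lipschitz case the same chain of inequalities shows every newly added strategy must lie more than $\varepsilon/L$ from all previously sampled strategies of that player, which a bounded set cannot sustain indefinitely. Your write-up is if anything slightly more careful than the paper's on two points it leaves implicit — that the Lipschitz bound transfers uniformly to $\Pi^p(\cdot,\sigma^{-p})$ by convex combination, and the explicit packing statement that a bounded set admits only finitely many pairwise $\varepsilon/L$-separated points.
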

\begin{proof}
	
	{\SGM} stops once an equilibrium of the sampled game coincides with an equilibrium (case 1) or an $\varepsilon$-equilibrium (case 2) of the {\IPG}. 
	Suppose that the method does not stop. This means that in every iteration at least a new strategy is added to the current $\mathbb{S}$. 
	\begin{description}
		\item[Case 1:] Given that $X$ is bounded and players' variables are integer, each player has a finite number of strategies. Thus, after a finite number of iterations, the sampled game will coincide with {\IPG}, \ie, $\mathbb{S}=X$. This means that an {\NE} of the sampled game is an {\NE} of the {\IPG}.
		\item[Case 2:] Each player $p$ payoff function is Lipschitz continuous in $X^p$, which means that there is a positive real number $L^p$ such that
		$$|\Pi^p(x^p,\sigma^{-p})-\Pi^p(\hat{x}^p,\sigma^{-p})| \leq L^p \parallel x^p - \hat{x}^{p} \parallel \quad \forall x^p, \hat{x}^p \in X^p,$$
		where $ \parallel  \cdot  \parallel $ is the Euclidean norm. 
		
		Consider an arbitrary iteration of {\SGM} with $\sigma$ as an {\NE} of the current sampled game $\mathbb{S}$. If $\sigma$ is not an $\varepsilon$-equilibrium of the original game, then there is a player $p$ with incentive to deviate to  $x^p \in X^p \setminus \mathbb{S}^p$, \ie
				\begin{equation} 
					\Pi^p(x^p,\sigma^{-p})> \Pi^p(\sigma)+\varepsilon. 
				\label{eq:proof1}
				\end{equation}
				Hence, 
				$$
					\parallel x^p - \hat{x}^p \parallel >\frac{\varepsilon}{L^p}, \quad \forall \hat{x}^p \in \mathbb{S}^p.
				$$
				Otherwise, there is $\hat{x}^p \in \mathbb{S}^p$ such that $\parallel x^p - \hat{x}^p \parallel \leq \frac{\varepsilon}{L^p}$ which contradicts Inequality~\ref{eq:proof1}:
					\begin{subequations}
						\begin{alignat*}{4}
							& \Pi^p(x^p, \sigma^{-p}) - \Pi^p(\sigma)  &=& \Pi^p(x^p, \sigma^{-p})-\Pi^p(\hat{x}^p, \sigma^{-p})+\Pi^p(\hat{x}^p, \sigma^{-p}) - \Pi^p(\sigma) \\[0.4ex]
							&                                                                   & \leq &\Pi^p(x^p, \sigma^{-p})-\Pi^p(\hat{x}^p, \sigma^{-p}) \\
							&                                                                   & \leq & |\Pi^p(x^p, \sigma^{-p})-\Pi^p(\hat{x}^p, \sigma^{-p})  |\\
							&                                                                   &  \leq & L^p \parallel x^p - \hat{x}^{p} \parallel \leq L^p \frac{\varepsilon}{L^p} = \varepsilon.
						\end{alignat*}
					\end{subequations}
					The first step follows from the fact that $\sigma$ is an {\NE} of the sampled game and thus $\Pi^p(\hat{x}^p, \sigma^{-p}) \leq \Pi^p(\sigma)$. The next inequality holds because we are just applying the absolute value. The third step follows from the fact that player $p$'s payoff is Lipschitz continuous in $X^p$.
					
		Consequently, in each iteration of {\SGM}, each newly added strategy $x^p$ to $\mathbb{S}^p$ is more than $\frac{\varepsilon}{L^p}$ away from any other strategy in $\mathbb{S}^p$. Thus, after a sufficiently large number of iterations (if {\SGM} has not stopped), it holds
		$$\parallel x^p - \hat{x}^p \parallel \leq \frac{\varepsilon}{L^p}, \quad \forall p \in M, \forall x^p \in X^p.$$
		Therefore, given an {\NE} of the current sampled game, no player can improve her payoff more than $\varepsilon$ through unilateral deviations.
		 In this way, $\sigma$ is an $\varepsilon$-equilibrium of the {\IPG}. \xqed
	\end{description}
	
\end{proof}

A payoff function which is linear in that player's variables is Lipschitz continuous; a quadratic payoff function when restricted to a bounded set satisfies the Lipschitz condition. In Section~\ref{sec:competitive_lotsizing_game}, we will describe the knapsack game and the kidney exchange game which have linear payoffs, and the lot-sizing game which has quadratic payoffs. Therefore, Lipschitz continuity seems not to be too restrictive in practice.

\subsubsection{Computation of {\NE} for normal-form games}\label{subsec:pns}

 A relevant fact about computing equilibria for a sampled game with the set of strategies $\mathbb{S} \subseteq X$ is that  $\mathbb{S}$ is finite and, consequently, enables the use of general algorithms to compute equilibria of normal-form games. Given the good results achieved by {\PNS}~\citep{Porter2008642} for the computation of a {\NE} in normal-form games, this is the method that our framework will apply to solve the sampled games (additional advantages for adopting {\PNS} will be given in the end of this section). {\PNS} solves the constrained program~\eqref{Constraint_Programing_NE} associated with a sampled game (\ie, $X=\mathbb{S}$)  through the resolution of simpler subproblems. Note that in constraints~\eqref{NE_inequality} the expected payoffs~\eqref{expected_payoff_sigma_discrete}  are highly non-linear due to the multiplication of the probability variables. To this end,  {\PNS} bases its search for an equilibrium $\sigma$ by guessing its support and using the fact that in an equilibrium $\sigma \in \Delta$, each player must be indifferent among her strategies in the support at which her payoff is maximized (Constraints~\eqref{eq:indif} and~\eqref{eq:NEmax}). Thus, an equilibrium  $\sigma$ of a sampled game $\mathbb{S}$ satisfies
\begin{subequations}
	\begin{alignat}{4}
		v^p=&  \Pi^p(\hat{x}^p, \sigma^{-p}) \quad &\forall p \in M, \quad & \forall \hat{x}^p \in \supp(\sigma^p) \label{eq:indif}\\
		v^p  \geq & \Pi^p(x^p,\sigma^{-p}) \quad & \forall p \in M, \quad & \forall x^p \in \mathbb{S}^p  \label{eq:NEmax}\\
		\sum_{x^p \in \supp(\sigma^p)}\sigma^p(x^p)  = &1 \quad & \forall p \in M \quad &\\
		\sigma^p(x^p)& \geq 0 & \forall p \in M, \quad & \forall x^p \in \supp(\sigma^p)   \\
		\sigma^p(x^p) & = 0 & \forall p \in M, \quad & \forall x^p \in \mathbb{S}^p
		\setminus\supp(\sigma^p),
	\end{alignat}
	\label{FeasibilityProgram}%
\end{subequations}
with $\supp(\sigma^p) \subseteq \mathbb{S}^p$ and $v^p$ an auxiliary variable to represent player $p$ maximum payoff.
Problem \eqref{FeasibilityProgram} is called \emph{Feasibility Problem}. When the payoff functions have the form~\eqref{payoff:quadratic}, the constraints in Problem \eqref{FeasibilityProgram} become linear, and thus, it becomes solvable in polynomial time.

The computation of an {\NE} to the sampled game $\mathbb{S}$ reduces to \emph{(i)} finding an {\NE} support and \emph{(ii)} solving the associated Feasibility Problem. Therefore, support sets in $\mathbb{S}$ are enumerated and the corresponding Feasibility Problems are solved, until an {\NE} is found (\ie, a Feasibility Problem is proven to be feasible). {\PNS} implements this enumeration with an additional step that decreases the number of Feasibility Problems to be solved, in other words, it reduces the number of candidates to be the support of an equilibrium.
A strategy $x^p \in X^p$ is \emph{conditionally dominated} given a subset of strategies $R^{-p} \subseteq X^{-p}$ for the remaining players, if the following condition holds
\begin{equation}
	\exists \hat{x}^p \in X^p \ \ \forall x^{-p} \in R^{-p}: \quad \Pi^p(x^p,x^{-p}) < \Pi^p(\hat{x}^p,x^{-p}).
	\label{def:conditionally_dominated}
\end{equation}
{\PNS} prunes the support enumeration search by making use of conditionally dominated strategies, since such strategies will never be selected with positive probability in an equilibrium. In addition, we consider in the support enumeration the property given by Theorem~\ref{lemma_finitelysupported}: each player $p$ has a support size of at most $k_p+1$; recall that to determine $k_p+1$, one just needs write player p's payoff as in the form~\eqref{ObjectivePlayer_separable}. 

We conclude {\SGM} description by highlighting an additional advantage of {\PNS}, besides being in practice the fastest algorithm. The authors' implementation of {\PNS}~\citep{Porter2008642} searches the equilibria by following a specific order for the enumeration of the supports. In specific, for two players' games, $\vert M \vert = 2$, the algorithm starts by enumerating supports, first, by increasing order of their total size and, second, by increasing order of their balance (absolute difference in the players' support size). The idea is that in the case of two players, each equilibrium is likely to have supports with the same (small) size. When $\vert M \vert>2$, {\PNS} exchanges the importance of these two criteria. We expect {\SGM} to start converging to an equilibrium as it progresses. Therefore, it may be advantageous to use the past computed equilibria to guide the support enumeration. Including rules for support enumeration in {\PNS} is straightforward; these rules can be problem specific. On the other hand, doing so for other state-of-the-art algorithms is not as easy. For instance, the well-known \cite{Lemke1964} algorithm implies to start the search for equilibria in an artificial equilibrium or in an equilibrium of the game (allowing to compute a new one). Thus, since at iteration $k$ of {\SGM}, none of the equilibria computed for the sampled games in iterations $1$ to $k-1$ is an {\NE} of the current sampled game, there is no direct way of using past information to start or guide the Lemke-Howson algorithm. Moreover, this algorithm's search is performed by enumerating vertices of polytopes built according to the game strategies. Therefore, since in each iteration of {\SGM} a new strategy is added to the sampled game, these polytopes may change deeply.

\subsubsection{Modified {\SGM}}\label{sec:modifiedSGM}

Based on the framework described, we can slightly change the scheme of {\SGM} presented in Algorithm~\ref{Alg:SGM}, in an attempt to speed up its running time. Its new version will be a depth-first search: while in {\SGM} the size of the sampled game strictly increases from one iteration to the next one, in its depth-first search version it will be possible to backtrack to previous sampled games, with the aim of decreasing the size of the sampled game. In each iteration of the improved {\SGM}, we search for an equilibrium which has in the support the last strategy added to the sampled game; in case such equilibrium does not exist, the method backtracks, and computes a new equilibrium to the previous sampled game. While in each iteration of {\SGM} all supports can be considered, in the \emph{modified} {\SGM} ({\modSGM}) we limit the search to the ones with the new added strategy.  Therefore, this modified {\SGM} attempts to keep the size of the sampled game small and decreases the number of supports enumerated.

Next, we concentrate in proving under which conditions the {\modSGM} computes an $\varepsilon$-equilibrium in finite time and provide its detailed description.

\begin{theorem}
	Let $\mathbb{S} = \mathbb{S}^1 \times \mathbb{S}^2 \times \ldots \times \mathbb{S}^m$ represent a sampled game associated with some {\IPG}. If the finite game that results from $\mathbb{S}$ has a unique equilibrium $\sigma$, then one of the following implications holds:
	\begin{enumerate}
		\item $\sigma$ is an equilibrium of the {\IPG};
		\item given any player $p$ with incentive to deviate from $\sigma^p$ to $x^p \in X^p$, the finite game game associated with $\mathbb{S}'=\mathbb{S}^1 \times \cdots \times \mathbb{S}^{p-1} \times\left( \mathbb{S}^p \cup \lbrace x^p \rbrace \right) \times \mathbb{S}^{p+1} \times \cdots \times \mathbb{S}^m$  has $x^p$ in the support of all its equilibria.
	\end{enumerate}
	\label{the:uniqueNE}
\end{theorem}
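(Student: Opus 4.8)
The plan is to establish the stated dichotomy by showing that whenever the first alternative fails, the second one must hold. Concretely, I would assume that $\sigma$ is \emph{not} an equilibrium of the {\IPG} and fix an arbitrary player $p$ possessing a profitable unilateral deviation $x^p \in X^p$, so that $\Pi^p(x^p,\sigma^{-p}) > \Pi^p(\sigma)$. The goal is then to prove that $x^p$ lies in the support of every equilibrium of the augmented game $\mathbb{S}'$, in which only player $p$'s strategy set is enlarged by $\{x^p\}$.

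I would argue by contradiction. Suppose $\sigma'$ is an equilibrium of $\mathbb{S}'$ with $x^p \notin \supp(\sigma'^p)$. The first observation is that $\sigma'$ is then a legitimate mixed profile of the \emph{original} sampled game $\mathbb{S}$: since $x^p$ receives zero probability we have $\supp(\sigma'^p) \subseteq \mathbb{S}^p$, while the supports of the other players already lie in their unchanged strategy sets $\mathbb{S}^q$.

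The key step is to upgrade this to the statement that $\sigma'$ is in fact an equilibrium of $\mathbb{S}$, not merely a feasible profile. This holds because $\mathbb{S}'$ and $\mathbb{S}$ differ only in player $p$'s strategy set, which shrinks from $\mathbb{S}^p \cup \{x^p\}$ back to $\mathbb{S}^p$ when passing from $\mathbb{S}'$ to $\mathbb{S}$. Hence for every player the set of admissible unilateral deviations in $\mathbb{S}$ is contained in that admissible in $\mathbb{S}'$, so the no-deviation inequalities defining an equilibrium of $\mathbb{S}$ form a subfamily of those defining an equilibrium of $\mathbb{S}'$. Since $\sigma'$ satisfies all of the latter it satisfies all of the former, and therefore $\sigma'$ is an equilibrium of $\mathbb{S}$. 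Invoking the uniqueness hypothesis then forces $\sigma' = \sigma$.

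This yields the contradiction that closes the argument: because $\sigma' = \sigma$ is an equilibrium of $\mathbb{S}'$ and $x^p$ is an available strategy for player $p$ in $\mathbb{S}'$, the equilibrium condition gives $\Pi^p(\sigma) = \Pi^p(\sigma') \geq \Pi^p(x^p, \sigma'^{-p}) = \Pi^p(x^p, \sigma^{-p})$, directly contradicting $\Pi^p(x^p,\sigma^{-p}) > \Pi^p(\sigma)$. Hence no equilibrium of $\mathbb{S}'$ can omit $x^p$ from its support, establishing alternative~2. I do not expect a genuinely hard obstacle here; the one point demanding care is the downgrade step, namely verifying both that an equilibrium of the larger game $\mathbb{S}'$ avoiding $x^p$ is a bona fide profile of $\mathbb{S}$ and that its equilibrium inequalities transfer intact. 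Everything else is a clean contradiction driven by the uniqueness assumption. As an implicit sanity check, one notes that the profitability of the deviation already guarantees $x^p \notin \mathbb{S}^p$, so $\mathbb{S}'$ is genuinely larger than $\mathbb{S}$.
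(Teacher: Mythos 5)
Your proof is correct and follows essentially the same route as the paper's: assume an equilibrium of the augmented game $\mathbb{S}'$ that omits $x^p$ from its support, observe that it is then an equilibrium of the original sampled game $\mathbb{S}$, and contradict the uniqueness hypothesis. The only cosmetic difference is that you invoke uniqueness to identify that equilibrium with $\sigma$ and then contradict the profitable deviation, whereas the paper notes directly that it must differ from $\sigma$ (since $\sigma$ is no longer an equilibrium once $x^p$ is available) and so contradicts uniqueness; the content is identical.
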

\begin{proof}
	Suppose $\sigma$ is not an equilibrium of the {\IPG}. Then, by the definition of equilibrium, there is a player, say player $p$, with incentive to unilaterally deviate to some $x^p \in X^p\setminus\mathbb{S}^p$. By contradiction, assume that there is an equilibrium $\tau$ in $\mathbb{S}'$ such that $x^p$ is played with zero probability (it is not in the support of $\tau$). First, $\tau$ is different from $\sigma$ because now $\mathbb{S}'$ contains $x^p$. Second, $\tau$ is an equilibrium for the game restricted to $\mathbb{S}$, contradicting the fact that $\sigma$ was its unique equilibrium.\xqed
\end{proof}

In this way, if in an iteration of {\SGM} the sampled game has an unique {\NE}, in the subsequent iteration, we can prune the support enumeration search of {\PNS} by forcing the new strategy added to be in the support of the {\NE} to be computed. Note that it might occur that in the consecutive sampled games there is more than one {\NE} and thus, an equilibrium with the new added strategy in the support may fail to exist (Theorem~\ref{the:uniqueNE} does not apply). Therefore, backtracking is introduced so that a previously processed sampled game can be revisited and its support enumeration continued in order to find a new {\NE} and to follow a promising direction in the search. In  Algorithm~\ref{Algorithm:NashCuttingPlane_Improved},  {\modSGM} is described. The subroutines called by it are described in Table~\ref{Table:specialized_algorithms} and can be defined independently. We will propose an implementation of them in Section~\ref{subsec:scpecialized_functions}. 

\begin{algorithm}[htbp!] \footnotesize
	\textbf{Input: }An {\IPG} instance and $\varepsilon\geq0$. \\
	\textbf{Output: } $\varepsilon$-equilibrium, last sampled game and number of the last sampled game.\\
	\DontPrintSemicolon
	
	\nlset{Step 1}\textbf{Initialization:}\; 
	$\mathbb{S}=\prod_{p=1}^m \mathbb{S}^p \leftarrow$ {\color{red}$Initialization(IPG)$}\;  \label{initial_step_unique_equilibrium}
	$k \leftarrow 0$ \;
	set $\mathbb{S}_{dev_k}$, $\mathbb{S}_{dev_{k+1}}$ and $\mathbb{S}_{dev_{k+2}}$ to be $\prod_{p=1}^m \emptyset$ \;
	$\sigma_k \leftarrow (1,\ldots,1)$ is Nash equilibrium of the current sampled game $\mathbb{S}$ \;
	$list \leftarrow$   {\color{red}$PlayerOrder(\mathbb{S}_{dev_0}, \ldots,\mathbb{S}_{dev_k})$}  \;  \label{initial_step}
	
	\nlset{Step 2}\textbf{Termination:}\;
	\While{$list$ non empty}{
		$p \leftarrow list.pop()$ \;
		$x(k+1) \leftarrow \color{red} DeviationReaction(p,\sigma_k^{-p},\Pi^p (\sigma_k),\varepsilon,IPG)$ \;
		\If{ $\Pi^p(\sigma_k) +\varepsilon< \Pi^p(x(k+1), \sigma^{-p}_k)$}{
			go to Step 3}
	}
	\Return $\sigma_k$, $\mathbb{S}$, $k$\; \label{verify_NE_step}
	
	\nlset{Step 3}\textbf{Generation of next sampled game:} \;
	$k \leftarrow k+1$ \;
	$\mathbb{S}_{dev_{k}}^p \leftarrow \mathbb{S}_{dev_{k}}^p \cup \lbrace x(k) \rbrace$ \;
	$\mathbb{S}^p \leftarrow \mathbb{S}^p \cup \lbrace x(k) \rbrace$ \;
	$\mathbb{S}_{dev_{k+2}} \leftarrow \prod_{p=1}^m \emptyset$\;  \label{algorithm_forward}

	\nlset{Step 4}\textbf{Solve sampled game $k$:} \;
	$Sizes_{ord} \leftarrow SortSizes(\sigma_0,\ldots,\sigma_{k-1})$\;
	$Strategies_{ord} \leftarrow SortStrategies(\mathbb{S}, \sigma_0,\ldots,\sigma_{k-1})$ \;
	$\sigma_k \leftarrow$ {\color{red}{\PNS}$_{adaptation}(\mathbb{S},x(k),\mathbb{S}_{dev_{k+1}},Sizes_{ord},Strategies_{ord})$} \;
	\If{ {\PNS}$_{adaptation}(\mathbb{S},x(k),\mathbb{S}_{dev_{k+1}},Sizes_{ord},Strategies_{ord} )$ fails to find equilibrium}{
		$\mathbb{S} \leftarrow \mathbb{S} \backslash \mathbb{S}_{dev_{k+1}}$ \;
		remove from memory $\sigma_{k-1}$ and $\mathbb{S}_{dev_{k+2}}$ \;
		$k \leftarrow k-1$ \;
		go to Step 4 (backtrack)\; 
	}
	\Else{
		$list \leftarrow$ {\color{red}$PlayerOrder(\mathbb{S}_{dev_0}, \ldots,\mathbb{S}_{dev_k})$}\;
		go to Step 2
	}  \label{last_step4}
	\caption{Modified {\SGM} ({\modSGM}).} \label{Algorithm:NashCuttingPlane_Improved}
\end{algorithm}

{\renewcommand{\arraystretch}{2}
	\begin{table} \tiny
		\centering
		\begin{tabularx}{\textwidth}{|lX|}
			\hline
			ALGORITHMS &  DESCRIPTION\\[1ex]
			\hline 
			$Initialization(IPG)$ & Returns sampled game of the {\IPG} with one feasible strategy for each player. \\ [1ex]
			\hline 
			$PlayerOrder(\mathbb{S}_{dev_0}, \ldots,\mathbb{S}_{dev_k})$ & Returns a list of the players order that takes into account the algorithm history.  \\ [1ex]
			\hline 
			$DeviationReaction(p,\sigma_k^{-p},\Pi^p (\sigma_k),\varepsilon,IPG)$ & If there is $x^p \in X^p$ such that $\Pi^p(x^p,\sigma_k^{-p})>\Pi^p(\sigma_k)+\varepsilon$, returns $x^p$; otherwise, returns any player $p$'s feasible strategy.  \\ [1ex]
			\hline 
			$SortSizes(\sigma_0,\ldots,\sigma_{k-1})$ & Returns an order for the support sizes enumeration that takes into account the algorithm history. \\ [1ex]
			\hline 
			$SortStrategies(\mathbb{S}, \sigma_0,\ldots,\sigma_{k-1})$  &  Returns an order for the players' strategies in $\mathbb{S}$ that takes into account the algorithm history.  \\ [1ex]
			\hline 
			{\PNS}$_{adaptation}(\mathbb{S},x(k),\mathbb{S}_{dev_{k+1}},Sizes_{ord},Strategies_{ord})$ & Applies {\PNS} in order to return a Nash equilibrium $\sigma$ of the sampled game $\mathbb{S}$ of the {\IPG} such that $x(k) \in supp(\sigma)$ and $\mathbb{S}_{dev_{k+1}} \cap supp(\sigma) = \emptyset$; It makes the support enumeration according with $Sizes_{ord}$ and $Strategies_{ord}$. \\ [1ex]
			\hline
		\end{tabularx} 
		\caption{Specialized subroutines. The algorithm history keeps record of the order in which strategies where added to the sampled game.}
		\label{Table:specialized_algorithms}
	\end{table}
}

Figure~\ref{fig:modified_sgm_sampledgames} illustrates {\modSGM}.
\begin{figure}[t]
	\centering
	\begin{tikzpicture}[->,>=stealth',shorten >=1pt,auto,scale=0.5,node distance=4.5cm,
		thick,main node/.style={scale=0.8,rectangle,draw,white,font=\sffamily\Large\bfseries}]
		
		\node[main node] (1) {\colorbox{blue!20}{$\quad$ \hspace{1.3cm}} };
		\node[main node] (2) [below of=1] {
			\begin{tabular}{c}
				\colorbox{blue!20}{$\quad$ \hspace{1.3cm}} \\ 
				\colorbox{pink}{\color{black}  $\mathbb{S}_{dev_1}$ \tiny \colorbox{pink!20}{\color{black} $x(1)$}}
			\end{tabular}
		};
		\node[main node] (3) [below of=2] {
			\begin{tabular}{c}
				\colorbox{blue!20}{$\quad$  \hspace{1.3cm}} \\
				\colorbox{pink}{ \color{black}  $\mathbb{S}_{dev_1}$ \hspace{0.7cm}} \\ 
				\colorbox{green!40}{ \color{black}  $\mathbb{S}_{dev_2}$ \tiny \colorbox{green!20}{\color{black} $x(2)$}} \\ 
			\end{tabular}  
		};
		\node[main node]  (4) [right of=2] {
			\begin{tabular}{c}
				\colorbox{blue!20}{$\quad$  \hspace{1.3cm}} \\
				\colorbox{pink}{ \color{black}  $\mathbb{S}_{dev_1}$ \hspace{0.7cm}} \\ 
				\colorbox{green!40}{ \color{black}  $\mathbb{S}_{dev_2}$ \hspace{0.5cm} } \\ 
			\end{tabular}  
		};
		\node[main node] (5) [below of=4] {
			\begin{tabular}{c}
				\colorbox{blue!20}{$\quad$  \hspace{1.3cm}} \\
				\colorbox{pink}{ \color{black}  $\mathbb{S}_{dev_1}$ \hspace{0.7cm}} \\ 
				\colorbox{green!40}{ \color{black}  $\mathbb{S}_{dev_2}$  \hspace{0.7cm} \tiny \colorbox{green!20}{\color{black} $x(2)$}} \\ 
			\end{tabular}       
		};
		\node[main node] (6) [below of=5] {
			\begin{tabular}{c}
				\colorbox{blue!20}{$\quad$  \hspace{1.3cm}} \\
				\colorbox{pink}{  \color{black} $\mathbb{S}_{dev_1}$ \hspace{0.7cm}} \\ 
				\colorbox{green!40}{ \color{black}  $\mathbb{S}_{dev_2}$  \hspace{1.4cm}} \\ 
				\colorbox{yellow!40}{\color{black}   $\mathbb{S}_{dev_3}$  \hspace{0.7cm} \tiny \colorbox{yellow!20}{\color{black} $x(3)$}}\\
			\end{tabular}       
		};
		\node[main node] (7) [right of=5] {
			\begin{tabular}{c}
				\colorbox{blue!20}{$\quad$  \hspace{1.3cm}} \\
				\colorbox{pink}{\color{black}   $\mathbb{S}_{dev_1}$ \hspace{0.7cm}} \\ 
				\colorbox{green!40}{ \color{black}  $\mathbb{S}_{dev_2}$  \hspace{1.5cm}} \\ 
				\colorbox{yellow!40}{ \color{black}  $\mathbb{S}_{dev_3}$ \hspace{1.5cm}}\\
			\end{tabular}         
		};
		\node[main node] (8) [right of=4] {
			\begin{tabular}{c}
				\colorbox{blue!20}{$\quad$  \hspace{1.3cm}} \\
				\colorbox{pink}{\color{black}   $\mathbb{S}_{dev_1}$ \hspace{0.7cm}} \\ 
				\colorbox{green!40}{  \color{black}  $\mathbb{S}_{dev_2}$  \hspace{1.5cm}} \\ 
			\end{tabular}      
		};
		\node[main node] (9) [right of=8]{\color{black} \footnotesize Sampled game 1};
		\node[main node] (10) [above of=9]{\color{black}\footnotesize Sampled game 0};
		\node[main node] (11) [below of=9]{\color{black}\footnotesize Sampled game 2};
		\node[main node] (12) [below of=11]{\color{black}\footnotesize Sampled game 3};

		\path[every node/.style={font=\sffamily\small}]
		(1) edge node {$x(1)$} (2)
		(2) edge node {$x(2)$} (3)
		(3) edge node [right,near start] {\tiny $backtracking$} (4)
		(4) edge node  {$x(2)$} (5)
		(5) edge node {$x(3)$} (6)
		(6) edge node [right,near start] {\tiny $backtracking$} (7)
		(7) edge node {\tiny $backtracking$} (8);
	\end{tikzpicture}
	\caption{Illustration of the sampled games generated by modified {\SGM} during its execution.}
	\label{fig:modified_sgm_sampledgames}
\end{figure}
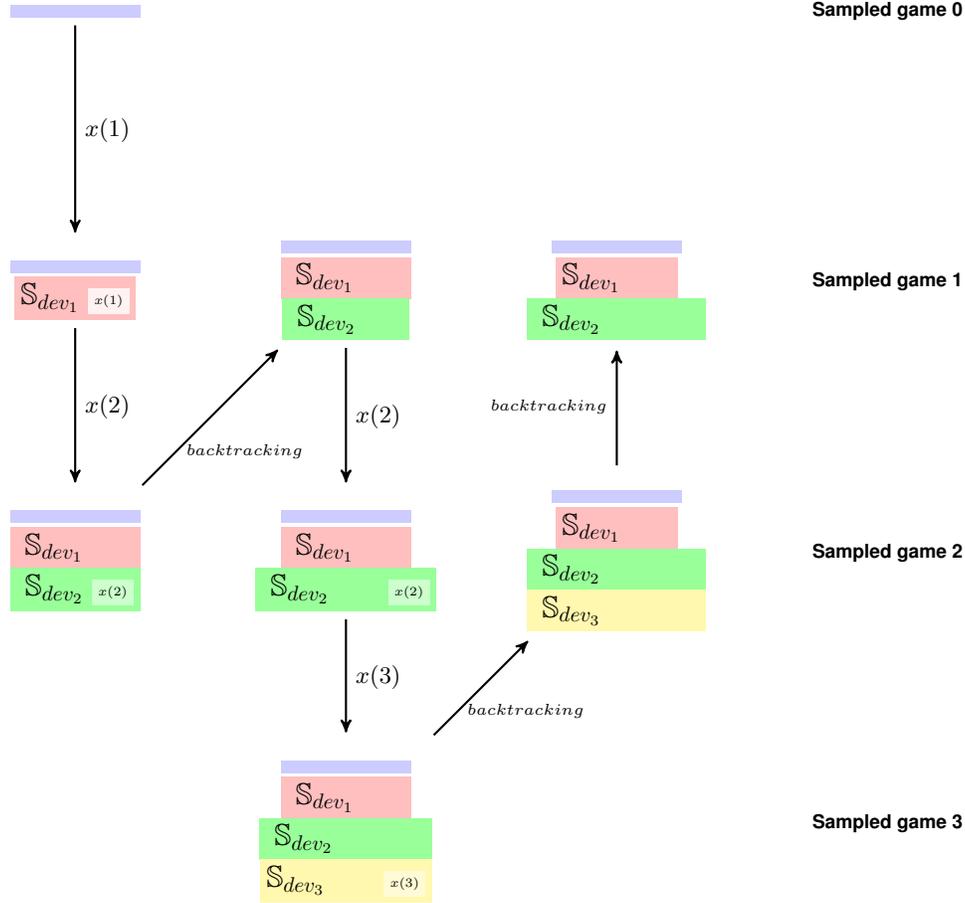
Fundamentally, whenever  {\modSGM} moves \emph{forward} (Step 3) a new strategy $x(k+1)$ is added to the sampled game $k$ that is expected to be in the support of the equilibrium of that game (due to Theorem~\ref{the:uniqueNE}). For the sampled game $k$, if the algorithm fails to compute an equilibrium with $x(k)$ in the support and $\mathbb{S}_{dev_{k+1}}$ not in the supports (see ``if'' part of Step 4) the algorithm \emph{backtracks}: it \emph{revisits} the sampled game $k-1$ with $\mathbb{S}_{dev_{k}}$ added, so that no equilibrium is recomputed. It is crucial for the correctness of the {\modSGM} that it starts from a sampled game of the {\IPG} with an unique equilibrium. To this end, the initialization determines one feasible solution for each player.
See example \ref{example_backtracking} in the Appendix \ref{app:example_backtracking} to clarify the application of {\modSGM}.

Next,  {\modSGM}  correctness will be proven.
\begin{lemma}
	In the {\modSGM}, the sampled game $0$ is never revisited.
	\label{lemma:uniqueNE_not_revisited}
\end{lemma}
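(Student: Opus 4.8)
The plan is to argue through the single mechanism by which {\modSGM} can return to an earlier sampled game. First I would record the structural fact that makes sampled game $0$ special: by the $Initialization$ subroutine it contains exactly one feasible strategy per player, so it admits a unique Nash equilibrium, namely the pure profile $\sigma_0=(1,\ldots,1)$ in which every player plays her single strategy with probability one.

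Next I would isolate the only event that could constitute a revisit. Inspecting Algorithm~\ref{Algorithm:NashCuttingPlane_Improved}, the method returns to a lower-indexed sampled game solely through the backtracking branch of Step~4 (the ``if {\PNS}$_{adaptation}$ fails'' block ending at line~\ref{last_step4}), which decrements $k$. Since sampled game $0$'s equilibrium is fixed at initialization rather than computed, reaching Step~4 with $k=0$ can happen only by backtracking out of $k=1$. Hence sampled game $0$ is revisited if and only if, at some moment with $k=1$, {\PNS}$_{adaptation}$ fails to produce an equilibrium of the current $\mathbb{S}$ having the entry strategy $x(1)$ in its support and the recorded set $\mathbb{S}_{dev_2}$ out of it. It therefore suffices to show that {\PNS}$_{adaptation}$ never fails while $k=1$.

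The core of the argument is Theorem~\ref{the:uniqueNE}. Sampled game $1$ is obtained from sampled game $0$ by adding a single deviating strategy $x(1)$ for one player $p$; since sampled game $0$ has a unique equilibrium, Theorem~\ref{the:uniqueNE} guarantees that $x(1)$ lies in the support of \emph{every} equilibrium of sampled game $1$. Moreover sampled game $1$ itself has a unique equilibrium: the only player with a choice, player $p$, strictly prefers $x(1)$ to her original strategy against the fixed single strategies of the others (this strict inequality is precisely the deviation condition that caused $x(1)$ to be added), and the finite game possesses at least one equilibrium by Theorem~\ref{them_existence}. Consequently, whenever $k=1$ the target demanded of {\PNS}$_{adaptation}$ --- an equilibrium with $x(1)$ in the support --- provably exists, so the search succeeds, the backtracking branch with $k=1$ is never taken, and sampled game $0$ is never revisited.

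The step I expect to be the main obstacle is the \emph{re-entry} to level $1$ after a backtrack from sampled game $2$, where {\PNS}$_{adaptation}$ is re-invoked at $k=1$ with a \emph{nonempty} forbidden set $\mathbb{S}_{dev_2}$. I must check that forbidding these recorded level-$2$ strategies, via $\mathbb{S}_{dev_2}\cap\supp(\sigma)=\emptyset$, cannot eliminate the equilibrium whose existence Theorem~\ref{the:uniqueNE} secures. The clean reason is that the unique equilibrium of sampled game $1$ places no probability on any level-$2$ strategy, so this constraint is automatically met by the very equilibrium being targeted; making it rigorous reduces to verifying the $\mathbb{S}_{dev}$ bookkeeping invariant, namely that backtracking restores $\mathbb{S}$ to sampled game $1$ (equivalently, that the forbidden level-$2$ strategies are no longer playable) before the call is issued. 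This interaction between the exclusion set and Theorem~\ref{the:uniqueNE} is the only delicate point; everything else is a direct reading of the algorithm.
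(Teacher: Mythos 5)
Your reduction of the lemma to ``{\PNS}$_{adaptation}$ never fails while $k=1$'' is a reasonable framing, and your observations that sampled game $0$ has a unique equilibrium and that sampled game $1$ also has a unique equilibrium (the deviating player strictly prefers $x(1)$ against the opponents' single fixed strategies) are both correct. The first visit to level $1$ is indeed unproblematic by Theorem~\ref{the:uniqueNE}. The gap is exactly where you predicted it would be, and your proposed resolution of it is wrong. When the algorithm backtracks from sampled game $2$ to sampled game $1$, it does \emph{not} restore $\mathbb{S}$ to sampled game $1$: Step~4 only removes $\mathbb{S}_{dev_{k+1}}$, so the revisited level-$1$ game still contains the strategies $\mathbb{S}_{dev_2}$ (this is precisely the content of Lemma~\ref{lemma_revisiting_increase}, which records a \emph{strict increase} in size upon revisiting). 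The excluded strategies remain in the Feasibility Problem as deviation constraints~\eqref{eq:NEmax} --- the appendix example states explicitly that they ``only appear in the Feasibility Problem in order to avoid the repetition of equilibria.'' Hence {\PNS}$_{adaptation}$ at the revisited level $1$ must return an equilibrium of the \emph{enlarged} game that merely keeps $\mathbb{S}_{dev_2}$ out of its support. The unique equilibrium $\sigma_1$ of the original sampled game $1$ does not qualify: $x(2)$ was added precisely because some player profitably deviates from $\sigma_1$ to $x(2)$, so $\sigma_1$ violates~\eqref{eq:NEmax} for the enlarged game. The equilibrium whose existence you invoke therefore does not exist at the re-entered level $1$, and your argument that the call ``succeeds'' collapses.

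The correct resolution --- and the route the paper takes --- is to show that this re-entry situation can never arise, i.e., that a chain of consecutive backtrackings cannot propagate down to sampled game $0$. The paper's proof is global: it considers the first revisited game $j$ in the chain that had a unique equilibrium with $x(j)$ in its support and a multi-equilibrium successor, and argues via Theorem~\ref{the:uniqueNE} that every equilibrium of each subsequent sampled game must contain one of the newly added strategies in its support, so at every level above $j$ the required equilibrium exists and the backtracking chain stalls before reaching $j$ (hence before reaching $0$). Your local argument at level $1$ cannot substitute for this, because whether level $1$ is ever revisited depends on what happens at levels $2, 3, \ldots$, which your proof does not control.
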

\begin{proof}
	If the sampled game $0$ is revisited, it would be because the algorithm backtracks. Suppose that at some sampled game $k>0$, the algorithm consecutively backtracks up to the sampled game $0$. Consider the first sampled game $j<k$ that is revisited in this consecutive bactracking such that the last time that it was built by the algorithm it had an unique equilibrium where $x(j)$ was in the support and its successor, sampled game $j+1$, had multiple equilibria. By Theorem~\ref{the:uniqueNE}, when the algorithm moves forward from this sampled game $j$ to $j+1$, all its equilibria have $x(j+1)$ in their support. Therefore, at this point, the {\modSGM} successfully computes an equilibrium and moves forward. The successor, sampled game $j+2$, by construction, has 
	at least 
	one equilibrium and all its equilibria must have $x(j+1)$ or $x(j+2)$ in the supports. Thus, either the algorithm \emph{(case 1)} backtracks to the sampled game $j+1$ or \emph{(case 2)} proceeds to the sampled game $j+3$.  
	In  case 1, the algorithm successfully computes an equilibrium with $x(j+1)$ in the support and without $x(j+2)$ in the support, since the backtracking proves that there is no equilibrium with $x(j+2)$ in the support and, by construction, the sampled game $j+1$ has multiple equilibria. Under case 2, the same reasoning holds: the algorithm will backtrack to the sampled game $j+2$ or move forward to the sampled game $j+3$. In this way, because of the multiple equilibria in the successors of sampled game $j$,  the algorithm will never be able to return to the sampled game $j$ and thus, to the sampled game $0$. \xqed
\end{proof}

Observe that when a sampled game $k-1$ is revisited, the algorithm only removes the strategies $\mathbb{S}_{dev_{k+1}}$ from the current sampled game $k$ (``if'' part of Step 4). This means that in comparison with the last time that the algorithm built the sampled game $k-1$, it has the additional strategies $\mathbb{S}_{dev_{k}}$.  Therefore, there was a strict increase in the size of the sampled game $k-1$.

\begin{lemma}
	There is a strict increase in the size of the sampled game $k$ when the {\modSGM} revisits it.
	\label{lemma_revisiting_increase}
\end{lemma}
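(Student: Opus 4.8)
The plan is to track the deviation sets $\mathbb{S}_{dev_j}$ through a single forward/backward cycle of Algorithm~\ref{Algorithm:NashCuttingPlane_Improved} and to exhibit one concrete strategy that survives the backtracking, which already forces the strict increase. First I would pin down, in terms of the control flow, what it means to \emph{revisit} sampled game $k$: the current index can only be brought back down to $k$ through the ``if'' branch of Step~4 executed while the index equals $k+1$, i.e.\ by backtracking from sampled game $k+1$. Hence between the last build of sampled game $k$ and its revisit, the algorithm must have performed the forward move of Step~3 from $k$ to $k+1$ and later backtracked from $k+1$ to $k$.

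Next I would isolate the only two operations that alter $\mathbb{S}$ across this cycle at the relevant index. The forward move of Step~3 taking the algorithm from $k$ to $k+1$ executes $\mathbb{S}_{dev_{k+1}}^p \leftarrow \mathbb{S}_{dev_{k+1}}^p \cup \{x(k+1)\}$ and $\mathbb{S}^p \leftarrow \mathbb{S}^p \cup \{x(k+1)\}$, while the backtrack from $k+1$ to $k$ removes from $\mathbb{S}$ only the set $\mathbb{S}_{dev_{k+2}}$. Since $\mathbb{S}_{dev_{k+1}}$ and $\mathbb{S}_{dev_{k+2}}$ are distinct sets, $x(k+1)\in\mathbb{S}_{dev_{k+1}}$ is not deleted by the backtrack and is therefore present in $\mathbb{S}$ at the revisit. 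This is precisely the observation stated just before the lemma, here attached to the concrete strategy $x(k+1)$.

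It then remains to argue that $x(k+1)$ was absent from $\mathbb{S}$ at the previous build of sampled game $k$. For this I would use that $DeviationReaction$ returns $x(k+1)=x^p$ as a \emph{profitable} deviation, $\Pi^p(x(k+1),\sigma_k^{-p})>\Pi^p(\sigma_k)+\varepsilon$, from the equilibrium $\sigma_k$ of that sampled game; because $\sigma_k$ is a Nash equilibrium of the finite game on $\mathbb{S}$, no strategy already in $\mathbb{S}^p$ can be a profitable deviation, so $x(k+1)\notin\mathbb{S}^p$ at that moment. Combining the two steps, the sampled game $k$ at the revisit contains $x(k+1)$ whereas at the previous build it did not, which gives the strict increase in size.

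The main obstacle is purely a bookkeeping one: I must be sure that the single removal of $\mathbb{S}_{dev_{k+2}}$ genuinely restores $\mathbb{S}$ to ``the previous sampled game $k$ plus $\mathbb{S}_{dev_{k+1}}$'', i.e.\ that every strategy introduced at levels strictly above $k+1$ has already been deleted by the time control returns to index $k$. I would settle this with the invariant that each individual backtrack from level $j$ to $j-1$ deletes exactly $\mathbb{S}_{dev_{j+1}}$, so that a consecutive chain of backtracks descending to $k+1$ has already removed $\mathbb{S}_{dev_{k+3}},\mathbb{S}_{dev_{k+4}},\dots$ along the way, leaving only $\mathbb{S}_{dev_{k+2}}$ for the final backtracking step; this confirms that the net change to $\mathbb{S}$ over the cycle is the addition of $\mathbb{S}_{dev_{k+1}}\ni x(k+1)$, reducing the lemma to the retained-strategy argument above.
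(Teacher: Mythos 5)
Your argument is correct and follows essentially the same route as the paper, whose justification for this lemma is the observation immediately preceding it: the backtrack from sampled game $k+1$ removes only $\mathbb{S}_{dev_{k+2}}$, so the revisited game $k$ retains the strategies in $\mathbb{S}_{dev_{k+1}}$ that were not present at its previous build. You in fact supply two details the paper leaves implicit -- that $x(k+1)$ could not already lie in $\mathbb{S}^p$ because it is a profitable deviation from a Nash equilibrium of the sampled game (which also gives the disjointness of $\mathbb{S}_{dev_{k+1}}$ and $\mathbb{S}_{dev_{k+2}}$ that your word ``distinct'' glosses over), and the bookkeeping invariant for chains of backtracks -- so the proposal is, if anything, more complete than the original.
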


\begin{corollary}
	If $X$ is nonempty and  bounded, then in a finite number of steps, {\modSGM}  computes
	\begin{enumerate}
		\item an equilibrium if all players' decision variables are integer;
		\item an $\varepsilon$-equilibrium with $\varepsilon>0$, if each player $p$'s payoff function is Lipschitz continuous in $X^p$.
	\end{enumerate}  
	\label{coro:correctness}
\end{corollary}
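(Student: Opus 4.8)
The plan is to treat this as the backtracking analogue of Theorem~\ref{theorem_converge_equilibrium_sgm}: the termination test in Step~2 of Algorithm~\ref{Algorithm:NashCuttingPlane_Improved} is identical to the one used by {\SGM}, so the moment the algorithm halts the returned $\sigma_k$ is, verbatim by the argument in the proof of Theorem~\ref{theorem_converge_equilibrium_sgm}, an {\NE} in case~1 and an $\varepsilon$-equilibrium in case~2. Hence the only genuinely new thing to establish is that {\modSGM} performs finitely many iterations, which is delicate precisely because backtracking can remove strategies and revisit previously constructed sampled games. I would organise everything around one global bound on the size of any sampled game together with the two structural lemmas already proved.

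First I would fix a uniform bound $N^p$ on $|\mathbb{S}^p|$ valid throughout the whole execution. In case~1, since $X^p$ is a bounded set of integer points it is finite, so $N^p := |X^p|$ works trivially. In case~2 I would reuse the separation estimate from the proof of Theorem~\ref{theorem_converge_equilibrium_sgm}: whenever $DeviationReaction$ returns a genuine deviation $x^p$, Lipschitz continuity forces $\| x^p - \hat{x}^p \| > \varepsilon / L^p$ for every $\hat{x}^p$ currently in $\mathbb{S}^p$, so the elements of any $\mathbb{S}^p$ are pairwise $\tfrac{\varepsilon}{L^p}$-separated; a bounded set contains only finitely many such points, giving a finite cap $N^p$. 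In either case $\sum_{p} |\mathbb{S}^p| \le N := \sum_{p} N^p$ at every step.

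Next I would bound the number of iterations using Lemmas~\ref{lemma:uniqueNE_not_revisited} and~\ref{lemma_revisiting_increase}. A forward move (Step~3) strictly increases $\sum_{p} |\mathbb{S}^p|$ by one, so at most $N$ forward moves can occur along any branch before the size is maximal. For backtracking, the key monovariant is the sampled-game size at each fixed level $k$ (the game reached after $k$ successful forward additions): by Lemma~\ref{lemma_revisiting_increase} every time the search revisits the game of a given level its size is strictly larger than at the previous visit, and since that size never exceeds $N$, each level can be visited at most $N$ times; Lemma~\ref{lemma:uniqueNE_not_revisited} anchors the recursion by forbidding any revisit of level~$0$, so the chain of consecutive backtracks is well founded. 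Combining the two bounds, the tree of sampled games explored by {\modSGM} is finite, hence the algorithm performs finitely many iterations and must eventually reach the return in Step~2.

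The hard part will be the bookkeeping in the backtracking argument: I must make precise what ``the sampled game at level $k$'' means across backtracks and verify that the strict-increase guarantee of Lemma~\ref{lemma_revisiting_increase} applies to each revisit, not merely to the first. I would also have to confirm that the failure branch of Step~4 never strands the algorithm, \ie\ that whenever {\modSGM} moves forward a retrievable equilibrium still exists; this is exactly what Theorem~\ref{the:uniqueNE} guarantees, and it is what justifies pruning the support enumeration to supports containing $x(k)$. Once finiteness and non-failure are in hand, the equivalence of the Step~2 stopping test with the definition of an ($\varepsilon$-)equilibrium closes both cases.
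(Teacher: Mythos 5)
Your proposal is correct and follows essentially the same route as the paper's proof: the Step~2 test certifies the output, Lemma~\ref{lemma:uniqueNE_not_revisited} rules out failure at sampled game $0$, Lemma~\ref{lemma_revisiting_increase} gives strict size growth on every revisit, and the finiteness of the whole search is then closed by the counting argument of Theorem~\ref{theorem_converge_equilibrium_sgm}. You merely make explicit the uniform cap $N^p$ on $|\mathbb{S}^p|$ (finiteness of $X^p$ in case~1, pairwise $\varepsilon/L^p$-separation in case~2) that the paper leaves implicit in the phrase ``applying the reasoning of Theorem~\ref{theorem_converge_equilibrium_sgm}''.
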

\begin{proof}
	The \textbf{while} of  Step 2 ensures that when the algorithm stops, it returns an equilibrium (case 1) or $\varepsilon$-equilibrium (case 2). Since by Lemma~\ref{lemma:uniqueNE_not_revisited} the algorithm does not revisit sampled game $0$, it does not run into an error\footnote{By construction sampled game $0$ as a unique equilibrium.}. Moreover, if the algorithm is moving forward to a sampled game $k$, there is a strict increase in the size from the sampled game $k-1$ to $k$. Likewise, if the algorithm is revisiting a sampled game $k$, by Lemma~ \ref{lemma_revisiting_increase}, there is also a strict increase with respect to the previous sampled game $k$. Therefore, applying the reasoning of Theorem~\ref{theorem_converge_equilibrium_sgm} proof, {\modSGM} will compute an equilibrium (case 1) or $\varepsilon$-equilibrium (case 2) in a finite number of steps.\xqed
\end{proof}

%

Algorithm {\modSGM} is initialized with a sampled game that contains one strategy for each player which ensures that its equilibrium is unique. However, note that in our proof of the algorithm correctnes,s any initialization with a sampled game with a unique equilibrium is valid. Furthermore, {\modSGM} might be easily adapted in order to be initialized with a sampled game containing more than one {\NE}. In the adaptation, backtracking to the sampled game $0$ can occur and thus, the {\PNS} support enumeration must be total, this is, all {\NE} of the sampled game $0$ must be feasible. The fundamental reasoning is similar to the one of the proof of Lemma \ref{lemma:uniqueNE_not_revisited}: if there is backtracking up to the initial sampled game $0$, it is because it must contain an {\NE} not previously computed, otherwise the successors would have successfully computed one.

\subsection{Extensions: correlated equilibria}\label{sec:extensions}

The {\SGM} framework can be easily adapted to determine other game solution concepts.
 For example, one may aim to compute a well-supported $\varepsilon$-equilibrium, \ie, a profile of strategies $\sigma \in \Delta$ where each player pure strategy in the support is an $\varepsilon$-best response. This would require to simply change \ref{step:search_eq} of {\SGM} that computes an equilibrium of the sampled game to compute a well-supported  $\varepsilon$-equilibrium.  Concretely, if one is using {\PNS}, in Problem~\eqref{ExampleMoreEquilibria}, we would replace Constraints~\eqref{eq:indif} and~\eqref{eq:NEmax} by
\begin{subequations}
	\begin{alignat}{4}
		\Pi^p(\hat{x}^p,\sigma)  \geq & \Pi^p(x^p,\sigma)-\varepsilon \quad &\forall p \in M, \forall \hat{x}^p \in \supp(\sigma^p), \forall x^p \in \mathbb{S}^p.
	\end{alignat}
	\label{WellSuppEquilibrium}%
\end{subequations}

Alternatively, in a two-player {\IPG}, one could aim to compute a $\frac{1}{2}$-equilibrium to take advantage of the existence of a linear time algorithm to compute a  $\frac{1}{2}$-equilibrium for  normal-form games, reducing considerably the computational time of {\SGM} as the most costly step will be replaced by a linear time algorithm~\citep{Daskalakis2006}.

Another important solution concept is the one of correlated equilibrium, introduced in Section~\ref{sec:problemStat}.  The main factor that distinguishes these two definitions is that in correlated equilibria are not restricted to independent distributions for each player. In fact, the set of correlated equilibria contains the set of Nash equilibria. This difference considerably decreases the difficulty of determining correlated equilibria. For instance, compare the expected payoffs~\eqref{expected_payoff_sigma_discrete} and~\eqref{expected_payoff_tau_discrete}: the first is highly non-linear in $\sigma \in \Delta$, while the second is linear in $\tau \in \Delta(X)$.
This is the motivation behind PNS which by fixing a potential support for a Nash equilibrium already eliminates a term from the non-linearity in~\eqref{expected_payoff_sigma_discrete}. On the other hand,  $ \prod \sigma^i(x^i)$ is replaced by a unique probability $\sigma(x)$ when we consider correlated equilibria. Correlated equilibria can be interpreted as a third party signaling the players on what they should do; this is a reasonable assumption in many applications where players have access to news, historical behavior, etc.  In the framework of {\SGM}, \ref{step:search_eq} and \ref{verify_NE_step_SGM} must be changed in order to compute a {\CE} for an {\IPG}.

In \ref{step:search_eq}, we must compute a {\CE} of the sampled game. Mathematically, $\tau \in \Delta(\mathbb{S})$ is a correlated equilibrium of a sampled game $\mathbb{S}$ if
\begin{subequations}
	\begin{alignat}{4} 
		\sum_{x \in \mathbb{S}: x^p=\bar{x}^p} \Pi^p(x) \tau(x)\geq & \sum_{x \in \mathbb{S}: x^p=\bar{x}^p}  \Pi^p(\hat{x}^p,x^{-p})  \tau(x)\ \ &\forall p \in M,& \forall \bar{x}^p,\hat{x}^p \in  \mathbb{S}^p  \label{eq:co_max}\\
		\sum_{x \in \mathbb{S}}\tau(x)  = &1 \\
		\tau(x)\geq& 0  & \forall x \in \mathbb{S}.
	\end{alignat}
	\label{Correlated_LP}%
\end{subequations}
Note that all constraints are linear. Hence, we can even  add a linear objective function allowing to compute the associated optimal correlated equilibrium without increasing this step time complexity. For example, one could compute the correlated equilibrium that maximizes the social welfare
$$ \sum_{p \in M} \Pi^p(\tau)=\sum_{p \in M}  \sum_{x \in \mathbb{S}}  \Pi^p(x) \tau(x). $$

In \ref{verify_NE_step_SGM}, the instructions inside the cycle for player $p$ are also modified. For each $\bar{x}^p \in \mathbb{S}^p$, we must solve
\begin{equation}
\Pi^p_*=\max_{\hat{x}^p \in X^p} \sum_{x \in \mathbb{S}: x^p=\bar{x}^p}  \Pi^p(\hat{x}^p,x^{-p}) \tau(x),
\label{Problem:Correlated}
\end{equation}
\ie, compute player $p$ best response to $\tau$ when she is ``advised' by the third party to play $\bar{x}^p$.
If $\Pi^p_* > \sum_{x \in \mathbb{S}: x^p=\bar{x}^p} \Pi^p(x) \tau(x) $, then Constraint~\eqref{eq:co_max} is not satisfied for the {\IPG} and hence, the computed  strategy by solving Problem~\eqref{Problem:Correlated} must be added to the sampled game. In fact, it is easy to see that we can reduce this verification step to the $\bar{x}^p \in \supp(\tau)$. 

Once a correlated equilibrium to {\IPG} has been obtained, we also verify whether it gives origin to a Nash equilibrium:

\begin{definition}
For $\tau \in \Delta(X)$, a $\tau$-based Nash equilibrium is a Nash equilibrium  $\sigma$ where for each player $p \in M$, $\supp(\sigma^p) \subseteq \lbrace \bar{x}^p\in \mathbb{S}^p: \sum_{x \in \mathbb{S}: x^p=\bar{x}^p} \tau(x) >0 \rbrace $ and $\Pi^p(\sigma)=\Pi^p(\tau)$\footnote{Since $\sigma$ is a Nash equilibrium, $\Pi^p(\sigma)=\Pi^p(\tau)$ is equivalent to $\Pi^p(x^p,\sigma^{-p})=\Pi^p(\tau), \forall x^p \in \supp(\sigma^p)$. We use this in the feasibility problem to compute $\tau$-based Nash equilibria when $\tau$ is a correlated equilibrium.}.
\label{def:ce_based_nash}
\end{definition}

Finally, we note that in~\cite{Steinthesis} (Theorem 3.3.6), it is shown that for separable games there is a {\CE} described by a finitely supported distribution.

\section{Computational investigation}\label{sec:Computational_investigation}

Section~\ref{subsec:games_study} presents the three (separable) simultaneous {\IPG}s, the knapsack game, the kidney exchange game and the competitive  lot-sizing game, in which {\SGM} and  {\modSGM} will be tested for the computation of {\NE} and {\CE}.  In Section~\ref{subsec:scpecialized_functions}, our implementations of  the specific components in Table~\ref{Table:specialized_algorithms} are described, which have practical influence in the algorithms'	performance. Our algorithms are validated in Section~\ref{subsec:computational_results} by computational results on instances of the three presented {\IPG}s. Our instances and implementations are publicly available\footnote{\url{https://github.com/mxmmargarida/IPG}}.

\subsection{Case studies}\label{subsec:games_study}
Next, the three games in which we test our algorithms are described: the knapsack game, the simplest purely integer programming game that one could devise, the kidney exchange game and the competitive lot-sizing game whose practical applicability is discussed.  

\subsubsection{Knapsack game. } One of the most simple and natural {\IPG}s would be one with each player's payoff function being linear in her variables. This is our main motivation to analyze the knapsack game. Under this setting, each player $p$ aims to solve
\begin{subequations}
	\begin{alignat}{4}
		&   \max_{x^p \in \lbrace 0,1\rbrace ^n}  && \sum_{i=1}^n v^p_ix^p_i+  \sum_{k=1, k\neq p}^m \sum_{i=1}^{n} c^p_{k,i} x^p_i x^k_i  \label{KPA:1a_gen}\\[0.4ex]
		&\mbox{\ \ s. t.}  &&\sum_{i=1}^n w_i^p x_i^p \leq W^p.       \label{KPA:1b_gen}
	\end{alignat}
	\label{KPA}%
\end{subequations}
The parameters of this game are integer (but are not required to be non-negative).
This model can describe situations where $m$ entities aim to decide in which of $n$ projects to invest such that each entity budget constraint \eqref{KPA:1b_gen} is satisfied and the associated payoffs are maximized \eqref{KPA:1a_gen}. The second summation in the payoff~\eqref{KPA:1a_gen} can describe a benefit, $c_{k,i}^p > 0$, or a penalization, $c_{k,i}^p <  0$, when both player $p$ and player $k$ invest in project $i$; note also that since all variables are binary $\left(x_{i}^p\right)^2 = x_{i}^p$; player $p$'s payoff function is linear in $x^p$. This means that in our algorithms, when we verify if a player has incentive to deviate from her current strategy, the variables $x^{-p}$ are fixed, and thus, the best reaction corresponds to an integer linear program.

In the knapsack game, each player $p$'s set of strategies $X^p$ is bounded, since she has at most $2^n$ feasible strategies. Therefore, by Theorem~\ref{lemma_finitelysupported}, it suffices to study finitely supported  equilibria. Since payoffs are linear, through the proof of Theorem~\ref{lemma_finitelysupported}, we deduce that the bound on the equilibria supports for each player is $n+1$. We can sightly improve this bound using  basic polyhedral theory (see \cite{Wolsey1988}).
First, note that a player $p$'s optimization problem is linear in her variables, implying  her set of pure optimal strategies to a fixed profile of strategies $\sigma^{-p} \in \Delta^{-p}$ to be in a facet of  $\conv(X^p)$. Second, the part in the payoffs of player $p$'s opponents that depends on player $p$'s strategy, only takes into account the expected value of $x^p$. The expected value of $x^p$ is a convex combination of player $p$'s pure strategies. Thus, putting together these two observations, when player $p$ selects an optimal mixed strategy $\sigma^p$ to $\sigma^{-p}$, the expected value of $x^p$ is in a facet of $\conv(X^p)$. A facet of $\conv(X^p)$ has dimension $n-1$, therefore, by Carath\'eodory's theorem~\citep{Bertsekas2003}, any point of this facet can be written as a convex combination of $n$ strategies of $X^p$. Thus,
\begin{lemma}
	Given an equilibrium $\sigma$ of the knapsack game, there is an equilibrium $\tau$ such that $\vert supp(\tau^p) \vert \leq n$ and $\Pi^p(\sigma)=\Pi^p(\tau)$,  for each $p=1,\ldots,m$.
	\label{lemma:support_size_knapsack_game}
\end{lemma}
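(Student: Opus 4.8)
The plan is to exploit the linearity of each player's payoff in her own (binary) variables and to reduce the statement to a Carath\'eodory-type argument applied, player by player, to the face of $\conv(X^p)$ on which the support of $\sigma^p$ lives. This sharpens the generic bound $n+1$ obtained from Theorem~\ref{lemma_finitelysupported}.

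First I would record the key consequence of the binary restriction: since $\left(x_i^p\right)^2 = x_i^p$, player $p$'s payoff~\eqref{KPA:1a_gen} is an affine function of $x^p$ once $x^{-p}$ is fixed, and because the interactions are bilinear and the distributions independent, the expected payoff $\Pi^q(\sigma)$ of every player $q$ depends on $\sigma$ only through the vector of marginal means $\bar{x}^j := \sum_{x^j} \sigma^j(x^j)\, x^j \in \conv(X^j)$, $j \in M$. Consequently, if I can replace each $\sigma^p$ by a mixed strategy $\tau^p$ that (i) has the same mean $\bar{x}^p$ and (ii) is supported on pure best responses to $\sigma^{-p}$, then all payoffs are preserved ($\Pi^q(\tau)=\Pi^q(\sigma)$ for every $q$) and, since best-response sets depend only on the opponents' marginals, $\tau$ remains a Nash equilibrium. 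This reduces the lemma to a single-player statement about rewriting $\bar{x}^p$ economically.

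Next I would carry out the per-player construction. Fix $p$ and let $\alpha^p \in \mathbb{R}^n$ denote the coefficient vector of player $p$'s linear payoff given $\sigma^{-p}$, so her best responses are the maximizers of $(\alpha^p)^\top x^p$ over $\conv(X^p)$. Since $\sigma^p$ is a best response, every pure strategy in $\supp(\sigma^p)$ attains this maximum, hence $\bar{x}^p$ lies in the optimal face $F^p := \arg\max_{x \in \conv(X^p)} (\alpha^p)^\top x$. Whenever the induced objective is non-constant, $F^p$ is a proper face of the polytope $\conv(X^p) \subseteq \mathbb{R}^n$, so $\dim F^p \le n-1$ (a facet in the typical case). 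Applying Carath\'eodory's theorem within the affine hull of $F^p$ then writes $\bar{x}^p$ as a convex combination of at most $\dim F^p + 1 \le n$ vertices of $F^p$; these are vertices of $\conv(X^p)$, i.e. pure strategies of $X^p$, each maximizing $(\alpha^p)^\top x^p$ and hence a best response. Setting $\tau^p$ to assign the Carath\'eodory weights to these strategies gives $\vert \supp(\tau^p) \vert \le n$ with mean $\bar{x}^p$ and a support of best responses, exactly as required. Doing this for every $p$ and combining with the reduction above yields the claimed $\tau$.

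The main obstacle I anticipate is the degenerate case in which player $p$'s induced objective $(\alpha^p)^\top x^p$ is constant over $\conv(X^p)$: then $F^p$ is the whole (possibly $n$-dimensional) polytope and Carath\'eodory only guarantees $n+1$ strategies, so the bound $n$ is not immediate. Here the player is indifferent among all her strategies, which grants freedom to relocate $\bar{x}^p$, but any such move perturbs the opponents' payoffs and must be reconciled with the payoff-equivalence and equilibrium requirements; this is the point that needs careful separate treatment (for instance, relocating $\bar{x}^p$ onto a facet of $\conv(X^p)$ while re-establishing the opponents' best responses, or arguing that such total indifference can be assumed away). I expect the remaining ingredients---the reduction to marginals and the facet plus Carath\'eodory step---to be routine once this case is settled.
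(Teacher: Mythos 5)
Your proof follows essentially the same route as the paper: observe that all payoffs depend on $\sigma^p$ only through its marginal mean, that this mean lies on the optimal face of $\conv(X^p)$ for the induced linear objective, and apply Carath\'eodory within that (at most $(n-1)$-dimensional) face to obtain a support of size at most $n$. The degenerate case you flag---where the induced objective is constant on a full-dimensional $\conv(X^p)$, so the optimal face is the whole polytope and Carath\'eodory only yields $n+1$---is silently passed over in the paper, which simply asserts that the optimal pure strategies lie in a facet; your write-up is, if anything, more careful than the published argument on exactly this point.
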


\subsubsection{Two-player kidney exchange game} \label{sec:KEG_game}
\cite{Roth_Sonmez_Unver_2005_a} introduced a kidney exchange game between transplantation centers. We recover the 2-player kidney exchange game version of \citet{Carvalho2016}. In this game, there are two countries with their own kidney exchange programs (KEPs), \ie, a national system that allows patients in a need of a kidney transplant to register with an incompatible donor, and to perform donor exchanges. Mathematically, KEPs are represented by directed graphs where each vertex is an incompatible patient-donor pair and an arc from a vertex $v$ to a vertex $u$ means that the donor of pair $v$ is compatible with the patient of pair $u$. The goal of a KEP is to maximize the patients benefit by finding disjoint cycles in the graph, \ie, feasible kidney exchanges. In the kidney exchange game, countries try to optimize their patients benefit by joining their KEP pools. In this game, countries $A$  and $B$ aim to solve the following problems:
\begin{subequations}
	\begin{alignat}{5}
		(Country_A)&\maxim_{x^A \in \{ 0,1\}^{\vert  C^A  \vert+ \vert  I \vert} }  &&  \ \ \sum_{c \in C^A} w_c^A x_c^A + \sum_{c \in I} w_c^A x_c^A x_c^B\label{objCountryGeneral_A} \\[0.4ex]
		&\mbox{subject to~~}  &&\sum_{c \in C^A: i \in c} x_c^A+\sum_{c \in I: i \in c} x_c^A  \leq 1 \quad  \forall i \in V^A,      \label{PatientOneGeneral_A}
	\end{alignat}
	\label{NKEGproblem_A}
\end{subequations}
\begin{subequations}
	\begin{alignat}{5}
		(Country_B)&\maxim_{x^B \in \{ 0,1\}^{\vert  C^B  \vert+ \vert  I \vert} }  &&  \ \ \sum_{c \in C^B} w_c^B x_c^B + \sum_{c \in I} w_c^B x_c^A x_c^B\label{objCountryGeneral_B} \\[0.4ex]
		&\mbox{subject to~~}  &&\sum_{c \in C^B: i \in c} x_c^B+\sum_{c \in I: i \in c} x_c^B  \leq 1 \quad  \forall i \in V^B,      \label{PatientOneGeneral_B}
	\end{alignat}
	\label{NKEGproblem_B}%
\end{subequations}
where $V^p$ and $C^p$ are the set of incompatible patient-donor pairs and feasible cycles of country $p \in \{A,B\}$, $w_c^p$ is the benefit of the patients from country $p \in \{A,B\}$ in cycle $c$, and $I$ is the set of feasible international cycles. Essentially, countries directly decide their national-wide exchanges, while international exchanges require them both to agree. In~\cite{Carvalho2016}, the international exchange decisions are taken instead by an independent agent that maximizes the overall benefit over the available pairs for international exchanges:
\begin{subequations}
	\begin{alignat}{5}
	&\maxim_{y\in \{ 0,1\}^{\vert  I \vert} }  &&  \ \ \sum_{c \in I} (w_c^A+w_c^B) y_c\label{obj_IA} \\[0.4ex]
		&\mbox{subject to~~}  && \sum_{c \in I: i \in c} y_c  \leq 1 - \sum_{c \in C^A: i \in c} y_c^A -\sum_{c \in C^B: i \in c} y_c^B \quad  \forall i \in V^A  \cup V^B,    \label{PatientOneGeneral_IA}
	\end{alignat}
	\label{NKEG_IA}%
\end{subequations}
with $y^A \in \{0,1\}^{\vert  C^A  \vert}$ and  $y^B \in \{0,1\}^{\vert  C^B  \vert}$ corresponding to the internal exchanges selected by country A and B, respectively. Here, countries A and B simultaneously choose their internal exchanges, and afterwards the independent agent selects the international exchanges. In our {\IPG}~\eqref{NKEGproblem_A}-\eqref{NKEGproblem_B}, countries A and B simultaneously choose their internal exchanges and the international exchanges they would like to take place. Without this simplification in our {\IPG}, each country would have in its constraints the independent agent optimization, rendering even the computation of a player best reaction computationally hard~\citep{smeulders2020}. 
 In our setting, there is no independent agent decision. Instead, there is a direct agreement of the countries in the international exchanges (bilateral terms in the payoffs). In fact, we can prove that the obtained game contains the pure Nash equilibria of the original game in~\cite{Carvalho2016}.

\begin{lemma}
Any pure Nash equilibrium of~\cite{Carvalho2016} has an equivalent pure Nash equilibrium in the {\IPG} described by Problems~\eqref{NKEGproblem_A} and~\eqref{NKEGproblem_B}, in the sense that countries internal strategies coincide, as well as, their agreed international exchanges (thus,  payoffs are maintained). 
\label{lem:KEG}
\end{lemma}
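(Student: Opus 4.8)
The plan is to construct, from a pure Nash equilibrium of the game in~\cite{Carvalho2016}, an explicit pure profile for the {\IPG}~\eqref{NKEGproblem_A}--\eqref{NKEGproblem_B} and to verify the three required properties directly. Write the original equilibrium as the pair of internal choices $(y^A,y^B)$ together with the international exchange $y^*$ selected by the independent agent~\eqref{NKEG_IA} in response to $(y^A,y^B)$ (fixing one optimal agent response when it is not unique). I would set $x^A=(y^A,y^*)$ and $x^B=(y^B,y^*)$, that is, each country keeps its internal cycles and \emph{both} countries select exactly the international cycles activated by the agent. Feasibility is then routine: for $i\in V^A$ the agent's constraint in~\eqref{NKEG_IA} reduces to $\sum_{c\in C^A: i\in c} y^A_c+\sum_{c\in I: i\in c} y^*_c\le 1$, since internal cycles of $B$ do not touch $V^A$, and this is precisely constraint~\eqref{PatientOneGeneral_A} for $x^A$; symmetrically for $x^B$. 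Payoff matching is immediate too: as the variables are binary, $x^A_c x^B_c=(y^*_c)^2=y^*_c$ on $I$, so $A$'s payoff~\eqref{objCountryGeneral_A} equals $\sum_{c\in C^A} w^A_c y^A_c+\sum_{c\in I} w^A_c y^*_c$, its original equilibrium payoff, and likewise for $B$.

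The heart of the argument is the no-deviation condition. Consider a unilateral deviation $\hat{x}^A$ of country $A$ in the {\IPG}. Since $x^B_c=y^*_c$ on $I$, any international cycle with $y^*_c=0$ contributes $0$ to $A$'s payoff while only tightening~\eqref{PatientOneGeneral_A}, so I may assume $\hat{x}^A_c=0$ there; write $\hat{y}^A=\hat{x}^A|_{C^A}$ and $J=\{c\in I: y^*_c=1,\ \hat{x}^A_c=1\}$. The {\IPG} payoff of the deviation is $\sum_{c\in C^A} w^A_c\hat{y}^A_c+\sum_{c\in J} w^A_c$, and $\hat{y}^A$ is a feasible internal strategy in the original game. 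Interpreting $\hat{y}^A$ as a deviation there, the agent re-optimizes to some $y^{**}$, and the original equilibrium condition gives
\begin{equation*}
\sum_{c\in C^A} w^A_c\hat{y}^A_c+\sum_{c\in I} w^A_c y^{**}_c \ \le\ \sum_{c\in C^A} w^A_c y^A_c+\sum_{c\in I} w^A_c y^*_c .
\end{equation*}
Chaining this inequality with the hypothetical profitability of $\hat{x}^A$ forces $\sum_{c\in J} w^A_c > \sum_{c\in I} w^A_c y^{**}_c$. Hence, to rule out a profitable deviation, it suffices to prove the key inequality
\begin{equation*}
\sum_{c\in J} w^A_c \ \le\ \sum_{c\in I} w^A_c y^{**}_c .
\end{equation*}

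The main obstacle is exactly this inequality, and it is where the structure of the kidney exchange game must be used. I would first show that $J$ is a \emph{feasible} international packing given $(\hat{y}^A,y^B)$: on $V^A$ this follows from feasibility of $\hat{x}^A$, and on $V^B$ from $J\subseteq\supp(y^*)$ together with feasibility of $y^*$ with $y^B$ at the original equilibrium. Thus the agent could have chosen $J$, and by optimality $\sum_{c\in I}(w^A_c+w^B_c)y^{**}_c \ge \sum_{c\in J}(w^A_c+w^B_c)$. The delicate point is that the agent maximizes the \emph{joint} benefit $w^A_c+w^B_c$, while country $A$ cares only about the $A$-component $w^A_c$, so this welfare bound does not by itself yield the key inequality. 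Closing the gap relies on the benefit structure of the two-player KEP: a realized international cycle benefits the participating countries symmetrically (e.g. pairwise international exchanges give one transplant to each side), so that the agent's welfare-maximizing selection is simultaneously benefit-maximizing for each individual country over feasible international packings; this gives $\sum_{c\in I} w^A_c y^{**}_c \ge \sum_{c\in J} w^A_c$ and hence the key inequality.

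Assembling these pieces shows that no unilateral deviation of $A$ is profitable in the {\IPG}; the argument for $B$ is identical by symmetry. Together with the feasibility and payoff-matching checks, this establishes that the constructed profile $\big((y^A,y^*),(y^B,y^*)\big)$ is a pure Nash equilibrium of the {\IPG}~\eqref{NKEGproblem_A}--\eqref{NKEGproblem_B} whose internal strategies and agreed international exchanges coincide with those of the original equilibrium, and whose payoffs are preserved, as claimed.
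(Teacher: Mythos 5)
Your construction and overall plan coincide with the paper's: both build the profile $\big((y^A,y^*),(y^B,y^*)\big)$, check feasibility and payoff preservation, and reduce the equilibrium property to ruling out unilateral deviations of a single country. Your decomposition of a deviation into an internal change $\hat{y}^A$ plus a subset $J\subseteq\supp(y^*)$ of the agreed international cycles, and the chain of inequalities showing that a profitable deviation would force $\sum_{c\in J} w^A_c > \sum_{c\in I} w^A_c y^{**}_c$, are correct and in fact make explicit a step that the paper's proof handles only with the one-line assertion that an improvement available in the {\IPG} ``would have also been true'' in the original game.

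The gap is in how you close the key inequality. After obtaining the joint-welfare bound $\sum_{c\in I}(w^A_c+w^B_c)y^{**}_c \ge \sum_{c\in J}(w^A_c+w^B_c)$ from the agent's optimality, you convert it into the $A$-component bound by claiming that international cycles benefit the two countries symmetrically, so that the agent's welfare-maximizing selection is simultaneously benefit-maximizing for each country individually. This does not hold in the generality of the lemma: the weights $w^p_c$ are arbitrary, and the setting (and the paper's experiments) allows international cycles of length $3$, in which one country contributes two pairs and the other only one. Two conflicting international cycles $c_1,c_2$ with $(w^A_{c_1},w^B_{c_1})=(2,1)$ and $(w^A_{c_2},w^B_{c_2})=(1,2)$ have equal joint weight, so the agent may select $c_2$ while the feasible packing $J=\{c_1\}$ gives country $A$ strictly more; hence $\sum_{c\in J}w^A_c \le \sum_{c\in I}w^A_c y^{**}_c$ does not follow from the agent's optimality. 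Your argument is therefore only valid when $w^A_c/w^B_c$ is constant over $I$ (e.g., pairwise exchanges with unitary weights), whereas the paper stresses that its reasoning does not depend on restricting the cycle length. Note that the paper's own proof never routes through the re-optimized agent response $y^{**}$: it argues that, with $x^B$ fixed at $(y^B,y^*)$, only the internal component of $A$'s payoff can improve and that such an improvement transfers back to the original game --- leaving that transfer at the level of an assertion rather than the explicit inequality you isolate, which is exactly where the tie-breaking difficulty you encountered lives.
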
 

\begin{proof}
Let $(\hat{y}^A,\hat{y}^B,\hat{y})$ be a pure Nash equilibrium   of the game in~\cite{Carvalho2016}. We claim that $(\hat{x}^A,\hat{x}^B)$ with $\hat{x}^p = (\hat{y}^p,\hat{y})$ for $p \in \{A,B\}$ is a pure Nash equilibrium of the {\IPG} described by Problems~\eqref{NKEGproblem_A} and~\eqref{NKEGproblem_B}. To see this, we show that player A has no incentive to deviate from   $(\hat{x}^A,\hat{x}^B)$; for player B the reasoning is completely analogous. First, note that any deviation from the international exchanges in $y$ can only decrease player A's benefit as player B has agreed exactly on the international exchanges $\hat{y}$. Hence, we just need to consider deviations from the internal exchanges $\hat{y}^A$ (and eventually, replace some ones by zeros in $\hat{y}$ for international exchanges that become unavailable). However, if player A can increase the benefit of internal exchanges in the {\IPG}, this would have also been true in the game in~\cite{Carvalho2016}. 
\end{proof}

Three important remarks must be stressed. First, the result above does not hold for (general) mixed equilibria. However, as the experiments will show, we always determine pure equilibria which have the practical value of being simpler to implement. Second, the opposite direction of the lemma does not hold: if only cross-border exchanges exist, \ie, $C^A=C^B=\emptyset$, then  $(x^A,x^B)=(\textbf{0},\textbf{0})$ is an equilibrium of the {\IPG}, while in~\cite{Carvalho2016}, the independent agent would select at least one cross-borde exchange. Third, in the proof of Lemma~\ref{lem:KEG}, we did not use in our reasoning the length of the cycles in $C^A$ and $C^B$.  This is particularly interesting since in~\cite{Carvalho2016}, only the properties of the game when there are cycles of length 2 were characterized. Thus, our methodology allow us to go beyond this element; in practice, most countries consider cycles of length 2 and 3~\citep{BIRO2019}.

Although, our {\IPG} formulation of the kidney exchange game avoids loosing pure equilibria of the original game~\cite{Carvalho2016}, this game suffers from the existence of multiple pure Nash equilibria: for any player B's strategy $x^B$, there is a player A's best response $x^A$ where only a subset of the international exchanges  $c \in I$ with $x^B_c=1$ are selected; the same holds with the players' roles inverted; hence, $(x^A, \bar{x}^B)$ where $\bar{x}^B$ is player B's best response to $x^A$ is a Nash equilibrium. Motivated by this we decided to use the concept of maximal Nash equilibrium.

\begin{definition}
In an {\IPG} where all players variables are restricted to take binary values, a pure strategy $x^p$ for $p \in M$ is called maximal, if $(x^p_1,\ldots,x^p_{j-1},1,x^p_{j+1},\ldots,x^p_{n_p}) \notin X^p$ for $j=1,\ldots, n_p$ with $x^p_j=0$. A Nash equilibrium $\sigma \in \Delta$ is maximal if for each player $p \in M$, each $x^p \in \supp(\sigma^p)$ is maximal. 
\end{definition}


 \begin{lemma}
For the {\IPG} described by Problems~\eqref{NKEGproblem_A} and~\eqref{NKEGproblem_B}, any Nash equilibrium of it restricted to maximal strategies is a Nash equilibrium of the game without this restriction. Moreover, the pure Nash equilibria of~\cite{Carvalho2016} are contained on the equilibria of this restricted game.
\end{lemma}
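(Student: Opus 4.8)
The plan is to prove the two claims separately. The statement asserts: (i) any Nash equilibrium of the restricted game (where players only use maximal strategies) is a Nash equilibrium of the full game, and (ii) the pure Nash equilibria of~\cite{Carvalho2016} are contained in the equilibria of this restricted game. The key structural fact I would exploit throughout is that in this game the payoffs are \emph{monotone} in a precise sense: in the payoff~\eqref{objCountryGeneral_A}, player $A$'s benefit is $\sum_{c \in C^A} w_c^A x_c^A + \sum_{c \in I} w_c^A x_c^A x_c^B$, and since we may assume the weights $w_c^A$ are nonnegative, setting a feasible variable from $0$ to $1$ can never decrease player $A$'s payoff. This monotonicity is what links ``maximal'' strategies to best responses.

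For part (i), I would first establish the following core observation: for any fixed opponent strategy $x^{-p}$ (more generally, opponent mixed strategy $\sigma^{-p}$), there is always a best response for player $p$ that is \emph{maximal}. Indeed, given any best response $x^p$ that is not maximal, there is some $j$ with $x^p_j = 0$ that can be flipped to $1$ while preserving feasibility; by the monotonicity of the payoff this does not decrease player $p$'s payoff, so the modified strategy is also a best response. Iterating this flipping process (which terminates since there are finitely many coordinates) yields a maximal best response with payoff at least as large. Consequently, the maximum payoff player $p$ can achieve over \emph{all} of $X^p$ equals the maximum achievable over maximal strategies only. I would then argue: suppose $\sigma$ is a Nash equilibrium of the restricted game. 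Each player $p$ cannot improve by deviating to any maximal strategy. But by the observation just made, the best \emph{unrestricted} deviation is attained at a maximal strategy and hence offers no improvement either. Therefore no player can profitably deviate within the full strategy set, so $\sigma$ is a Nash equilibrium of the unrestricted game.

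For part (ii), I would invoke Lemma~\ref{lem:KEG}, which already shows that any pure Nash equilibrium $(\hat{y}^A,\hat{y}^B,\hat{y})$ of the game in~\cite{Carvalho2016} corresponds to a pure Nash equilibrium $(\hat{x}^A,\hat{x}^B)$ with $\hat{x}^p = (\hat{y}^p,\hat{y})$ of the {\IPG}~\eqref{NKEGproblem_A}--\eqref{NKEGproblem_B}. It then remains to verify that these particular equilibrium strategies are maximal, so that they survive in the restricted game. This should follow because in the original game of~\cite{Carvalho2016} the independent agent (and the countries) maximize benefit over available pairs; an equilibrium strategy that left a flippable feasible exchange unselected while the relevant weight is positive would contradict optimality. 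I would make this precise by showing that if $\hat{x}^p$ were not maximal, the corresponding omitted exchange could be added to improve (or at least not worsen) the benefit, contradicting that $(\hat{y}^A,\hat{y}^B,\hat{y})$ was an equilibrium of the original game.

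The main obstacle I anticipate is handling the \emph{shared} international variables carefully in both parts. Unlike the purely internal variables $x_c^p$ for $c \in C^p$, an international cycle $c \in I$ contributes to player $p$'s payoff only through the product $x_c^A x_c^B$, so flipping $x_c^p$ from $0$ to $1$ changes $p$'s payoff only if the opponent also selects $c$; otherwise it is neutral. Thus the monotone-flipping argument must be stated so that flipping such a coordinate is always weakly beneficial (which holds, since the term is either $0$ or $+w_c^p \geq 0$), and I must confirm that flipping does not violate the packing constraints~\eqref{PatientOneGeneral_A} for vertices already covered. The definition of maximal strategy precisely allows a flip only when feasibility is preserved, so the argument goes through, but this coupling between players via the international cycles is the delicate point that needs explicit care. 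A secondary subtlety is that part (i) as stated concerns general (possibly mixed) equilibria of the restricted game, so the best-response/maximality observation must be phrased for expected payoffs against $\sigma^{-p}$ rather than against a pure $x^{-p}$; since the expected payoff~\eqref{expected_payoff_sigma_discrete} is linear in the product structure, the monotonicity in player $p$'s own binary variables is preserved, and the argument extends without difficulty.
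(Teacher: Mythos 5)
Your proposal is correct and follows essentially the same route as the paper: part (i) is the monotone-flipping argument (a profitable unrestricted deviation can be made maximal without decreasing the payoff, contradicting equilibrium in the restricted game, using nonnegativity of the weights $w_c^p$), and part (ii) invokes Lemma~\ref{lem:KEG} and then checks that the internal and international selections in a pure equilibrium of the original game are already maximal. The extra care you flag about the shared international variables and about phrasing the argument against mixed $\sigma^{-p}$ is sound and only makes explicit what the paper leaves implicit.
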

\begin{proof}
Let $\sigma$ be a Nash equilibrium of the restricted {\IPG} game. If it is not a Nash equilibrium of the {\IPG}, then w.l.o.g. country A has incentive to deviate. This deviation must be to a non-maximal strategy $\bar{x}^A$. Note that we can make $\bar{x}^A$ maximal by  changing its 0 entries to 1 until the strategy becomes infeasible. Note that making $\bar{x}^A$ maximal does not damage on country $A$'s payoff. This contradicts the fact that $\sigma$ was a Nash equilibrium of the restricted game.

For the second part of the lemma, start by noting that in a pure equilibrium $(y^A,y^B)$ of~\cite{Carvalho2016}, each player $p$ is certainly selecting a maximal set of internal cycles, \ie, no entry with $y_c^p=0$ for $c \in C^p$ can become 1 without violating feasibilty. Furthermore, in~\cite{Carvalho2016}, there is an agent that maximizes the overall benefit of international exchanges once players have decided their internal cycles. Hence, this agent is also selecting a maximal set of international exchanges. Therefore, by Lemma~\ref{lem:KEG},  $(y^A,y^B)$ can be converted in an equilibrium of the {\IPG} described by Problems~\eqref{NKEGproblem_A} and~\eqref{NKEGproblem_B}.
\end{proof}

In this way, we restrict our experiments to maximal strategies (and consequently, maximal equilibria) for the kidney exchange game in an attempt to improve social welfare outcomes, avoiding \emph{dominated} equilibria. Before proceeding to the next section, we observe that in~\cite{Carvalho2016}, it was show that when cycles are restricted to length 2, the game is potential and it was conjectured a  potential function when the  cycles length is restricted to 3. In appendix~\ref{app:no_potential}, we show a negative answer to the conjectured function.

\subsubsection{Competitive lot-sizing game. } \label{sec:competitive_lotsizing_game}
The competitive lot-sizing game~\citep{CarvalhoTelhaVyve} is a Cournot competition played through $T$ periods by a set of firms (players) that produce the same good. Each firm has to plan its production as in the lot-sizing problems (see \cite{Pochet:2006}) but, instead of satisfying a known demand in each period of the time horizon, the demand depends on the total quantity of the produced good that exists in the market. Each firm $p$ has to decide how much will be produced in each time period $t$ (production variable $x^p_t$) and how much will be placed in the market (variable $q^p_t$). There are set-up and variable (linear) production costs, upper limit on production quantities, and a producer can build inventory (variable $h^p_t$) by producing in advance. In this way, we obtain the following model
for each firm $p$:
\begin{subequations}
	\begin{alignat}{5}
		&  \max_{y^p,x^p, q^p, h^p }  \ \     \sum_{t=1}^T (a_t-b_t \sum_{j=1}^m q_t^j)q_t^p - \sum_{t=1}^T F_t^p y_t^p  - \sum_{t=1}^T C^p_t x_t^p - \sum_{t=1}^T H^p_t h_t^p \label{ULSG_obj}\\[0.4ex]
		&\mbox{subject to~~}   x^p_t+h_{t-1}^p = h^p_t+q^p_t   \quad  \textrm{ for } t=1, \ldots , T \label{ULSG_stock} \\
		& \hspace{2.1cm}  0\leq x_t^p \leq M^p_ty_t^p   \hspace{1.1cm} \textrm{ for } t=1, \ldots , T    \label{ULSG_capacity} \\
		& \hspace{2.1cm}  h^p_0=h^p_T=0       \label{General_Initial_end} \\
		& \hspace{2.1cm}  y_t^p \in \lbrace 0,1 \rbrace    \hspace{1.6cm} \textrm{ for } t=1, \ldots , T    \label{ULSG_binary}
	\end{alignat}
	\label{ULSG}%
\end{subequations}
where $F_t^p$ is the set-up cost, $C^p_t$ is the variable cost, $H^p_t$ is the inventory cost and $M^p_t$ is the production capacity for period $t$; $a_t-b_t \sum_{j=1}^m q_t^j $ is the unit market price. The payoff function \eqref{ULSG_obj} is firm $p$'s total profit; constraints \eqref{ULSG_stock} model product conservation between periods; constraints \eqref{ULSG_capacity} ensure that the quantities produced are non-negative and whenever there is production  ($x^p_t > 0$), the binary variable $y^p_t$ is set to 1 implying the payment of the setup cost $F_t^p$. 

Each firm $p$'s payoff function~\eqref{ULSG_obj} is quadratic in $q^p$ due to the term $\sum_{t=1}^T -b_t(q_t^p)^2$. Next, we show that it satisfies the Lipschitz condition which guarantees that our algorithms compute an $\varepsilon$-equilibrium in finite time. Noting that player $p$ does not have incentive to select $q_t^p> \frac{a_t}{b_t}$ (since it would result in null market price), we get
\begin{subequations}
	\begin{alignat*}{4}
		&|\sum_{t=1}^T b_t(q_t^p)^2 - \sum_{t=1}^T b_t(\hat{q}_t^p)^2| & =&|\sum_{t=1}^T b_t \left( (q_t^p)^2 - (\hat{q}_t^p)^2 \right) | \\[0.4ex]
		&                                                                   & = & |\sum_{t=1}^T b_t \left( (q_t^p)+ (\hat{q}_t^p) \right) \left( (q_t^p)- (\hat{q}_t^p) \right) | \\
		&                                                                   & \leq & \sqrt{\sum_{t=1}^T b_t^2 \left( (q_t^p)+ (\hat{q}_t^p) \right)^2 } \sqrt{\sum_{t=1}^T \left( (q_t^p)- (\hat{q}_t^p) \right)^2 } \\
		&                                                                   &  \leq & \sqrt{\sum_{t=1}^T b_t^2 \left( \frac{2a_t}{b_t}\right)^2 }\cdot \parallel q^p -\hat{q}^p \parallel \\
		&                                                                   &  \leq & \sqrt{\sum_{t=1}^T 4a_t^2 }\cdot \parallel q^p -\hat{q}^p \parallel.
	\end{alignat*}
\end{subequations}
In the third step, we used Cauchy--Schwarz inequality. In the fourth inequality, we use the upper  bound $ \frac{a_t}{b_t}$ on the quantities placed in the market.

In~\cite{CarvalhoTelhaVyve}, it was proven that there is a function that is \textit{potential} for this game; a maximum of this function is a (pure) equilibrium (recall Lemma~\ref{lem:Monderer_Shapley}). This is an additional motivation to analyze our framework in this problem: it can be compared with the maximization of the associated potential function. 


\subsection{Implementation details} \label{subsec:scpecialized_functions}
Both our implementations of the {\modSGM}  and {\SGM} use the following specialized functions.

\paragraph{$Initialization(IPG)$.} The equilibrium computed by our methods depends on their initialization as the following example illustrates. 

\begin{example}
	Consider an instance of the two-player competitive lot-sizing game with the following parameters: $T=1$, $a_1=15$, $b_1=1$, $M_1^1=M_1^2=15$, $C^1_1=C^2_1=H^1_1=H^2_1=0$, $F^1_1=F_1^2=15$. It is a one-period game, therefore the inventory variables, $h^1_1$ and $h^2_1$, can be removed and the quantities produced are equal to the quantities placed in the market (that is, $x^1_1=q^1_1$ and $x^2_1=q^2_1$). Given the simplicity of the players optimization programs~\eqref{ULSG}, we can analytically compute the players' best reactions that are depicted in Figure~\ref{best_reactions_example2}.
	
	The game possesses two (pure) equilibria: $\hat{x}=(\hat{x}^1,\hat{y}^1,\hat{x}^2,\hat{y}^2)=(0,0;7.5,1)$ and $\tilde{x}=(\tilde{x}^1,\tilde{y}^1,\tilde{x}^2,\tilde{y}^2)=(7.5,1;0,0)$. Thus, depending on the initialization of {\modSGM}, it will terminate with $\hat{x}$ or $\tilde{x}$: Figure~\ref{best_reactions_example2} depicts the convergence to $\hat{x}$ when the initial sampled game is $\mathbb{S}=\lbrace (2,1) \rbrace \times \lbrace (5,1) \rbrace$ and to $\tilde{x}$ when the initial sampled game is $\mathbb{S}=\lbrace (4,1) \rbrace \times \lbrace (1,1) \rbrace$. 
	
	\begin{figure}[th]
		\begin{center}
		\includegraphics[scale=0.4]{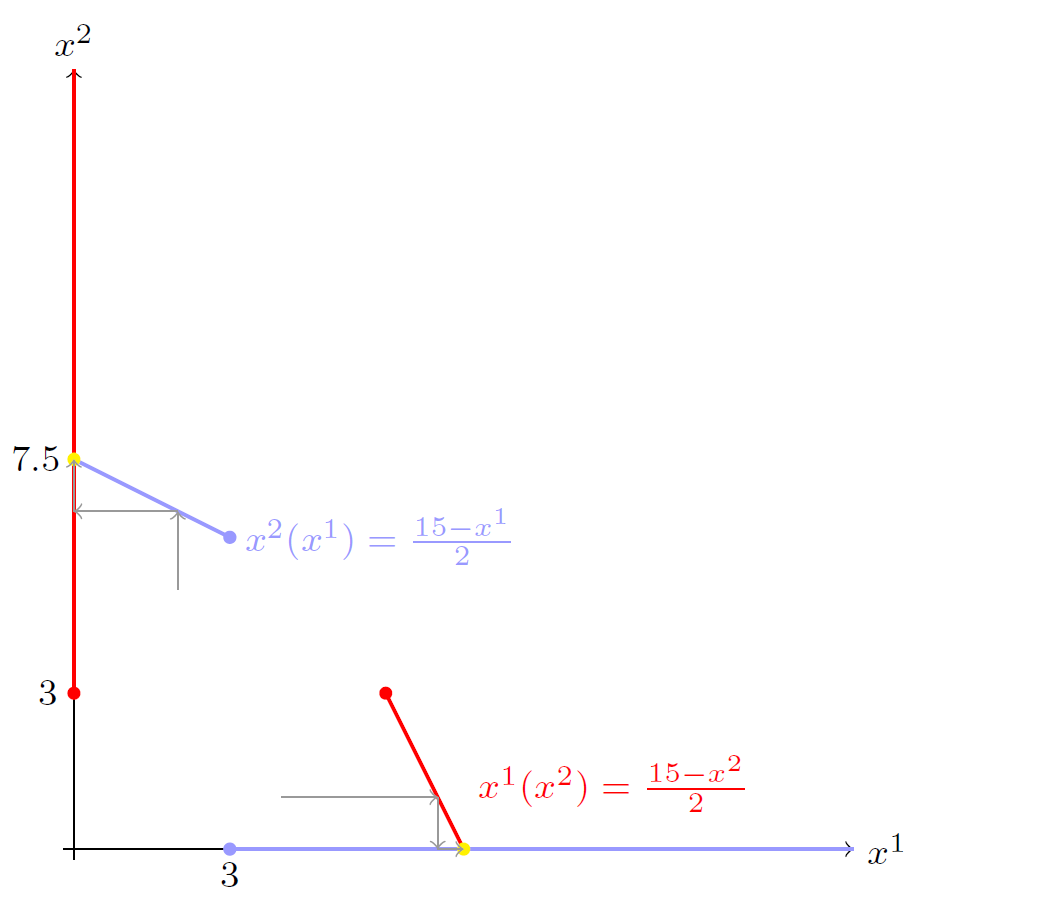}
		\end{center}
		\caption{Players' best reaction functions. Pure {\NE} are indicated in yellow. The gray arrows display the sampled game evolution for the initial games $\mathbb{S}=\lbrace (2,1) \rbrace \times \lbrace (5,1) \rbrace$ and  $\mathbb{S}=\lbrace (4,1) \rbrace \times \lbrace (1,1) \rbrace$. }
		\label{best_reactions_example2}
	\end{figure}
\end{example}

In an attempt to keep as small as possible the size of the sampled games (\ie, number of strategies explicitly enumerated), the initialization implemented computes a unique pure strategy for each player. We experimented initializing the algorithm with the social optimal strategies (strategies that maximize the total players' payoffs), pure equilibrium for the potential part of the game\footnote{We only experimented this for the knapsack game. We consider the potential part of the knapsack game, when the parameters $c^p_{k,i}$ in each player's payoff function are replaced by $\frac{1}{2} (c^p_{k,i}+c^k_{p,i})$ in player $p$'s payoff.}, and optimal strategies if the players were alone in the game (\ie, opponents' variables were set to be zero). In general, there was no evident advantage on the speed of computations for one of these initializations. This result was somehow expected, since, particularly for the knapsack game instances, it is not evident whether the game has an important coordination part (in the direction of social optimum) or an important conflict part. Therefore, our implementation initializes with the players' strategies that are optimal when they are alone in the game for the knapsack and lot-sizing game. For the kidney exchange game, the initialization does not change significantly the speed of equilibria computation but it interferes in the equilibria found which in this context is associated with the set of patients expected to receive a transplant. For this reason, the kidney exchange game is initialized with each country optimal strategy when it controls the opponents variables. This allows countries to select their preferred international exchanges, allowing them to take advantage of the joint KEP. Otherwise, if we keep the same initialization of the knapsack and lot-sizing games, we will see {\NE} with lower social welfare (\ie, total benefit for patients).

\paragraph{$PlayerOrder(\mathbb{S}_{dev_0}, \ldots,\mathbb{S}_{dev_k})$.} The equilibrium returned by our algorithms depends on the players' order when we check their incentives to deviate in the Termination steps: for the equilibrium $\sigma_k$ of the sampled game $k$, there might be more than one player with incentive to deviate from $\sigma_k$, thus the successor will depend on the player that is selected. If players' index order is considered, the algorithm may take longer to converge to an equilibrium: it will be likely that it first finds an equilibrium of the game restricted to players 1 and 2, then an equilibrium of the game restricted to players 1, 2 and 3, and so on. Thus, this implementation sorts the players by decreasing order of number of previous iterations without receiving a new strategy.

\paragraph{$DeviationReaction(p,\sigma_k^{-p},\Pi^p (\sigma_k),\varepsilon,IPG)$.} When checking if a player $p$ has incentive to deviate, it suffices to determine whether she has a strategy that strictly increases her payoff when she unilaterally deviates to it. Nowadays, there are software tools that can solve mixed integer linear and  quadratic programming problems\footnote{In the knapsack and kidney exchange games, a player's best reaction problem is an integer linear programming problem. In the competitive lot-sizing problem, a player best reaction problem is a mixed integer quadratic programming problem (it becomes continuous and concave once the binary variables $y^p$ are fixed).} effectively. Thus, our implementation solves the player $p$'s best reaction problem~\eqref{GeneralProblem} to $\sigma^{-p}_k$. We use Gurobi 9.0.0\footnote{\url{www.gurobi.com}} to solve these reaction problems.

\paragraph{$SortSizes(\sigma_0,\ldots,\sigma_{k-1})$.} \cite{Porter2008642} recommend that the support strategies' enumeration starts with support sizes ordered, first, by total size ($\sum_{p=1}^m s^p$ with $s^p$ the support size for player $p$), and, second, by a measure of balance (except, in case of a 2-players game where these criteria importance is reversed). However, in our methods, from one sampled game to its successor or predecessor, the sampled game at hand just changes by one strategy, and thus we expect that the equilibria will not change too much either (in particular, the support sizes of consecutive sampled games are expected to be close). Therefore, 
our criterion to sort the support sizes is by increasing order of:
\begin{description}
	\item[For $m = 2$:]  first, balance, second, maximum player's support size distance to the one of the previously computed equilibrium,  third, maximum player's support size distance to the one of the previously computed equilibrium  plus 1  and, fourth, sum of the players' support sizes; 
	
	\item[For $m \geq 3$:] first, maximum player's support size distance to the one of the previously computed equilibrium, second, maximum player's support size  distance to the one of the previously computed equilibrium plus 1, third, sum of the players' support sizes and, fourth, balance
	.
\end{description}
For the initial sampled game, the criteria coincide with {\PNS}.

\paragraph{$SortStrategies(\mathbb{S}, \sigma_0,\ldots,\sigma_{k-1})$.} Following the previous reasoning, the strategies of the current sampled game are sorted  by decreasing order of their probability in the predecessor equilibrium. Thus, the algorithm will prioritize finding equilibria using the support strategies of the predecessor equilibrium.

Note that the function {\PNS}$_{adaptation}(\mathbb{S},x(k),\mathbb{S}_{dev_{k+1}},Sizes_{ord},Strategies_{ord})$ is specific for the  {\modSGM}. The basic {\SGM} calls {\PNS} without any requirement on the strategies that must be in the support of the next equilibrium to be computed; in other words, $x(k)$ and $\mathbb{S}_{dev_{k+1}}$ are not in the input of the {\PNS}.

\subsection{Computational results} \label{subsec:computational_results}

In this section, we will present the computational results for the application of the modified {\SGM} and {\SGM} to the knapsack, kidney exchange and competitive lot-sizing games  in order to define a benchmark and to validate the importance of the modifications introduced. For the competitive lot-sizing game, we further compare these two methods with the maximization of the game's potential function (which corresponds to a pure equilibrium). In our computational analyzes, we also include the {\SGM} adaption for the computation of {\CE}. 

For building the games's data, we have used the Python's random module; see \cite{PythonRandom}.  
 All algorithms have been coded in Python 3.8.3. Since for our three {\IPG}s the Feasibility Problems are linear (due to the bilateral interaction of the players in each of their objective functions), we use Gurobi 9.0.0 to solve them. The experiments were conducted on a  Intel Xeon Gold 6226 CPU processor at 2.70 GHz, running under Oracle Linux Server 7.9, and restricted to a single CPU thread (with exception to Gurobi calls which were restricted to at most 2 CPU threads).

\subsubsection{Knapsack Game}  In these computations, the value of $\varepsilon$ wasset to  zero since this is a purely integer programming game.
The parameters $v^p_i$, $c^p_{k,i}$, and $w^p_i$ are drawn independently from a uniform distribution  in the interval $\left[ -100, 100 \right] \cap \mathbb{Z}$. For each value of the pair $(n,m)$, 10 independent instances were generated. The budget $W^p$ is set to $\lfloor \frac{\textrm{INS}}{11}\sum_{i=1}^n w^p_i \rfloor$ for the instance number ``INS''.  

\paragraph{{\NE} computation. } Tables~\ref{table_fastNE_KnapsackGame_new} and~\ref{table_fastNE_KnapsackGame3} report the results of {\modSGM} and {\SGM} algorithms. The tables show the number of items ($n$), the instance identifier (``INS''),  the CPU time in seconds (``time''), the number of sampled games (``iter''), the type of equilibrium computed, pure (``pNE'') or strictly mixed (``mNE''), in the last case, the support size of the {\NE} is reported, the number of strategies in the last sampled game ($ \prod_{p=1}^m \vert \mathbb{S}^p \vert $) and the number of backtrackings (``numb. back''). We further report the average results for each set of instances of size $n$. The algorithms had a limit of one hour to solve each instance. Runs with ``tl'' in the column time indicate the cases where algorithms reached the time limit. In such cases, the support size of the last sampled game's equilibrium is reported and we do not consider those instances in the average results row.

As the instance size grows, both in the size $n$ and in the number of players $m$, the results make evident the advantage of the {\modSGM}. Since a backward step is unlikely to take place and the number of sampled games is usually equal for both algorithms, the advantage is in the support enumeration: {\modSGM} reduces the support enumeration space by imposing at iteration $k$ the strategy $x(k)$ to be in the support of the equilibrium, while {\SGM} does not. Later in this section, we discuss the reasons why backtracking is unlikely to occur.

In Table~\ref{table_fastNE_KnapsackGame_new}, we can observe that for instance 6 with $n=100$, the  {\modSGM} computational time is significantly higher than {\SGM}. This atypical case is due to the fact that both algorithms have different support enumeration priorities, and therefore, they compute the same equilibria on their initial iterations, but at some point, the algorithms may determine different equilibria, leading to different successor sampled games. Nevertheless, for this instances, {\modSGM} and {\SGM} output the same {\NE}.

We note that the bound $n$ for the players' support sizes in an equilibrium (recall Lemma~\ref{lemma:support_size_knapsack_game}) did not contribute to prune the search space of {\PNS} support enumeration, since the algorithm terminates with sampled games ofmuch  smaller size.

\bgroup
\def\arraystretch{1.5}
\begin{table}[ptbh!]
	\centering
	\tiny
	\tabcolsep=2pt
	\begin{tabular}{lr|crccccc|crccccc}
		&      & \multicolumn{7}{c|}{{\modSGM}}    & \multicolumn{7}{c}{{\SGM}}  \\[0.1cm]
		\cline{1-2} \cline{2-10} \cline{10-16}
		$n$ & INS    & time      & iter     &  pNE & mNE   & $ \prod_{p=1}^m \vert \mathbb{S}^p \vert $    & numb. back      & & time & iter & pNE & mNE &  $ \prod_{p=1}^m \vert \mathbb{S}^p \vert $ \\ 
20 &0 & 0.03 & 10 & 0 & [3, 3] & [5, 6] & 0 & &0.03 & 10 & 0 & [3, 3] & [5, 6]& \\ 
&1 & 0.02 & 8 & 0 & [3, 3] & [4, 5] & 0 & &0.02 & 8 & 0 & [3, 3] & [4, 5]& \\ 
&2 & 0.01 & 4 & 0 & [2, 2] & [2, 3] & 0 & &0.01 & 4 & 0 & [2, 2] & [2, 3]& \\ 
&3 & 0.01 & 6 & 0 & [2, 2] & [4, 3] & 0 & &0.01 & 6 & 0 & [2, 2] & [4, 3]& \\ 
&4 & 0.01 & 4 & 1 & 0 &[3, 2] & 0 & & 0.01 & 4 & 1 & 0 &[3, 2]& \\ 
&5 & 0.01 & 4 & 0 & [2, 2] & [3, 2] & 0 & &0.01 & 4 & 0 & [2, 2] & [3, 2]& \\ 
&6 & 0.01 & 4 & 1 & 0 &[3, 2] & 0 & & 0.01 & 4 & 1 & 0 &[3, 2]& \\ 
&7 & 0.01 & 5 & 0 & [2, 2] & [3, 3] & 0 & &0.01 & 5 & 0 & [2, 2] & [3, 3]& \\ 
&8 & 0.01 & 5 & 0 & [2, 2] & [3, 3] & 0 & &0.01 & 5 & 0 & [2, 2] & [3, 3]& \\ 
&9 & 0.01 & 4 & 1 & 0 &[3, 2] & 0 & & 0.01 & 4 & 1 & 0 &[3, 2]& \\ 
		&     & time    & iter      & pNE & mNE & $\vert S^1 \vert$       & $\vert S^2 \vert$  &  & time    & iter      & pNE     & mNE   & $\vert S^1 \vert$     & $\vert S^2 \vert$ &  \\ \cline{3-16} 
 & avg  & 0.01 & 5.40 & 0.3 & 0.7 & 3.30 & 3.10 & &0.01 & 5.40 & 0.3 & 0.7 & 3.30 & 3.10 \\ \hline
 	$n$ & INS    & time      & iter     &  pNE & mNE   & $ \prod_{p=1}^m \vert \mathbb{S}^p \vert $    & numb. back      & & time & iter & pNE & mNE &  $ \prod_{p=1}^m \vert \mathbb{S}^p \vert $ \\ 
40 &0 & 1.02 & 18 & 0 & [6, 6] & [10, 9] & 0 & &1.48 & 18 & 0 & [6, 6] & [10, 9]& \\ 
&1 & 1.83 & 17 & 0 & [4, 4] & [8, 10] & 0 & &3.54 & 17 & 0 & [4, 4] & [8, 10]& \\ 
&2 & 0.07 & 14 & 0 & [4, 4] & [7, 8] & 0 & &0.07 & 14 & 0 & [4, 4] & [7, 8]& \\ 
&3 & 0.18 & 16 & 0 & [5, 5] & [10, 7] & 0 & &0.27 & 16 & 0 & [5, 5] & [10, 7]& \\ 
&4 & 0.01 & 5 & 0 & [2, 2] & [3, 3] & 0 & &0.01 & 5 & 0 & [2, 2] & [3, 3]& \\ 
&5 & 0.01 & 5 & 0 & [2, 2] & [3, 3] & 0 & &0.01 & 5 & 0 & [2, 2] & [3, 3]& \\ 
&6 & 0.02 & 7 & 0 & [2, 2] & [4, 4] & 0 & &0.02 & 7 & 0 & [2, 2] & [4, 4]& \\ 
&7 & 0.98 & 20 & 0 & [6, 6] & [9, 12] & 0 & &1.78 & 20 & 0 & [6, 6] & [9, 12]& \\ 
&8 & 0.25 & 15 & 0 & [5, 5] & [8, 8] & 0 & &0.41 & 15 & 0 & [5, 5] & [8, 8]& \\ 
&9 & 11.70 & 20 & 0 & [5, 5] & [10, 11] & 0 & &21.99 & 20 & 0 & [5, 5] & [10, 11]& \\ 
		&     & time    & iter      & pNE & mNE & $\vert S^1 \vert$       & $\vert S^2 \vert$  &  & time    & iter      & pNE     & mNE   & $\vert S^1 \vert$     & $\vert S^2 \vert$ &  \\ \cline{3-16} 
& avg  & 1.61 & 13.70 & 0.0 & 1.0 & 7.20 & 7.50 & &2.96 & 13.70 & 0.0 & 1.0 & 7.20 & 7.50 \\ 
\hline
$n$ & INS    & time      & iter     &  pNE & mNE   & $ \prod_{p=1}^m \vert \mathbb{S}^p \vert $    & numb. back      & & time & iter & pNE & mNE &  $ \prod_{p=1}^m \vert \mathbb{S}^p \vert $ \\
80&0 & 0.56 & 18 & 0 & [5, 5] & [9, 10] & 0 & &0.78 & 18 & 0 & [5, 5] & [9, 10]& \\ 
&1 & 0.79 & 17 & 0 & [8, 8] & [9, 9] & 0 & &1.12 & 17 & 0 & [8, 8] & [9, 9]& \\ 
&2 & 0.18 & 14 & 0 & [5, 5] & [8, 7] & 0 & &0.26 & 14 & 0 & [5, 5] & [8, 7]& \\ 
&3 & 10.55 & 22 & 0 & [6, 6] & [11, 12] & 0 & &21.17 & 22 & 0 & [6, 6] & [11, 12]& \\ 
&4 & tl & 30 & 0 & [8, 9] & [15, 16] & 0 & &tl & 29 & 0 & [9, 9] & [15, 15]& \\ 
&5 & 7.21 & 20 & 0 & [7, 7] & [11, 10] & 0 & &12.25 & 20 & 0 & [7, 7] & [11, 10]& \\ 
&6 & 5.04 & 19 & 0 & [7, 7] & [10, 10] & 0 & &8.56 & 19 & 0 & [7, 7] & [10, 10]& \\ 
&7 & 0.04 & 9 & 0 & [4, 4] & [5, 5] & 0 & &0.04 & 9 & 0 & [4, 4] & [5, 5]& \\ 
&8 & 2.14 & 18 & 0 & [7, 7] & [10, 9] & 0 & &3.19 & 18 & 0 & [7, 7] & [10, 9]& \\ 
&9 & 24.65 & 27 & 0 & [5, 5] & [14, 14] & 0 & &58.41 & 27 & 0 & [5, 5] & [14, 14]& \\ 
		&     & time    & iter      & pNE & mNE & $\vert S^1 \vert$       & $\vert S^2 \vert$  &  & time    & iter      & pNE     & mNE   & $\vert S^1 \vert$     & $\vert S^2 \vert$ &  \\ \cline{3-16} 
 & avg  & 5.69 & 18.22 & 0.0 & 0.9 & 9.67 & 9.56 & &11.75 & 18.22 & 0.0 & 0.9 & 9.67 & 9.56 \\ 
\hline
 $n$ & INS    & time      & iter     &  pNE & mNE   & $ \prod_{p=1}^m \vert \mathbb{S}^p \vert $    & numb. back      & & time & iter & pNE & mNE &  $ \prod_{p=1}^m \vert \mathbb{S}^p \vert $ \\
 100 &0 & 0.12 & 14 & 0 & [5, 5] & [8, 7] & 0 & &0.14 & 14 & 0 & [5, 5] & [8, 7]& \\ 
 &1 & 0.09 & 11 & 0 & [5, 5] & [6, 6] & 0 & &0.10 & 11 & 0 & [5, 5] & [6, 6]& \\ 
 &2 & 62.70 & 25 & 0 & [7, 7] & [15, 11] & 0 & &108.07 & 25 & 0 & [7, 7] & [15, 11]& \\ 
 &3 & 0.23 & 14 & 0 & [5, 5] & [8, 7] & 0 & &0.33 & 14 & 0 & [5, 5] & [8, 7]& \\ 
 &4 & 0.46 & 17 & 0 & [5, 5] & [9, 9] & 0 & &0.69 & 17 & 0 & [5, 5] & [9, 9]& \\ 
 &5 & tl & 28 & 0 & [9, 9] & [15, 14] & 0 & &tl & 27 & 0 & [7, 7] & [14, 14]& \\ 
 &6 & 1297.12 & 27 & 0 & [7, 8] & [13, 15] & 0 & &142.88 & 29 & 0 & [7, 8] & [12, 18]& \\ 
 &7 & 94.07 & 27 & 0 & [7, 7] & [12, 16] & 0 & &176.02 & 27 & 0 & [7, 7] & [12, 16]& \\ 
 &8 & 7.90 & 20 & 0 & [6, 6] & [10, 11] & 0 & &15.06 & 20 & 0 & [6, 6] & [10, 11]& \\ 
 &9 & tl & 28 & 0 & [8, 8] & [15, 14] & 0 & &tl & 28 & 0 & [8, 8] & [15, 14]& \\
 		&     & time    & iter      & pNE & mNE & $\vert S^1 \vert$       & $\vert S^2 \vert$  &  & time    & iter      & pNE     & mNE   & $\vert S^1 \vert$     & $\vert S^2 \vert$ &  \\ \cline{3-16} 
  & avg  & 182.84 & 19.38 & 0.0 & 0.8 & 10.12 & 10.25 & &55.41 & 19.62 & 0.0 & 0.8 & 10.00 & 10.62 \\ 
	\end{tabular}
\caption{Computational results for the determination of {\NE} on the e \textbf{knapsack game} with $m=2$.}
\label{table_fastNE_KnapsackGame_new}
\end{table}

\begin{table}[ptbh!]
	\centering
	\tiny
	\tabcolsep=2pt
	\begin{tabular}{lr|crccccc|crccccc}
		&      & \multicolumn{7}{c|}{{\modSGM}}    & \multicolumn{7}{c}{{\SGM}}  \\
		\cline{1-2} \cline{2-10} \cline{10-16}
		$n$ & INS    & time      & iter     &  pNE & mNE   & $ \prod_{p=1}^m \vert \mathbb{S}^p \vert $    & numb. back      & & time & iter & pNE & mNE &  $ \prod_{p=1}^m \vert \mathbb{S}^p \vert $ \\
10 &0 & 0.01 & 5 & 0 & [2, 1, 2] & [3, 2, 2] & 0 & &0.01 & 5 & 0 & [2, 1, 2] & [3, 2, 2]& & \\ 
&1 & 0.02 & 7 & 0 & [1, 2, 2] & [2, 3, 4] & 0 & &0.02 & 7 & 0 & [1, 2, 2] & [2, 3, 4]& & \\ 
&2 & 11.32 & 19 & 0 & [4, 3, 5] & [6, 7, 8] & 0 & &17.32 & 19 & 0 & [4, 3, 5] & [6, 7, 8]& & \\ 
&3 & 0.07 & 16 & 0 & [3, 2, 2] & [6, 6, 6] & 0 & &0.13 & 16 & 0 & [3, 2, 2] & [6, 6, 6]& & \\ 
&4 & 0.01 & 4 & 1 & 0 &[2, 2, 2] & 0 & & 0.01 & 4 & 1 & 0 &[2, 2, 2]& & \\ 
&5 & 0.01 & 6 & 0 & [2, 2, 1] & [4, 3, 1] & 0 & &0.01 & 6 & 0 & [2, 2, 1] & [4, 3, 1]& & \\ 
&6 & 0.01 & 7 & 1 & 0 &[3, 3, 3] & 0 & & 0.01 & 7 & 1 & 0 &[3, 3, 3]& & \\ 
&7 & 0.01 & 4 & 1 & 0 &[2, 2, 2] & 0 & & 0.01 & 4 & 1 & 0 &[2, 2, 2]& & \\ 
&8 & 21.44 & 22 & 0 & [2, 3, 2] & [8, 10, 6] & 0 & &33.07 & 22 & 0 & [2, 3, 2] & [8, 10, 6]& & \\ 
&9 & 0.02 & 10 & 0 & [2, 1, 2] & [5, 3, 4] & 0 & &0.02 & 10 & 0 & [2, 1, 2] & [5, 3, 4]& & \\ 	
		&     & time    & iter &  pNE   & mNE      & $\vert S^1 \vert$     & $\vert S^2 \vert$ &  $\vert S^3 \vert$    & time    & iter   & pNE   & mNE    & $\vert S^1 \vert$     & $\vert S^2 \vert$ & $\vert S^3 \vert$   \\ \cline{3-16}
 & avg  & 3.29 & 10.00 & 0.3 & 0.7 & 4.10 & 4.10 & 4.50  &5.06 & 10.00 & 0.3 & 0.7 & 4.10 & 4.10 & 4.50 \\ \hline
 	$n$ & INS    & time      & iter     &  pNE & mNE   & $ \prod_{p=1}^m \vert \mathbb{S}^p \vert $    & numb. back      & & time & iter & pNE & mNE &  $ \prod_{p=1}^m \vert \mathbb{S}^p \vert $ \\
20 &0 & 0.02 & 8 & 0 & [2, 1, 2] & [4, 2, 4] & 0 & &0.02 & 8 & 0 & [2, 1, 2] & [4, 2, 4]& & \\ 
&1 & 0.04 & 10 & 0 & [2, 2, 3] & [4, 4, 4] & 0 & &0.04 & 10 & 0 & [2, 2, 3] & [4, 4, 4]& & \\ 
&2 & 0.01 & 4 & 1 & 0 &[2, 2, 2] & 0 & & 0.01 & 4 & 1 & 0 &[2, 2, 2]& & \\ 
&3 & 0.02 & 8 & 0 & [1, 2, 2] & [3, 4, 3] & 0 & &0.02 & 8 & 0 & [1, 2, 2] & [3, 4, 3]& & \\ 
&4 & 0.10 & 13 & 0 & [2, 3, 2] & [5, 5, 5] & 0 & &0.15 & 13 & 0 & [2, 3, 2] & [5, 5, 5]& & \\ 
&5 & 0.02 & 8 & 1 & 0 &[3, 4, 3] & 0 & & 0.02 & 8 & 1 & 0 &[3, 4, 3]& & \\ 
&6 & 1.72 & 17 & 0 & [4, 3, 2] & [7, 7, 5] & 0 & &3.94 & 17 & 0 & [4, 3, 2] & [7, 7, 5]& & \\ 
&7 & 0.07 & 13 & 0 & [2, 2, 3] & [5, 5, 5] & 0 & &0.08 & 13 & 0 & [2, 2, 3] & [5, 5, 5]& & \\ 
&8 & 0.08 & 13 & 0 & [3, 4, 3] & [6, 5, 4] & 0 & &0.09 & 13 & 0 & [3, 4, 3] & [6, 5, 4]& & \\ 
&9 & 0.01 & 5 & 1 & 0 &[3, 2, 2] & 0 & & 0.01 & 5 & 1 & 0 &[3, 2, 2]& & \\  
	&     & time    & iter &  pNE   & mNE      & $\vert S^1 \vert$     & $\vert S^2 \vert$ &  $\vert S^3 \vert$    & time    & iter   & pNE   & mNE    & $\vert S^1 \vert$     & $\vert S^2 \vert$ & $\vert S^3 \vert$   \\ \cline{3-16}
 & avg  & 0.21 & 9.90 & 0.3 & 0.7 & 4.20 & 4.00 & 4.20  &0.44 & 9.90 & 0.3 & 0.7 & 4.20 & 4.00 & 4.20 \\ \hline
 	$n$ & INS    & time      & iter     &  pNE & mNE   & $ \prod_{p=1}^m \vert \mathbb{S}^p \vert $    & numb. back      & & time & iter & pNE & mNE &  $ \prod_{p=1}^m \vert \mathbb{S}^p \vert $ \\
40 &0 & 0.12 & 15 & 0 & [4, 2, 3] & [7, 3, 7] & 0 & &0.14 & 15 & 0 & [4, 2, 3] & [7, 3, 7]& & \\ 
&1 & 0.77 & 19 & 0 & [4, 4, 1] & [8, 9, 4] & 0 & &1.02 & 19 & 0 & [4, 4, 1] & [8, 9, 4]& & \\ 
&2 & 0.48 & 16 & 0 & [1, 2, 3] & [6, 6, 6] & 0 & &1.45 & 19 & 0 & [3, 3, 3] & [6, 8, 7]& & \\ 
&3 & 0.07 & 13 & 0 & [2, 3, 2] & [4, 6, 5] & 0 & &0.07 & 13 & 0 & [2, 3, 2] & [4, 6, 5]& & \\ 
&4 & tl & 28 & 0 & [4, 4, 4] & [10, 10, 10] & 0 & &tl & 28 & 0 & [4, 4, 4] & [10, 10, 10]& & \\ 
&5 & 2.81 & 23 & 0 & [4, 3, 4] & [9, 8, 8] & 0 & &4.47 & 23 & 0 & [4, 3, 4] & [9, 8, 8]& & \\ 
&6 & 0.22 & 16 & 0 & [2, 3, 2] & [7, 7, 4] & 0 & &0.30 & 16 & 0 & [2, 3, 2] & [7, 7, 4]& & \\ 
&7 & 11.25 & 24 & 0 & [4, 5, 2] & [10, 10, 6] & 0 & &27.52 & 24 & 0 & [4, 5, 2] & [10, 10, 6]& & \\ 
&8 & 1.67 & 18 & 0 & [4, 2, 3] & [7, 4, 9] & 0 & &2.85 & 18 & 0 & [4, 2, 3] & [7, 4, 9]& & \\ 
&9 & tl & 29 & 0 & [5, 4, 4] & [12, 9, 10] & 0 & &tl & 28 & 0 & [5, 4, 3] & [11, 9, 10]& & \\ 
	&     & time    & iter &  pNE   & mNE      & $\vert S^1 \vert$     & $\vert S^2 \vert$ &  $\vert S^3 \vert$    & time    & iter   & pNE   & mNE    & $\vert S^1 \vert$     & $\vert S^2 \vert$ & $\vert S^3 \vert$   \\ \cline{3-16}
 & avg  & 2.17 & 18.00 & 0.0 & 0.8 & 7.25 & 6.62 & 7.75  &4.73 & 18.38 & 0.0 & 0.8 & 7.25 & 6.88 & 8.00 \\ 
	\end{tabular}
	\caption{Computational results for the determination of {\NE} on the  \textbf{knapsack game} with $m=3$.}
	\label{table_fastNE_KnapsackGame3}
\end{table}
\bgroup
\def\arraystretch{1}

\paragraph{{\CE} computation. } Next, we present the computational results when the scheme of {\SGM} is adapted for the determination of {\CE} as described in Section~\ref{sec:extensions}. 

Tables~\ref{table_CE_KnapsackGame_2} and~\ref{table_CE_KnapsackGame_3} summarize our experiments. The columns meaning is the same as before. The new column ``$\tau$-based {\NE}?'' answers whether the computed correlated equilibrium $\tau$ leads to a {\NE}; recall Definition~\ref{def:ce_based_nash}. And column ``$1-\frac{Social(\sigma)}{Social(\tau)}$'' provides the social welfare decrease by moving from the correlated equilibrium $\tau$ to the {\NE} $\sigma$ computed in the previous experiment. We do not provide this column for the 2-player case because it is always 0.

For the 2-player case, see Table~\ref{table_CE_KnapsackGame_2}, the computation of {\CE} is much faster than the computation of {\NE} (recall Table~\ref{table_fastNE_KnapsackGame_new}). In fact, although the number of iterations for the computation of {\CE} is larger, it compensates the fact that searching for a {\CE} is much faster than searching for a {\NE} of the sampled games. The most surprising conclusion is on the fact that all computed {\CE} allowed the computation of {\NE}. This stresses further interest on {\CE} search.

In the 3-player case, see Table~\ref{table_CE_KnapsackGame_3}, conclusions are similar: the computation of {\CE} is generally much faster than the computation of {\NE} (an outlier  is instance 7 with $n=4$) and the number of iterations is larger for {\CE} determination. On the other hand, not all {\CE} allowed to determine a {\NE} accordingly with Definition~\ref{def:ce_based_nash}. Moreover, in this case, the selection of a {\CE} in the sampled game that optimizes social welfare seems to payoff in comparison with the social welfare of the previous computed {\NE}.

\bgroup
\def\arraystretch{1.5}
\begin{table}[ptbh!]
	\centering
	\tiny
	\tabcolsep=2pt
	\begin{tabular}{lr|crccccc}
		&         & \multicolumn{7}{c}{{\SGM}}  \\[0.1cm]
		\cline{1-8}
		$n$ & INS    & time & iter & $\tau$-based {\NE}? & $|\supp({\tau})|$ & $ \prod_{p=1}^m \vert \mathbb{S}^p \vert $ & & \\ 
20& 0 & 0.05 & 10 & YES & 9 & [5, 6] & & \\
 & 1 & 0.04 & 8 & YES & 9 & [4, 5] & & \\
 & 2 & 0.02 & 4 & YES & 4 & [2, 3] &  & \\
 & 3 & 0.02 & 6 & YES & 4 & [4, 3] & & \\
 & 4 & 0.01 & 4 & YES & 1 & [3, 2] &  & \\
 & 5 & 0.01 & 4 & YES & 4 & [3, 2] &  & \\
 & 6 & 0.01 & 4 & YES & 1 & [3, 2] & & \\
 & 7 & 0.02 & 5 & YES & 4 & [3, 3] &  & \\
 & 8 & 0.02 & 5 & YES & 4 & [3, 3] & & \\
 & 9 & 0.01 & 4 & YES & 1 & [3, 2] & & \\
 	&     & time    & iter      & $\tau$-based {\NE} & $|\supp({\tau})|$ & $\vert S^1 \vert$       & $\vert S^2 \vert$    \\ \cline{3-8}
 & avg  & 0.02 & 5.40 & 1.0 & 4.1 & 3.30 & 3.10 &  \\
 	\cline{1-8}
		$n$ & INS    & time & iter & $\tau$-based {\NE}? & $|\supp({\tau})|$ & $ \prod_{p=1}^m \vert \mathbb{S}^p \vert $ & & \\ 
40 & 0 & 0.26 & 18 & YES & 36 & [10, 9] &  & \\
 & 1 & 0.24 & 15 & YES & 16 & [7, 9] &  & \\
 & 2 & 0.26 & 15 & YES & 16 & [7, 9] &  & \\
 & 3 & 0.28 & 17 & YES & 22 & [10, 8] &  & \\
 & 4 & 0.04 & 5 & YES & 4 & [3, 3] &  & \\
 & 5 & 0.05 & 5 & YES & 4 & [3, 3] &  & \\
 & 6 & 0.07 & 7 & YES & 4 & [4, 4] &  & \\
 & 7 & 0.92 & 29 & YES & 36 & [17, 13] &  & \\
 & 8 & 0.20 & 15 & YES & 25 & [8, 8] &  & \\
 & 9 & 0.51 & 24 & YES & 25 & [12, 13] &  &  \\
 	&     & time    & iter      & $\tau$-based {\NE} & $|\supp({\tau})|$ & $\vert S^1 \vert$       & $\vert S^2 \vert$    \\ \cline{3-8}
 & avg  & 0.28 & 15.00 & 1.0 & 18.8 & 8.10 & 7.90 &  \\
	\cline{1-8}
		$n$ & INS    & time & iter & $\tau$-based {\NE}? & $|\supp({\tau})|$ & $ \prod_{p=1}^m \vert \mathbb{S}^p \vert $ & & \\ 
80 & 0 & 0.92 & 18 & YES & 25 & [9, 10] & & \\
 & 1 & 1.34 & 27 & YES & 64 & [14, 14] &  & \\
 & 2 & 0.68 & 16 & YES & 25 & [9, 8] &  & \\
 & 3 & 4.97 & 48 & YES & 36 & [25, 24] &  & \\
 & 4 & 4.68 & 42 & YES & 99 & [20, 23] &  & \\
 & 5 & 0.88 & 20 & YES & 49 & [11, 10] & & \\
 & 6 & 0.91 & 21 & YES & 47 & [11, 11] &  & \\
 & 7 & 0.31 & 9 & YES & 16 & [5, 5] & & \\
 & 8 & 0.71 & 18 & YES & 48 & [10, 9] &  & \\
 & 9 & 1.50 & 25 & YES & 36 & [14, 12] & & \\
 	&     & time    & iter      & $\tau$-based {\NE} & $|\supp({\tau})|$ & $\vert S^1 \vert$       & $\vert S^2 \vert$    \\ \cline{3-8}
 & avg  & 1.69 & 24.40 & 1.0 & 44.5 & 12.80 & 12.60 & \\
 	\cline{1-8}
		$n$ & INS    & time & iter & $\tau$-based {\NE}? & $|\supp({\tau})|$ & $ \prod_{p=1}^m \vert \mathbb{S}^p \vert $ & & \\ 
100 & 0 & 0.87 & 14 & YES & 25 & [8, 7] &  & \\
 & 1 & 0.89 & 15 & YES & 25 & [8, 8] & & \\
 & 2 & 5.61 & 43 & YES & 49 & [22, 22] &  & \\
 & 3 & 0.88 & 14 & YES & 25 & [8, 7] && \\
 & 4 & 1.00 & 16 & YES & 25 & [9, 8] && \\
 & 5 & 5.60 & 40 & YES & 81 & [23, 18] & & \\
 & 6 & 2.63 & 32 & YES & 79 & [17, 16] &  & \\
 & 7 & 3.69 & 27 & YES & 49 & [11, 17] &  & \\
 & 8 & 2.58 & 27 & YES & 36 & [14, 14] &  & \\
 & 9 & 8.49 & 48 & YES & 78 & [26, 23] &  & \\
  	&     & time    & iter      & $\tau$-based {\NE} & $|\supp({\tau})|$ & $\vert S^1 \vert$       & $\vert S^2 \vert$    \\ \cline{3-8}
 & avg  & 3.22 & 27.60 & 1.0 & 47.2 & 14.60 & 14.00 & \\ 
	\end{tabular}
\caption{Computational results for the determination of {\CE} on the \textbf{knapsack game} with $m=2$.}
\label{table_CE_KnapsackGame_2}
\end{table}

\bgroup
\def\arraystretch{1.5}
\begin{table}[ptbh!]
	\centering
	\tiny
	\tabcolsep=2pt
	\begin{tabular}{lr|crcccccc}
		&         & \multicolumn{7}{c}{{\SGM}}  \\[0.1cm]
		\cline{1-10}
		$n$ & INS    & time & iter & $\tau$-based {\NE}? & $|\supp({\tau})|$ & $ \prod_{p=1}^m \vert \mathbb{S}^p \vert $ & $1-\frac{Social(\sigma)}{Social(\tau)}$ & \\ 
 10 & 0 & 0.01 & 5 & YES & 4 & [3, 2, 2] & 0.00 & & \\
 & 1 & 0.03 & 7 & YES & 4 & [2, 3, 4] & 0.00 & & \\
 & 2 & 8.36 & 44 & NO & 84 & [11, 17, 18] & 0.18 & & \\
 & 3 & 0.10 & 15 & NO & 7 & [5, 6, 6] & 0.05 & & \\
 & 4 & 0.01 & 4 & YES & 1 & [2, 2, 2] & 0.00 & & \\
 & 5 & 0.02 & 6 & YES & 4 & [4, 3, 1] & 0.00 & & \\
 & 6 & 0.02 & 7 & YES & 1 & [3, 3, 3] & 0.00 & & \\
 & 7 & 0.01 & 4 & YES & 1 & [2, 2, 2] & 0.00 & & \\
 & 8 & 0.46 & 23 & NO & 16 & [8, 11, 6] & 0.10 & & \\
 & 9 & 0.04 & 10 & YES & 4 & [5, 3, 4] & 0.00 & & \\
 	&     & time    & iter      & $\tau$-based {\NE} & $|\supp({\tau})|$ & $\vert S^1 \vert$       & $\vert S^2 \vert$ &$\vert S^3 \vert$  & $1-\frac{Social(\sigma)}{Social(\tau)}$   \\ \cline{3-10}
 & avg  & 0.91 & 12.50 & 0.7 & 12.6 & 4.50 & 5.20 & 4.80 & 0.03 \\
 	\cline{1-10}
		$n$ & INS    & time & iter & $\tau$-based {\NE}? & $|\supp({\tau})|$ & $ \prod_{p=1}^m \vert \mathbb{S}^p \vert $  & $1-\frac{Social(\sigma)}{Social(\tau)}$  & \\ 
20& 0 & 0.04 & 8 & YES & 4 & [4, 2, 4] & 0.00 & & \\
 & 1 & 0.05 & 10 & YES & 9 & [4, 4, 4] & 0.00 & & \\
 & 2 & 0.01 & 4 & YES & 1 & [2, 2, 2] & 0.00 & & \\
 & 3 & 0.04 & 8 & YES & 4 & [3, 4, 3] & 0.00 & & \\
 & 4 & 0.12 & 14 & YES & 9 & [5, 6, 5] & 0.00 & & \\
 & 5 & 0.04 & 8 & YES & 1 & [3, 4, 3] & 0.00 & & \\
 & 6 & 0.81 & 26 & NO & 27 & [9, 10, 9] & 0.05 & & \\
 & 7 & 0.10 & 13 & YES & 9 & [5, 5, 5] & 0.00 & & \\
 & 8 & 0.48 & 23 & NO & 32 & [9, 8, 8] & 0.13 & & \\
 & 9 & 0.02 & 5 & YES & 1 & [3, 2, 2] & 0.00 & & \\
 	&     & time    & iter      & $\tau$-based {\NE} & $|\supp({\tau})|$ & $\vert S^1 \vert$       & $\vert S^2 \vert$  &$\vert S^3 \vert$  & $1-\frac{Social(\sigma)}{Social(\tau)}$  \\ \cline{3-10}
 & avg  & 0.17 & 11.90 & 0.8 & 9.7 & 4.70 & 4.70 & 4.50 & 0.02 \\
	\cline{1-10}
		$n$ & INS    & time & iter & $\tau$-based {\NE}? & $|\supp({\tau})|$ & $ \prod_{p=1}^m \vert \mathbb{S}^p \vert $  & $1-\frac{Social(\sigma)}{Social(\tau)}$  & \\ 
40 & 0 & 0.47 & 19 & YES & 16 & [9, 3, 9] & 0.00 & & \\
 & 1 & 83.66 & 71 & NO & 17 & [38, 16, 19] & 0.10 & & \\
 & 2 & 1.93 & 30 & NO & 36 & [10, 11, 11] & 0.03 & & \\
 & 3 & 0.27 & 14 & YES & 9 & [4, 6, 6] & 0.00 & & \\
 & 4 & tl & 134 &  & 262 & [41, 47, 48] & &  & \\
 & 5 & 1.42 & 28 & NO & 28 & [11, 8, 11] & 0.00 & & \\
 & 6 & 0.42 & 18 & YES & 10 & [7, 7, 6] & 0.00 & & \\
 & 7 & 642.17 & 102 & NO & 194 & [52, 32, 20] & 0.02 & & \\
 & 8 & 1.70 & 29 & NO & 61 & [12, 5, 14] & 0.02 & & \\
 & 9 & 1137.81 & 112 & NO & 135 & [53, 21, 40] & - & & \\
 	&     & time    & iter      & $\tau$-based {\NE} & $|\supp({\tau})|$ & $\vert S^1 \vert$       & $\vert S^2 \vert$ &$\vert S^3 \vert$  & $1-\frac{Social(\sigma)}{Social(\tau)}$   \\ \cline{3-10}
 & avg  & 207.76 & 47.00 & 0.3 & 56.22 & 21.78 & 12.11 & 15.11 & 0.02 \\
	\end{tabular}
\caption{Computational results for the determination of {\CE} on the \textbf{knapsack game} with $m=3$.}
\label{table_CE_KnapsackGame_3}
\end{table}

\subsubsection{Two-player kidney exchange game}

As in the experiments for the knapsack game, the value of $\varepsilon$ is zero. We used the instances of~\cite{constantino_new_2013}\footnote{Instances available in \url{https://rdm.inesctec.pt/dataset/ii-2019-001}} based on the US population~\citep{saidman_increasing_2006}. From this dataset, we used compatability graphs with sizes (\ie, total number of vertices) equal to 20, 40, 60 and 80 vertices. Given that the primary goal is to maximize the number of patients receiving a kidney, we used unitary weights, \ie, $w^p_c$ corresponds to the number of patients from country $p$ in the cycle $c$. For each graph size, there are 50 instances, except for size 20 where there are 49 instances. The bound on the cycles length considered was 3. Finally, we assign half of the vertices to each country.

\paragraph{{\NE} computation. } Table~\ref{table_fastNE_KEG} presents our results for the computation of Nash equilibria. Besides the previous described column entries,  the table presents the average time in seconds to determine a social optimum (``Social opt time''), the average ratio between the {\NE} social welfare and the social optimum (``price of {\NE}''),  each country average payoff decrease ratio when acting alone in comparison to joining the game and playing the {\NE} (``$\Pi^p$ decrease''), and the percentage of solved instances (``\% Solved'').

The second column of the table already reflects the difficulty of the best reaction integer programs: computing the social optimum is equivalent to optimizing the sum of the players payoffs subject to their restrictions. This problem is NP-hard when cycles are limited to length 3. Nevertheless, we were able to compute a pure {\NE} for all our instances in less than 1 second. For graphs of size 20 and 40, the loss on social welfare for playing a {\NE} is not significant, and the players' benefit for joining the game is considerable. On the other hand, for the graph of size 100, the conclusion is reversed. Thus, our results reveal the need of designing game rules ensuring that both the social welfare is increased, as well as, the players benefit for participating in the game.

\paragraph{{\CE} computation. } Table~\ref{table_fastCE_KEG} summarizes our results for the computation of {\CE}. Since the support size of all determined {\CE} is 1, it is easy to see that they are themselves {\NE}.  These results seem to indicate that it would be enough to search for {\CE} in order to determine a {\NE}. However, the determination of {\CE} is more costly as the termination step of {\SGM} must solve for each player $p$ as many Problems~\eqref{Problem:Correlated} as the size of $\mathbb{S}^p$ of the sampled game, while for the verification of {\NE}, a single best response is solved\footnote{We note that we can skip the solving of Problem~\eqref{Problem:Correlated} for the $\bar{x}^p \in \mathbb{S}^p$ such that $\sum_{x \in \mathbb{S}: x^p=\bar{x}^p}\tau(x)=0$. If we do this in the kidney exchange game, them the {\CE} computation becomes faster than the {\NE} computation since the support of the correlated equilibria is 1.}. Another advantage of the computed {\CE} is the small positive average gain between the {\NE} previously determined and the {\CE} computed. This is might be explained by the fact that the {\CE} of each sampled game optimizes social welfare.

\bgroup
\def\arraystretch{1.5}
\begin{table}[ptbh!]
	\centering
	\tiny
	\tabcolsep=2pt
	\begin{tabular}{lc|ccccccc|ccc|ccccccc|ccc}
		&   &   \multicolumn{10}{c|}{{\modSGM}}    & \multicolumn{10}{c}{{\SGM}}  \\[0.1cm] \hline 
		& & \multicolumn{7}{c|}{avg.} & & & & \multicolumn{7}{c|}{avg.}\\
		$|V|$ & Social opt & Price of  & $\Pi^A$ & $\Pi^B$    & time      & iter    & $\vert S^1 \vert$ & $\vert S^2 \vert$ &    numb. back      &pNE & \% Solved & Price of  & $\Pi^A$ & $\Pi^B$    & time      & iter  & $\vert S^1 \vert$ & $\vert S^2 \vert$ & pNE & \% Solved  \\
		& time	 &NE &  decrease &  decrease    &     &  &     & &  &        & & NE & increase & increase    &     &  & & &  & \\
  20 & 0.02 & 0.89 & 0.23 & 0.22 & 0.01 & 1.84 & 1.53 & 1.31 &  0.00 & 49 & 1.0 & 0.89 & 0.23 & 0.22 & 0.01 & 1.84 & 1.53 & 1.31 &  49 &  100 \\ 
   40 & 0.09 & 0.91 & 0.11 & 0.14 & 0.05 & 3.08 & 2.14 & 1.94 &  0.00 & 50 & 1.0 & 0.91 & 0.11 & 0.14 & 0.05 & 3.08 & 2.14 & 1.94 &  50 &  100 \\ 
 80 & 0.83 & 0.92 & 0.05 & 0.05 & 0.98 & 3.44 & 2.44 & 2.00 &  0.00 & 50 & 1.0 & 0.92 & 0.05 & 0.05 & 0.97 & 3.44 & 2.44 & 2.00 &  50 &  100 \\ 
	\end{tabular}
	\caption{Average results for the determination of (maximal) {\NE} on the  \textbf{kidney exchange game}.}
	\label{table_fastNE_KEG}
\end{table}

\bgroup
\def\arraystretch{1.5}
\begin{table}[ptbh!]
	\centering
	\tiny
	\tabcolsep=2pt
	\begin{tabular}{l|ccccccc}
		&       \multicolumn{7}{c}{{\SGM}}  \\[0.1cm] \hline 
		& \multicolumn{7}{c}{avg.} \\
	$|V|$	     & time    & iter      & $\tau$-based {\NE} & $|\supp({\tau})|$ & $\vert S^1 \vert$       & $\vert S^2 \vert$  & $1-\frac{Social(\sigma)}{Social(\tau)}$\\
 20   & 0.02 & 1.84 & 49 & 1.0 & 1.53 & 1.31 & 0.01 \\ 
 40   & 0.18 & 3.08 & 50 & 1.0 & 2.14 & 1.94 & 0.02\\
 80 &5.23 & 3.50 & 50 & 1.0 & 2.50 & 2.00 & 0.01
	\end{tabular}
	\caption{Average results for the determination of (maximal) {\CE} on the  \textbf{kidney exchange game}.}
	\label{table_fastCE_KEG}
\end{table}

\subsubsection{Competitive lot-sizing game}
Through dynamic programming, a player $p$'s best reaction~\eqref{ULSG} for a fixed $(y^{-p},x^{-p},q^{-p},h^{-p})$ can be computed in polynomial time if there are no production capacities, neither inventory costs~\citep{CarvalhoTelhaVyve}.  For this reason, we decided to concentrate on these simpler instances. In our computations, the value of $\varepsilon$ was set to  $10^{-6}$.
The parameters $a_t$, $b_t$, $F_t^p$ and $C^p_t$ were draw independently from a uniform distribution in the intervals $\left[ 20, 30 \right] \cap \mathbb{Z}$, $\left[ 1, 3 \right] \cap \mathbb{Z}$, $\left[ 10, 20 \right] \cap \mathbb{Z}$,  $\left[5, 10 \right] \cap \mathbb{Z}$, respectively.  For each value of the pair $(m,T)$, 10 instances were generated. 

For easiness of implementation and fair comparison with the computation of the potential function optimum, we do not use the dynamic programming procedure to solve a player best reaction problem, but Gurobi 9.0.0.

%

As previously mentioned, Section~\ref{sec:competitive_lotsizing_game}, the lot-sizing game is potential, which implies the existence of a pure equilibrium. In particular, each sampled game of the competitive lot-sizing game is potential and therefore, it has a pure equilibrium. In fact, our algorithms will return a pure equilibrium: both {\modSGM} and {\SGM} start with a sampled game with only one strategy for each player and thus, one pure equilibrium. This equilibrium is given to the input of our {\PNS} implementation, which implies that players' supports of size one will be prioritized leading to the computation of a pure equilibrium. This pure equilibrium will be in the input of the next {\PNS} call, resulting in a pure equilibrium output. This reasoning propagates through the algorithms' execution. Even though our algorithms find a pure equilibrium, it is expected that the potential function maximization method will provide an equilibrium faster than our methods, since our algorithms deeply depend on the initialization (which in our implementation does not take into account the players' interaction).  

Table \ref{table_fastNE_LotSizingGame} reports the results for the {\modSGM}, {\SGM} and potential function maximization. The table displays the number of periods ($T$),  the number of players ($m$) and the number of instances solved by each method (``numb. pNE'').  In this case all instances were solved within the time frame of one hour and a pure Nash equilibrium was determined by both our methods.

In this case, {\modSGM} does not present advantages with respect to {\SGM}. This is mainly due to the fact that the sampled games always have pure equilibria and our improvements have more impact when many mixed equilibria exist. The maximization of the potential functions allowed the computation of equilibria to be faster. This highlights the importance of identifying if a game is potential. On the other hand, the potential function maximization allows the determination of one equilibrium, while our method with different $Initialization$ and/or $PlayerOrder$ implementations may return different equilibria and, thus, allows larger exploration of the set of equilibria. 

Algorithm $PlayerOrder$ has a crucial impact in the number of sampled games to be explored in order to compute one equilibrium. In fact, when comparing our implementation with simply keeping the players' index order static, the impact on computational times is significant. 

We do not report our results for the computation of {\CE} since there was no social welfare improvement on the {\CE} determined, and all {\CE} computed were pure {\NE}.

\bgroup
\def\arraystretch{1.5}
\begin{table}[ptbh!]
	\centering
	\tiny
	\tabcolsep=2pt
	\begin{tabular}{lr|crcccc|c|ccrcc|c|c|c}
		&      & \multicolumn{7}{c|}{{\modSGM}}    & \multicolumn{6}{c|}{{\SGM}} & \multicolumn{2}{c}{Potential Function Maximization}  \\
		\cline{1-2} \cline{2-10} \cline{10-17}
		&      & \multicolumn{6}{c|}{avg.}  & \multicolumn{1}{c|}{numb.} &\multicolumn{5}{c|}{avg.}  & \multicolumn{1}{c|}{numb.}  &  \hspace*{0.5cm} avg \hspace*{0.5cm} & numb.  \\
		$m$ & $T$ & time & iter     &  $\vert S^1 \vert$     & $\vert S^2 \vert$ & $\vert S^3 \vert$        &\multicolumn{1}{c|}{numb. back}    & pNE      & time & iter     &  $\vert S^1 \vert$ & $\vert S^2 \vert$ & \multicolumn{1}{c|}{$\vert S^3 \vert$}  & pNE  & \hspace*{0.5cm} time \hspace*{0.5cm} & pNE   \\   \cline{3-17}  
2 & 10 & 0.08 & 15.00 & 8.00 & 8.00 & & 0.00 & 10 & 0.06 & 15.00 & 8.00 & 8.00 & & 10 & 0.00 & 10\\ 
 & 20 & 0.13 & 16.00 & 9.00 & 8.00 & & 0.00 & 10 & 0.11 & 16.00 & 9.00 & 8.00 & & 10 & 0.01 & 10\\ 
 & 50 & 0.26 & 16.10 & 9.00 & 8.10 & & 0.00 & 10 & 0.23 & 16.10 & 9.00 & 8.10 & & 10 & 0.02 & 10\\ 
  & 100 & 0.51 & 17.00 & 9.00 & 9.00 & & 0.00 & 10 & 0.44 & 17.00 & 9.00 & 9.00 & & 10 & 0.03 & 10\\
3  & 10 & 0.30 & 30.60 & 11.30 & 10.80 & 11.90  & 0.00 & 10 & 0.26 & 30.60 & 11.30 & 10.80 & 11.90 & 10 & 0.01 & 10\\ 
 & 20 & 0.42 & 31.80 & 11.90 & 11.00 & 12.10  & 0.00 & 10 & 0.35 & 31.80 & 11.90 & 11.00 & 12.10 & 10 & 0.03 & 10\\
  & 50 & 0.77 & 32.70 & 12.10 & 11.50 & 12.60  & 0.00 & 10 & 0.63 & 32.70 & 12.10 & 11.50 & 12.60 & 10 & 0.04 & 10\\  
  & 100 & 1.56 & 34.20 & 12.50 & 12.00 & 13.20  & 0.00 & 10 & 1.28 & 34.20 & 12.50 & 12.00 & 13.20 & 10 & 0.07 & 10\\ 
	\end{tabular}
	\caption{Average results for the determination of {\NE} for the \textbf{competitive lot-sizing game}.}
	\label{table_fastNE_LotSizingGame}
\end{table}
\bgroup
\def\arraystretch{1}

\subsubsection{Final remarks}

In the application of our two methods in all the studied instances of these games, backtracking never occurred. Indeed, we noticed that this is a very unlikely event (even though it may happen, as shown in Example \ref{example_backtracking}). This is the reason why both  {\modSGM} and {\SGM}, in general, coincide in the number of sampled games generated. It is in the support enumeration for each sampled game that the methods differ. The fact that in each iteration of {\modSGM}  the last added strategy is mandatory to be in the equilibrium support, makes {\modSGM} faster.  The backtracking will reveal useful for problems in which it is ``difficult'' to find the strategies of a sampled game that enable to define an equilibrium of an {\IPG}. 

At this point, for the games studied, in comparison with the number of pure profiles of strategies that may exist in a game, not too many sampled games had to be generated  in order to find an equilibrium, meaning that the challenge is to make the computation of equilibria for sampled games faster.

\paragraph{Comparison: {\modSGM} and {\PNS}.} In the case of the knapsack game, the number of strategies for each player is finite. In order to find an equilibrium of it, we can explicitly determine all feasible strategies for each player and, then apply directly {\PNS}. In Tables~\ref{table_ISM_Vs_PNS} and~\ref{table_ISM_Vs_PNS_2}, we compare this procedure with {\modSGM}, for $n=5$, $n=7$ and $n=10$ (in these cases, each player has at most $2^5=32$, $2^7=128$ and $2^{10}=1024$ feasible strategies, respectively). We note that the computational time displayed in these tables under the direct application of {\PNS} does not include the time to determine all feasible strategies for each player (although, for $n=5$, $n=7$ and $n=10$ is negligible). Based on these results it can be concluded that even for small instances,  {\modSGM} already performs better than the direct application of {\PNS},  where all strategies must  be enumerated.

\begin{table}[ptbh!]
	\centering
	\tiny
	\tabcolsep=2pt
	\begin{tabular}{llr|crccccc|crccccc}
		&	   &      & \multicolumn{7}{c|}{{\modSGM}}    & \multicolumn{6}{c}{direct {\PNS}}  \\ \hline
		$n$ & $m$ & INS    & time      & iter     &  pNE & mNE   & $ \prod_{p=1}^m \vert \mathbb{S}^p \vert $    & numb. back      & & time & pNE & mNE &  $ \prod_{p=1}^m \vert \mathbb{S}^p \vert $ \\ 	
5 & 2 & 0 & 0.01 & 3 & 1 & 0 & [2, 2] & 0 &  & 0.02 & 1 & 0 & [26, 17] & & \\ 
  &   & 1 & 0.01 & 2 & 1 & 0 & [1, 2] & 0 &  & 0.01 & 1 & 0 & [8, 14] & & \\ 
 & & 2 & 0.01 & 2 & 1 & 0 & [1, 2] & 0 &  & 0.01 & 1 & 0 & [8, 15] & & \\ 
 & & 3 & 0.00 & 1 & 1 & 0 & [1, 1] & 0 &  & 0.02 & 1 & 0 & [15, 17] & & \\ 
 & & 4 & 0.02 & 4 & 1 & 0 & [2, 3] & 0 &  & 0.01 & 1 & 0 & [17, 10] & & \\ 
 & & 5 & 0.01 & 2 & 1 & 0 & [1, 2] & 0 &  & 0.01 & 1 & 0 & [17, 15] & & \\ 
 & & 6 & 0.01 & 3 & 1 & 0 & [2, 2] & 0 &  & 0.01 & 1 & 0 & [16, 14] & & \\ 
 & & 7 & 0.01 & 2 & 1 & 0 & [1, 2] & 0 &  & 0.02 & 1 & 0 & [14, 21] & & \\ 
 & & 8 & 0.01 & 4 & 1 & 0 & [2, 3] & 0 &  & 0.01 & 1 & 0 & [20, 15] & & \\ 
 & & 9 & 0.01 & 2 & 1 & 0 & [1, 2] & 0 &  & 0.01 & 1 & 0 & [22, 9] & & \\      \cline{4-16} 
		&	   &      & \multicolumn{5}{c:}{avg.}  & \multicolumn{2}{c|}{number of} &\multicolumn{4}{c:}{avg.}  & \multicolumn{2}{c}{number of}    \\ 
		&          &     & time    & iter      & $\vert S^1 \vert$     & $\vert S^2 \vert$ & \multicolumn{1}{c:}{}   &  pNE   & mNE   & time    & $\vert S^1 \vert$     & $\vert S^2 \vert$ &  \multicolumn{1}{c:}{} & pNE   & mNE  \\   \cline{4-16} 
 & & & 0.01 & 2.50 & 1.40 & 2.10 &  & 10 & 0 & 0.01 & 16.30 & 14.70 &  & 10 & 0 \\  
		\cline{2-16}
		& $m$ & INS    & time      & iter     &  pNE & mNE   & $ \prod_{p=1}^m \vert \mathbb{S}^p \vert $    & numb. back      & & time & pNE & mNE &  $ \prod_{p=1}^m \vert \mathbb{S}^p \vert $ \\ 	
 & 3 & 0 & 0.01 & 2 & 1 & 0 & [2, 1, 1] & 0 &  & 0.03 & 1 & 0 & [14, 21, 4] & & \\ 
 & & 1 & 0.01 & 3 & 1 & 0 & [2, 1, 2] & 0 &  & 0.07 & 1 & 0 & [20, 10, 29] & & \\ 
 & & 2 & 0.02 & 4 & 0 & 1 & [2, 2, 2] & 0 &  & 1.58 & 0 & 1 & [25, 23, 16] & & \\ 
 & & 3 & 0.01 & 2 & 1 & 0 & [1, 2, 1] & 0 &  & 0.05 & 1 & 0 & [18, 18, 19] & & \\ 
 & & 4 & 0.02 & 4 & 1 & 0 & [2, 2, 2] & 0 &  & 0.07 & 1 & 0 & [12, 21, 17] & & \\ 
 & & 5 & 0.02 & 4 & 1 & 0 & [2, 2, 2] & 0 &  & 0.12 & 1 & 0 & [20, 19, 16] & & \\ 
 & & 6 & 0.01 & 3 & 1 & 0 & [2, 2, 1] & 0 &  & 0.09 & 1 & 0 & [14, 18, 15] & & \\ 
 & & 7 & 0.02 & 4 & 1 & 0 & [2, 2, 2] & 0 &  & 0.05 & 1 & 0 & [11, 22, 14] & & \\ 
 & & 8 & 0.02 & 5 & 1 & 0 & [3, 2, 2] & 0 &  & 0.03 & 1 & 0 & [8, 20, 16] & & \\ 
 & & 9 & 0.03 & 5 & 0 & 1 & [2, 3, 2] & 0 &  & 0.93 & 0 & 1 & [14, 21, 22] & & \\    \cline{4-16} 
		&	   &      & \multicolumn{5}{c:}{avg.}  & \multicolumn{2}{c|}{number of} &\multicolumn{4}{c:}{avg.}  & \multicolumn{2}{c}{number of}    \\ 
		&          &     & time    & iter      & $\vert S^1 \vert$     & $\vert S^2 \vert$ & \multicolumn{1}{c:}{ $\vert S^3 \vert$}   &  pNE   & mNE   & time    & $\vert S^1 \vert$     & $\vert S^2 \vert$ &  \multicolumn{1}{c:}{$\vert S^3 \vert$}  & pNE   & mNE  \\   \cline{4-16} 
 & & & 0.02 & 3.60 & 2.00 & 1.90 & 1.70 & 8 & 2 & 0.30 & 15.60 & 19.30 & 16.80 & 8 & 2 \\ 
		\cline{1-2} \cline{2-10} \cline{10-16} 
		$n$ & $m$ & INS    & time      & iter     &  pNE & mNE   & $ \prod_{p=1}^m \vert \mathbb{S}^p \vert $    & numb. back      & & time & pNE & mNE &  $ \prod_{p=1}^m \vert \mathbb{S}^p \vert $ \\ 	
7 & 2 & 0 & 0.02 & 5 & 0 & 1 & [3, 3] & 0 &  & 10.28 & 0 & 1 & [28, 79] & & \\ 
 & & 1 & 0.01 & 3 & 1 & 0 & [2, 2] & 0 &  & 0.09 & 1 & 0 & [58, 32] & & \\ 
 & & 2 & 0.00 & 1 & 1 & 0 & [1, 1] & 0 &  & 0.16 & 1 & 0 & [93, 51] & & \\ 
 & & 3 & 0.01 & 2 & 1 & 0 & [2, 1] & 0 &  & 0.12 & 1 & 0 & [55, 50] & & \\ 
 & & 4 & 0.01 & 3 & 0 & 1 & [2, 2] & 0 &  & 6.62 & 0 & 1 & [50, 43] & & \\ 
 & & 5 & 0.03 & 6 & 0 & 1 & [4, 3] & 0 &  & 1739.92 & 0 & 1 & [68, 61] & & \\ 
 & & 6 & 0.01 & 3 & 1 & 0 & [2, 2] & 0 &  & 0.33 & 1 & 0 & [70, 59] & & \\ 
 & & 7 & 0.01 & 2 & 1 & 0 & [2, 1] & 0 &  & 0.28 & 1 & 0 & [82, 67] & & \\ 
 & & 8 & 0.01 & 4 & 0 & 1 & [3, 2] & 0 &  & 5.51 & 0 & 1 & [53, 36] & & \\ 
 & & 9 & 0.02 & 5 & 0 & 1 & [3, 3] & 0 &  & 13.89 & 0 & 1 & [45, 65] & & \\  \cline{4-16} 
		&   &      & \multicolumn{5}{c:}{avg.}  & \multicolumn{2}{c|}{number of} &\multicolumn{4}{c:}{avg.}  & \multicolumn{2}{c}{number of}    \\ 
		&       &     & time    & iter      & $\vert S^1 \vert$     & $\vert S^2 \vert$ &  \multicolumn{1}{c:}{} &  pNE   & mNE   & time    & $\vert S^1 \vert$     & $\vert S^2 \vert$ & \multicolumn{1}{c:}{}  & pNE   & mNE  \\   \cline{4-16} 
 & & & 0.01 & 3.40 & 2.40 & 2.00 &  & 5 & 5 & 177.72 & 60.20 & 54.30 &  & 5 & 5 \\ 
		\cline{2-16}
		& $m$ & INS    & time      & iter     &  pNE & mNE   & $ \prod_{p=1}^m \vert \mathbb{S}^p \vert $    & numb. back      & & time & pNE & mNE &  $ \prod_{p=1}^m \vert \mathbb{S}^p \vert $ \\ 	
 & 3 & 0 & 0.02 & 4 & 1 & 0 & [2, 2, 2] & 0 &  & 2.79 & 1 & 0 & [122, 80, 118] & & \\
 & & 1 & 0.01 & 2 & 1 & 0 & [2, 1, 1] & 0 &  & 0.38 & 1 & 0 & [33, 20, 111] & & \\ 
 & & 2 & 0.02 & 4 & 0 & 1 & [3, 1, 2] & 0 &  & 6.36 & 0 & 1 & [64, 7, 20] & & \\ 
 & & 3 & 0.02 & 4 & 1 & 0 & [2, 2, 2] & 0 &  & 1.79 & 1 & 0 & [30, 52, 74] & & \\ 
 & & 4 & 0.05 & 8 & 0 & 1 & [3, 3, 4] & 0 &  & 6.90 & 0 & 1 & [71, 61, 56] & & \\ 
 & & 5 & 0.03 & 5 & 0 & 1 & [2, 3, 2] & 0 &  & 1.46 & 0 & 1 & [59, 66, 74] & & \\ 
 & & 6 & 0.02 & 4 & 1 & 0 & [2, 2, 2] & 0 &  & 1.22 & 1 & 0 & [70, 64, 66] & & \\ 
 & & 7 & 0.01 & 3 & 1 & 0 & [2, 2, 1] & 0 &  & 2.77 & 1 & 0 & [83, 91, 85] & & \\ 
 & & 8 & 0.06 & 7 & 0 & 1 & [4, 2, 3] & 0 &  & tl & 0  &  0 [75, 88, 82] & \\ 
 & & 9 & 0.03 & 6 & 0 & 1 & [3, 2, 3] & 0 &  & 576.75 & 0 & 1 & [106, 73, 91] & & \\ \cline{4-16} 
		&   &      & \multicolumn{5}{c:}{avg.}  & \multicolumn{2}{c|}{number of} &\multicolumn{4}{c:}{avg.}  & \multicolumn{2}{c}{number of}    \\ 
		&      &     & time    & iter      & $\vert S^1 \vert$     & $\vert S^2 \vert$ &  \multicolumn{1}{c:}{$\vert S^3 \vert$}   &  pNE   & mNE   & time    & $\vert S^1 \vert$     & $\vert S^2 \vert$ &  \multicolumn{1}{c:}{$\vert S^3 \vert$}  & pNE   & mNE  \\   \cline{4-16} 
 & & & 0.03 & 4.70 & 2.50 & 2.00 & 2.20 & 5 & 5 & 66.71 & 70.89 & 57.11 & 77.22 & 5 & 4 
	\end{tabular}
	\caption{Computational results for the {\modSGM} and {\PNS} to the knapsack game with $n=5,7$.}
	\label{table_ISM_Vs_PNS}
\end{table}

\begin{table}[ptbh!]
	\centering
	\tiny
	\tabcolsep=2pt
	\begin{tabular}{llr|crccccc|crccccc}
		&		   &      & \multicolumn{7}{c|}{{\modSGM}}    & \multicolumn{6}{c}{direct {\PNS}}  \\ \hline
		$n$ & $m$ & INS    & time      & iter     &  pNE & mNE   & $ \prod_{p=1}^m \vert \mathbb{S}^p \vert $    & numb. back      & & time & pNE & mNE &  $ \prod_{p=1}^m \vert \mathbb{S}^p \vert $ \\ 	
10 & 2 & 0 & 0.01 & 3 & 1 & 0 & [2, 2] & 0 &  & 57.00 & 1 & 0 & [879, 540] & & \\ 
 & & 1 & 0.02 & 4 & 1 & 0 & [3, 2] & 0 &  & 3.73 & 1 & 0 & [734, 73] & & \\ 
 & & 2 & 0.01 & 2 & 1 & 0 & [2, 1] & 0 &  & 1.29 & 1 & 0 & [587, 69] & & \\ 
 & & 3 & 0.02 & 4 & 0 & 1 & [3, 2] & 0 &  & tl & 0  &  0 & [498, 493] & \\ 
 & & 4 & 0.01 & 2 & 1 & 0 & [1, 2] & 0 &  & 21.88 & 1 & 0 & [529, 453] & & \\ 
 & & 5 & 0.09 & 10 & 0 & 1 & [6, 5] & 0 &  & tl & 0  &  0 & [468, 567] & \\ 
 & & 6 & 0.01 & 3 & 1 & 0 & [2, 2] & 0 &  & 88.56 & 1 & 0 & [555, 501] & & \\ 
 & & 7 & 0.01 & 3 & 1 & 0 & [2, 2] & 0 &  & 12.23 & 1 & 0 & [522, 532] & & \\ 
 & & 8 & 0.01 & 2 & 1 & 0 & [2, 1] & 0 &  & 684.92 & 1 & 0 & [340, 724] & & \\ 
 & & 9 & 0.01 & 2 & 1 & 0 & [2, 1] & 0 &  & 112.77 & 1 & 0 & [454, 544] & & \\    \cline{4-16} 
		&   &      & \multicolumn{5}{c:}{avg.}  & \multicolumn{2}{c|}{number of} &\multicolumn{4}{c:}{avg.}  & \multicolumn{2}{c}{number of}    \\ 
		&      &     & time    & iter      & $\vert S^1 \vert$     & $\vert S^2 \vert$ &  \multicolumn{1}{c:}{}  &  pNE   & mNE   & time       & $\vert S^1 \vert$     & $\vert S^2 \vert$ &  \multicolumn{1}{c:}{}   & pNE   & mNE  \\   \cline{4-16} 
 & & & 0.02 & 3.50 & 2.50 & 2.00 &  & 8 & 2 & 122.80 & 575.00 & 429.50 &  & 8 & 0  \\         
		\cline{2-16}
		& $m$ & INS    & time      & iter     &  pNE & mNE   & $ \prod_{p=1}^m \vert \mathbb{S}^p \vert $    & numb. back      & & time & pNE & mNE &  $ \prod_{p=1}^m \vert \mathbb{S}^p \vert $ \\ 
 & 3 & 0 & 0.06 & 9 & 0 & 1 & [4, 3, 4] & 0 &  & tl & 0  &  0 & [464, 609, 236] & \\ 
 & & 1 & 0.05 & 7 & 0 & 1 & [2, 3, 4] & 0 &  & tl & 0  &  0 & [618, 263, 224] & \\ 
 & & 2 & 0.02 & 4 & 1 & 0 & [2, 2, 2] & 0 &  & 152.44 & 1 & 0 & [756, 200, 457] & & \\ 
 & & 3 & 0.04 & 7 & 1 & 0 & [3, 3, 3] & 0 &  & tl & 0  &  0 & [528, 548, 691] & \\ 
 & & 4 & 0.06 & 8 & 1 & 0 & [3, 4, 3] & 0 &  & 1059.49 & 1 & 0 & [418, 587, 700] & & \\ 
 & & 5 & 0.09 & 9 & 0 & 1 & [4, 4, 3] & 0 &  & tl & 0  &  0 & [509, 491, 540] & \\ 
 & & 6 & 0.04 & 7 & 1 & 0 & [3, 3, 3] & 0 &  & 3042.26 & 1 & 0 & [445, 491, 482] & & \\ 
 & & 7 & 0.03 & 5 & 0 & 1 & [3, 2, 2] & 0 &  & 1589.25 & 0 & 1 & [634, 525, 715] & & \\ 
 & & 8 & 0.03 & 6 & 1 & 0 & [3, 3, 2] & 0 &  & 1408.38 & 1 & 0 & [500, 574, 436] & & \\ 
 & & 9 & 0.04 & 7 & 0 & 1 & [3, 3, 3] & 0 &  & tl & 0  &  0 & [559, 466, 303] & \\   \cline{4-16} 
		&  &      & \multicolumn{5}{c:}{avg.}  & \multicolumn{2}{c|}{number of} &\multicolumn{4}{c:}{avg.}  & \multicolumn{2}{c}{number of}    \\ 
		&     &     & time    & iter      & $\vert S^1 \vert$     & $\vert S^2 \vert$ &   \multicolumn{1}{c:}{$\vert S^3 \vert$}   &  pNE   & mNE   & time    & $\vert S^1 \vert$     & $\vert S^2 \vert$ &   \multicolumn{1}{c:}{$\vert S^3 \vert$}  & pNE   & mNE  \\   \cline{4-16} 
 & & & 0.04 & 6.90 & 3.00 & 3.00 & 2.90 & 5 & 5 &1450.36 & 550.60 & 475.40 & 558.00 & 4 & 1 
	\end{tabular}
	\caption{Computational results for the {\modSGM} and {\PNS} to the knapsack game with $n=10$.}
	\label{table_ISM_Vs_PNS_2}
\end{table}

\section{Conclusions and further directions}\label{sec:conclusion}

We showed that the problem of equilibria existence for {\IPG}s is $\Sigma_2^p$-complete and, even if an equilibrium exists, its computation is at least PPAD-hard.  This is not surprising, since verifying if a profile of strategies is an equilibrium for an {\IPG} implies solving each player's best response optimization, which can be an NP-complete problem. Thus, the goal of this paper was to also contribute with an algorithmic approach for the computation of equilibria with a reasonable running time in practice.

To the best of our knowledge, the novel framework proposed and evaluated in this paper is the first  addressing the computation of equilibria for non-cooperative simultaneous games in which the players' goal in the game is modelled through a mixed integer program. These games are of practical interest given that in many real-world applications, players' decision problem are combinatorial and interact.

In this work, we combined algorithms (and tools) from mathematical programming and game theory to devise a novel method to determine Nash equilibria. Our basic method, {\SGM}, iteratively determines equilibria of normal-form games which progressively improve the approximation to the original {\IPG}.  In order to make the algorithm faster in practice, special features were added. For this purpose, we devised the modified {\SGM}. We also discussed our methodology extension to correlated equilibria. Our algorithms were experimentally validated through three particular games:  the knapsack, the kidney exchange and the competitive lot-sizing games. For the knapsack game, our methods provide equilibria to medium size instances within the time frame of one hour. The results show that this is a hard game which is likely to have strictly mixed equilibria (\ie, no pure strategy is played with probability 1). The hardness comes from the conflicts that projects selected by different players have in their payoffs: for some projects $i$ a player $p$ can benefit from player $k$ simultaneous investment, while player $k$ is penalized. Surprisingly, {\CE} which are much faster to determine can help to find {\NE}. For the kidney exchange and the competitive lot-sizing game, our approaches could efficiently determine a pure equilibrium. However, it remains as a challenge the understanding on how our method initialization can result in different equilibria. Characterizing the set of equilibria is crucial to understand the game properties, specially, in terms of social welfare: if all equilibria are far from the social optimum (the so called \emph{price of stability}), policy makers should consider the re-design of the game rules.

Note that for the instances solved by our algorithms, there is an exponential (knapsack and kidney exchange games) or infinite ({competitive lot-sizing game) number of pure profiles of strategies. However, by observing the computational results, a small number of explicitly enumerated pure strategies was enough to find an equilibrium. For this reason, the explicitly enumerated strategies (the sampled games) are usually ``far'' from describing (even partially) a player $p$'s polytope $\conv(X^p)$ and thus, at this point, this information is not used in {\PNS} to speed up its computations. For instance, Theorem~\ref{lemma_finitelysupported} and Lemma~\ref{lemma:support_size_knapsack_game} did not reduce the number of supports enumerated by {\PNS} in each iteration of {\modSGM}. Due to the fact that it is in {\PNS} that our algorithms struggle the most, its improvement is the first aspect to further study; we believe that exploring the possibility of extracting information from each player's polytope of feasible strategies will be the crucial ingredient for this.

	There is a set of natural questions that this work opens. Can we adapt {\modSGM} to compute all equilibria (or characterize the set of equilibria)? Can we compute an equilibrium satisfying a specific property (e.g., computing the equilibrium that maximizes the social welfare, computing a non-dominated equilibrium)? Will in practice players play equilibria that are ``hard" to find? If a game has multiple equilibria, how to decide among them?
	From a mathematical point of view, the first two questions embody a big challenge, since there seems to be hard to extract problem structure to the general {\IPG} class of games. The two last questions raise another one, which is the possibility of considering different solution concepts to {\IPG}s.

\section*{Acknowledgements}

The authors wish to thank  Sriram Sankaranarayanan for the multiple questions on a previous version of the paper that lead to  significant improvement of this work exposition.


\bibliographystyle{spbasic} 
\bibliography{ComputingNEReferences}

\newpage

\begin{appendices}

\section{Illustration of backtracking step} \label{app:example_backtracking}
\begin{example}
	Consider the two-player knapsack game described by the following optimization problems
	\begin{subequations}
		\begin{alignat*}{4}
			Player A: \ \ &   \max_{x^A \in \lbrace0,1\rbrace^n}  && 15x^A_1+8x^A_2-3x^A_3+43x^A_4-15x^ A_5+39x^ A_1x^B_1-90x^A_2x^B_2\\
			&                                       & &     +11x^A_3x^B_3-84x^A_4x^B_4-43x^A_5x^B_5\\[0.4ex]
			&\mbox{subject to~~~}  && 70x^A_1-79x^A_2-8x^A_3-62x^A_4-96x^ A_5 \leq -140
		\end{alignat*}
	\end{subequations}
	\begin{subequations}
		\begin{alignat*}{4}
			Player B: \ \ &   \max_{x^B \in \lbrace0,1\rbrace^n}  && 24x^B_1+13x^B_2+44x^B_3-x^A_4-45x^ B_5-73x^ A_1x^B_1-58x^A_2x^B_2\\
			&                                         & & -78x^A_3x^B_3-49x^A_4x^B_4+72x^A_5x^B_5\\[0.4ex]
			&\mbox{subject to~~~}  && 69x^B_1+25x^B_2-39x^B_3-74x^B_4+70x^ B_5 \leq 40.8
		\end{alignat*}
	\end{subequations}
	
	In what follows, we go through each sampled game generated by  {\modSGM}. Figure \ref{Fig:example_backtracking} displays the sampled games using a bimatrix-form representation.
	
	\paragraph{Sampled game 0. } The {\NE} is $\sigma_0=(1;1)$. However, in the original game, player $A$ has incentive to deviate to $x(1)=(0,0,1,1,1)$. 
	
	\paragraph{Sampled game 1. } The {\NE}  is $\sigma_1=(0,1;1)$. However, in the original game, player $B$ has incentive to deviate to $x(2)=(0,1,0,0,0)$. 
	
	\paragraph{Sampled game 2. } The {\NE} is $\sigma_2=(0,1;0,1)$. However, player $A$ has incentive to deviate to $x(3)=(0,0,0,1,1)$.
	
	\paragraph{Sampled game 3. }  The  {\NE} is mixed with $\supp( \sigma_3^A)= \lbrace  (0,0,1,1,1),(0,0,0,1,1) \rbrace$ and $\supp( \sigma_3^B)= \lbrace (1,1,1,1,0) , (0,1,0,0,0) \rbrace$, $\sigma_3=(0,\frac{3}{13},\frac{10}{13};\frac{3}{11},\frac{8}{11})$. However, in the original game, player $B$ has incentive to deviate to $x(4)=(0,0,1,0,1)$. 
	
	\paragraph{Sampled game 4. }  The  {\NE} is $\sigma_4=(1,0,0;0,0,1)$. However, in the original game, player $A$ has incentive to deviate to $x(5)=(0,1,1,1,0)$. 
	
	\paragraph{Sampled game 5. }  There is no  {\NE} with $x(5)=(0,1,1,1,0)$ in the support of player $A$.  Thus, initialize backtracking. 
	
	\begin{figure}[h]\center \tiny
		\begin{tabular}{cc|ccc}
			\multicolumn{5}{c}{  Sampled game 0 } \\[0.2cm]
			& & \multicolumn{1}{c}{ Player B }    \\ 
			& & (1,1,1,1,0)\\ 
			\cline{2-3}
			Player A & (1,1,0,1,1)  & (-84,-100) \\ \\
		\end{tabular}
		\begin{tabular}{cc|ccc}
			\multicolumn{5}{c}{ Sampled game 1  }\\[0.2cm]
			& & \multicolumn{1}{c}{ Player B}     \\ 
			& & (1,1,1,1,0) \\ \cline{2-3}
			\multirow{ 2}{*}{Player A}  & (1,1,0,1,1) &  (-84,-100) \\ 
			&  \cellcolor{Cyan} (0,0,1,1,1)& (-48,-47)    \\ \\
		\end{tabular} 
		\begin{tabular}{cc|ccc}
			\multicolumn{5}{c}{Sampled game 2 } \\[0.2cm]
			& & \multicolumn{2}{c}{Player B }    \\ 
			& & (1,1,1,1,0) & \cellcolor{Cyan}   (0,1,0,0,0)\\ \cline{2-4}
			\multirow{ 2}{*}{Player A} & (1,1,0,1,1) & (-84,-100) & (-39,-45) \\ 
			&  (0,0,1,1,1) &(-48,-47) &(25,13)    \\ \\
		\end{tabular} 
		\begin{tabular}{cc|ccc}
			\multicolumn{5}{c}{Sampled game 3 } \\[0.2cm]
			& & \multicolumn{2}{c}{ Player B }    \\ 
			& &  (1,1,1,1,0) & (0,1,0,0,0) \\ \cline{2-4}
			&(1,1,0,1,1) & (-84,-100) & (-39,-45)  \\ 
			Player A  &  (0,0,1,1,1) &(-48,-47) &(25,13)        \\  
			&  \cellcolor{Cyan} (0,0,0,1,1) & (-56,31) & (28,13)\\ \\
		\end{tabular} 
		\begin{tabular}{cc|ccc}
			\multicolumn{5}{c}{ Sampled game 4 }\\[0.2cm]
			& & \multicolumn{3}{c}{ Player B}     \\ 
			& &  (1,1,1,1,0) & (0,1,0,0,0) & \cellcolor{Cyan}  (0,0,1,0,1) \\ \cline{2-5}
			&(1,1,0,1,1) & (-84,-100) & (-39,-45)  & (8,71)\\ 
			Player A  &  (0,0,1,1,1) &(-48,-47) &(25,13)  & (-7,-7)      \\  
			& (0,0,0,1,1) & (-56,31) & (28,13) & (-15,71)\\ \\
		\end{tabular} 
		\begin{tabular}{cc|ccc}
			\multicolumn{5}{c}{ Sampled game 5  }\\[0.2cm]
			& & \multicolumn{3}{c}{ Player B}     \\ 
			& &  (1,1,1,1,0) & (0,1,0,0,0) & (0,0,1,0,1) \\ \cline{2-5}
			&(1,1,0,1,1) & (-84,-100) & (-39,-45)  & (8,71)\\ 
			Player A  &   (0,0,1,1,1) &(-48,-47) &(25,13)  & (-7,-7)      \\  
			& (0,0,0,1,1) & (-56,31) & (28,13) & (-15,71)\\
			&  \cellcolor{Cyan} (0,1,1,1,0) & (-115,-105) & (-42,22) & (59,2)\\
			
		\end{tabular} 
		\begin{tabular}{cc|ccc}
			\multicolumn{5}{c}{ Revisiting Sampled game 4  }\\[0.2cm]
			& & \multicolumn{2}{c}{ Player B}     \\ 
			& &  (1,1,1,1,0) & (0,1,0,0,0) & \cellcolor{Cyan} (0,0,1,0,1) \\ \cline{2-5}
			&(1,1,0,1,1) & (-84,-100) & (-39,-45)  & (8,71)\\ 
			Player A  &   (0,0,1,1,1) &(-48,-47) &(25,13)  & (-7,-7)      \\  
			& (0,0,0,1,1) & (-56,31) & (28,13) & (-15,71)\\
			& \cellcolor{Gray_0} (0,1,1,1,0) & (-115,-105) & (-42,22) & (59,2)
		\end{tabular} 
		\caption{Modified {\SGM} applied to Example \ref{example_backtracking}. The strategies in \colorbox{Cyan}{cyan} must be in the equilibrium support, while the strategies in \colorbox{Gray_0}{gray} are not considered in the support enumeration.}
		\label{Fig:example_backtracking}
	\end{figure}
	
	\paragraph{Revisiting sampled game 4. } Keep the best reaction strategy $x^A=(0,1,1,1,0)$ that originated the sampled game 5, but do not consider it in the support enumeration (this strategy only appears in the Feasibility Problem in order to avoid the repetition of equilibria). A {\NE} with $x^B=(0,0,1,0,1)$ in the support is computed: $\sigma_4=(0,\frac{29}{39},\frac{10}{39},0;0,\frac{8}{11},\frac{3}{11})$  with supports $\supp( \sigma_4^A)= \lbrace   (0,0,1,1,1),(0,0,0,1,1) \rbrace$ and $\supp( \sigma_4^B)= \lbrace (0,1,0,0,0) , (0,0,1,0,1) \rbrace$. This {\NE} is a {\NE} of the original game.
	\label{example_backtracking}
\end{example}

\section{Kidney exchange game with $L=3$}\label{app:no_potential}

\citet{Carvalho2016} claimed that when $w_c^p=|\{v \in V^p: v \in c \}|$, \ie, number of pairs in the cycle, 
 \begin{equation}
 \Phi(y^A,y^B) = \sum_{c \in C^A} w^A_c y_c^A+ \sum_{c \in C^B} w^B_c y_c^B + \sum_{c \in I:w_c^A=w_c^B=1} y_c + \frac{3}{2} \sum_{c \in I: w_c^A= 2 \vee w_c^B=2} y_c(y^A,y^B)
 \label{eq:conjecture_potential}
 \end{equation}
where $y(y^A,y^B)$ solves~\eqref{NKEG_IA}, is a (\emph{non-exact}) potential function of their game. This is false as the following example shows:
\begin{example}
Consider the instance of Figure~\ref{fig:Example1}. The green strategy (upper figure) of country A leads it to a payoff of 8 (note that no international exchanges are available), while by unilaterally deviating to the blue strategy (lower figure), country A gets a payoff of 9 (note that in blue it is implicit the independent agent optimal international selection $y$). In the upper figure the value of function~\eqref{eq:conjecture_potential} is 8, while  in the lower figure the value is 7.5. This shows that function~\eqref{eq:conjecture_potential} is not potential.

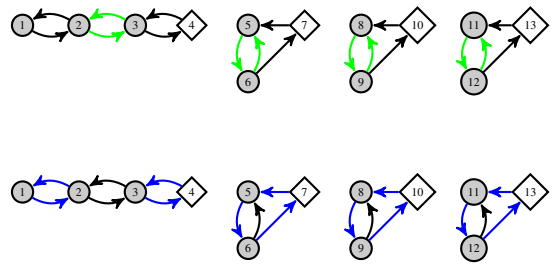
\begin{figure}[h!]
  \centering
\begin{tikzpicture}[-,>=stealth',shorten >=0.3pt,auto,node distance=1cm,
  thick,countryA node/.style={circle,fill=black!20,draw},countryB node/.style={diamond,draw}, scale=.75, transform shape]]
\tiny
  \tikzstyle{matched} = [draw,line width=3pt,-]

  \node[countryA node] (1) {1};
  \node[countryA node] (2) [right of=1] {2};
  \node[countryA node] (3) [right of=2] {3};
  \node[countryB node] (4) [right of=3] {4};
   \node[countryA node] (5) [right of=4] {5};
   \node[countryA node] (6) [below of=5] {6};
   \node[countryB node] (7) [right of=5] {7};
    \node[countryA node] (8) [right of=7] {8};
   \node[countryA node] (9) [below of=8] {9};
   \node[countryB node] (10) [right of=8] {10};
      \node[countryA node] (11) [right of=10] {11};
   \node[countryA node] (12) [below of=11] {12};
   \node[countryB node] (13) [right of=11] {13};
 
  \path[->,every node/.style={font=\sffamily\small}]
    ( 1) edge[bend right] node {} (2)
     ( 2) edge[bend right] node {} (1)
    ( 2) edge[bend right,green] node {} (3)
    ( 3) edge[bend right,green] node {} (2)
   ( 3) edge[bend right] node {} (4)
   (4) edge[bend right] node {} (3)
   (5) edge[bend right,green] node {} (6)
   (6) edge[bend right,green] node {} (5)
   (7) edge node {} (5)
   (6) edge node {} (7)
   (8) edge[bend right,green] node {} (9)
   (9) edge[bend right,green] node {} (8)
   (10) edge node {} (8)
   (9) edge node {} (10)
     (11) edge[bend right,green] node {} (12)
   (12) edge[bend right,green] node {} (11)
   (13) edge node {} (11)
   (12) edge node {} (13);      
\end{tikzpicture}

\vspace{1cm}

\begin{tikzpicture}[-,>=stealth',shorten >=0.3pt,auto,node distance=1cm,
  thick,countryA node/.style={circle,fill=black!20,draw},countryB node/.style={diamond,draw}, scale=.75, transform shape]]
\tiny
  \tikzstyle{matched} = [draw,line width=3pt,-]

  \node[countryA node] (1) {1};
  \node[countryA node] (2) [right of=1] {2};
  \node[countryA node] (3) [right of=2] {3};
  \node[countryB node] (4) [right of=3] {4};
   \node[countryA node] (5) [right of=4] {5};
   \node[countryA node] (6) [below of=5] {6};
   \node[countryB node] (7) [right of=5] {7};
    \node[countryA node] (8) [right of=7] {8};
   \node[countryA node] (9) [below of=8] {9};
   \node[countryB node] (10) [right of=8] {10};
      \node[countryA node] (11) [right of=10] {11};
   \node[countryA node] (12) [below of=11] {12};
   \node[countryB node] (13) [right of=11] {13};
 
  \path[->,every node/.style={font=\sffamily\small}]
    ( 1) edge[bend right,blue] node {} (2)
     ( 2) edge[bend right,blue] node {} (1)
    ( 2) edge[bend right] node {} (3)
    ( 3) edge[bend right] node {} (2)
   ( 3) edge[bend right,blue] node {} (4)
   (4) edge[bend right,blue] node {} (3)
   (5) edge[bend right,blue] node {} (6)
   (6) edge[bend right] node {} (5)
   (7) edge[blue] node {} (5)
   (6) edge[blue] node {} (7)
   (8) edge[bend right,blue] node {} (9)
   (9) edge[bend right] node {} (8)
   (10) edge[blue] node {} (8)
   (9) edge[blue] node {} (10)
     (11) edge[bend right,blue] node {} (12)
   (12) edge[bend right] node {} (11)
   (13) edge[blue] node {} (11)
   (12) edge[blue] node {} (13);      
\end{tikzpicture}
\caption{A kidney exchange compatability graph instance with cycles length bounded by 3, where the circle vertices belong to country A and the diamond vertices belong to country B.}
\label{fig:Example1}
\end{figure}

\end{example}

\end{appendices}

\end{document}